\documentclass[final, 12pt,oneside]{class_diss}

\usepackage{amsmath} 
\usepackage{graphicx}

\usepackage[all,cmtip]{xy}

\usepackage{amsxtra}
\usepackage{amssymb}
\usepackage{amsthm}
\usepackage{latexsym}
\usepackage{appendix}
\theoremstyle{plain}
\newtheorem{thm}{Theorem}
\newtheorem{conj}{Conjecture}
\newtheorem{defn}{Definition}


\topmargin      = -0.56in
\textheight     =  8.60in
\textwidth      =  6.46in
\oddsidemargin  =  0.02in

\usepackage{makeidx} \makeindex

\begin{document}

\thispagestyle{empty}

\begin{center}

   \vspace{1cm}

   \Large OPTICAL BLACK HOLES AND SOLITONS \\

   \vspace{0.5cm}

   by\\

   \vspace{0.5cm}

   Shawn Michael Westmoreland\\

   \vspace{0.5cm}

   B.S., University of Texas, Austin, 2001\\
   M.A., University of Texas, Austin, 2004\\

   \vspace{0.65cm}
   \rule{2in}{0.5pt}\\
   \vspace{0.85cm}

   {\Large A DISSERTATION}\\

   \vspace{0.5cm}
   submitted in partial fulfillment of the\\
   requirements for the degree\\

   \vspace{0.5cm}

   {\Large DOCTOR OF PHILOSOPHY}\\
   \vspace{0.5cm}

   Department of Mathematics\\
   College of Arts and Sciences\\

   \vspace{0.5cm}
   {\Large KANSAS STATE UNIVERSITY}\\
   Manhattan, Kansas\\

   2010\\
   \vspace{1cm}

\end{center}

\begin{flushleft}
   \hspace{10cm}Approved by:\\
   \vspace{ 1cm}
   \hspace{10cm}Major Professor\\

   \hspace{10cm}Louis Crane\\
\end{flushleft}
\newpage
\addcontentsline{toc}{chapter}{Abstract}
\chapter*{Abstract}
We exhibit  a static, cylindrically symmetric, exact solution to the Euler-Heisenberg field equations (EHFE) and prove that its effective geometry contains (optical) black holes.  It is conjectured  that there are also soliton solutions to the EHFE which contain  black hole  geometries.
\tableofcontents
\listoffigures
\listoftables

\newpage
\addcontentsline{toc}{chapter}{Acknowledgements}
\chapter*{Acknowledgements} 

I thank my advisor Louis Crane for his guidance and for many exploratory conversations which, in particular, lead him to suggest the present topic as a thesis. Working with him has  taught me valuable lessons about  perseverance and creative thinking in research.

I also wish to thank the following people for their input: Lei Cao, Renaud Gauthier,  IkJae Lee, Dany Majard, Charles Moore, Virginia Naibo, James Neill, Larry Weaver, and  David Yetter. 

Special loving thanks to my  mother. 

\newpage
\addcontentsline{toc}{chapter}{Dedication}
\chapter*{Dedication} 
I  dedicate this thesis in loving memory to my father, Michael Eugene Westmoreland; my brother, John Kamin Stewart; and friends Josh Watson, Chad Hunter, Ricky Valenzuela.

\chapter{Introduction and overview}\label{chapter0}

The underlying motivation of the present thesis is the idea of a mathematical  connection  between solitons and black holes. Connections of this sort have been considered elsewhere, though not in the same  context as the present thesis   \cite{Gibbons1986}. We will study a particular system of   nonlinear PDEs, arising from the Euler-Heisenberg field equations (EHFE), which we conjecture has solutions  uniting solitonic and black hole-like properties. The EHFE derive from the Euler-Heisenberg effective Lagrangian for quantum electrodynamics;  the solutions that we are interested in are optical black holes.   

\section{Solitons}

The theory of solitons arises from the study of wave phenomena in nonlinear PDEs. \index{solitons} A soliton is a solitary traveling wave that maintains its shape through time.  Of particular importance for us is the nonlinear Schr\"odinger equation (NSE).\index{nonlinear Schr\"odinger equation}  In 1-dimension, for a complex wave amplitude $\psi(t,x)$ with coupling constant $\gamma$,   the NSE has the canonical form (e.g., Sulem and Sulem   \cite{Sulem} pp. 5, 20, Drazin and Johnson      \cite{Drazin} pp. 34 - 35):
 \begin{eqnarray}\label{NSE1}
i\partial_t \psi  +  \partial_x^2 \psi + \gamma |\psi|^2 \psi =0. 
 \end{eqnarray} 
 
 \section{General relativity}

A \emph{spacetime} is a 4-dimensional pseudo-Riemannian manifold.\index{spacetime} 
The key equation in general relativity\index{general relativity} 
 is the Einstein field  equation (EFE),\index{Einstein field equation} which unfolds to give a system of  nonlinear PDEs. Solving these PDEs allows one to express the  metric coefficients $g_{\mu\nu}$ of spacetime in  terms of the stress-energy tensor $T_{\mu\nu}$.  With the  cosmological constant $\Lambda$ included, the EFE reads  (Hawking and Ellis   \cite{HawkingEllis} p. 74):\index{field equations! Einstein}
\begin{eqnarray}\label{EFE1}
G_{\mu\nu}  + \Lambda g_{\mu\nu} =  8\pi\mathcal{G}  T_{\mu\nu},
\end{eqnarray}  where $\mathcal{G}$ is Newton's gravitational constant and the speed of light is   set equal to 1. The Einstein tensor $G_{\mu\nu}$ can be expressed in terms of  the Ricci tensor $R_{\mu\nu}$, the scalar $R=R_{\mu}^{\phantom{\mu}\mu}$, and the metric $g_{\mu\nu}$:
\begin{eqnarray}
G_{\mu\nu} = R_{\mu\nu} - \frac{R}{2}g_{\mu\nu}.
\end{eqnarray} 

Among the  known exact solutions to (\ref{EFE1})  are those describing  black holes. 
\begin{defn}
A black hole is a region of spacetime where future-directed outgoing null geodesics cannot escape.\index{black holes}
\end{defn}
\begin{defn}
A white hole is a region of spacetime where future-directed ingoing null geodesics  cannot enter.\index{white holes}
\end{defn} 
\begin{defn}The boundary of a black (or white) hole is called an event horizon.
\end{defn}\index{event horizon}We use these definitions even outside the context of general relativity. For us, any  pseudo-Riemannian manifold  will be called a spacetime whether it satisfies the EFE or not, and one can ask whether or not a given spacetime has black holes.

\section{The Euler-Heisenberg field equations}
Quantum electrodynamics\index{quantum electrodynamics} can be approximated as an effective field theory governed by the Euler-Heisenberg Lagrangian (cf. Euler and Heisenberg   \cite{EH}, Schwinger    \cite{Schwinger}, Novello   \cite{ABH} p. 292, Boer and van Holten   \cite{Boer}): \index{Lagrangian! Euler-Heisenberg} \index{Euler-Heisenberg! Lagrangian}
\begin{eqnarray}\label{EHL1}
L&=& -\frac{1}{4} F + \frac{\alpha^2}{90}\left(F^2 + \frac{7}{4}G^2\right),
\end{eqnarray} where $\alpha$ is the fine structure constant,  and $F$ and $G$ are the Poincar\'e invariants of the electromagnetic field.  The speed of light, the reduced Planck  constant, the mass of the electron, and the permittivity of free space are here set equal to 1. Applying the principle of least action to (\ref{EHL1})  leads to the Euler-Heisenberg field equation (EHFE):\index{Euler-Heisenberg! field equations}\index{field equations! Euler-Heisenberg}
\begin{eqnarray}\label{EHFE1}
\nabla_\mu F^{\mu\nu} &=& \frac{\alpha^2}{45}\nabla_\mu\left(4F F^{\mu\nu} + 7G{F^*}^{\mu\nu}\right).
\end{eqnarray} Here, $F^{\mu\nu}$ is the electromagnetic field tensor, ${F^*}^{\mu\nu}$ is its dual, and $\nabla_\mu$ represents the covariant derivative with respect to the coordinate $x^\mu$, using the connection determined  by the background spacetime metric. Note that (\ref{EHFE1}) is, in general, yet another  system of nonlinear PDEs.

According to the Euler-Heisenberg effective field theory, the  vacuum behaves like a nonlinear physical  medium.  Light rays passing through electromagnetic fields are bent as if they were passing  through water, thus affecting the apparent geometry of objects. This motivates the idea that the effective field theory can be interpreted geometrically. Indeed, Novello   \cite{ABH} has shown in a seminal work that light rays (small disturbances traveling through the field)  follow   null geodesics with respect to a spacetime metric $\widetilde{g}_{\mu\nu}$  distinct from the background metric $g_{\mu\nu}$. This is called the effective metric. The  inverse (or cometric) $\widetilde{g}^{\mu\nu}$ can be expressed in terms of  the stress-energy tensor of the electromagnetic field:
\begin{eqnarray}\label{Novellogeometricequation}
\widetilde{g}^{\mu\nu}&=& \mathcal{A} g^{\mu\nu} + \mathcal{B}T^{\mu\nu},
\end{eqnarray} where $\mathcal{A}$ and $\mathcal{B}$ are special functions of the Poincar\'e invariants $F$ and $G$ (see Equation (\ref{effcometricSET})). This geometrical interpretation of  effective field theory demonstrates an analogy between nonlinear optics and general relativity, with  Equation (\ref{Novellogeometricequation}) playing the role of  the  Einstein field equation  (\ref{EFE1})   \cite{ABH}.  Let us note two subtleties. (1) The effective metric is uniquely determined only up to a conformal factor. (2) Since light in the nonlinear vacuum experiences birefringence, a given electromagnetic field actually carries two distinct effective metrics; one for each polarization state.

\section{The idea}

Our main proposal is that the EHFE (\ref{EHFE1}) has soliton solutions with a  corresponding effective geometry containing a black hole.  In this sense, the EHFE would be  somewhat in between the NSE (\ref{NSE1}) and the EFE (\ref{EFE1}).  We have a theorem and a conjecture:
\begin{thm}\label{mainthm} There is an exact static solution to the Euler-Heisenberg field equations where the effective geometries of each polarization state have black holes.
\end{thm}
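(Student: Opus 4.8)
The plan is to produce the solution by explicit construction and then verify the black-hole property directly from the effective metric. First I would impose the static, cylindrically symmetric ansatz announced in the abstract: working in cylindrical coordinates $(t,r,\phi,z)$ on a flat background, I would restrict $F^{\mu\nu}$ to depend on $r$ alone and to carry only the components compatible with that symmetry. The cleanest choice is a purely radial electric field, which forces the second Poincar\'e invariant $G$ to vanish. With $G=0$ the right-hand side of the EHFE (\ref{EHFE1}) collapses to the single term $\tfrac{4\alpha^2}{45}\nabla_\mu(F F^{\mu\nu})$, so that (\ref{EHFE1}) becomes the conservation law $\nabla_\mu\bigl[(1-\tfrac{4\alpha^2}{45}F)F^{\mu\nu}\bigr]=0$. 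This is a nonlinear but exactly integrable analogue of Gauss's law; cylindrical symmetry reduces it to a single first-order ODE in $r$, which I would integrate in closed form to obtain the exact field profile.

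With the field in hand, the next step is purely computational: evaluate $F$ and $G$ on the solution, assemble the electromagnetic stress-energy tensor $T^{\mu\nu}$, and feed these into Novello's cometric formula (\ref{Novellogeometricequation}), using the explicit coefficients $\mathcal{A}$ and $\mathcal{B}$ for each polarization (Equation (\ref{effcometricSET})). Because there are two polarization states, this produces two distinct cometrics $\widetilde{g}^{\mu\nu}_{\pm}$, which I would invert to obtain the effective line elements $\widetilde{g}_{\mu\nu}^{\pm}$.

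The heart of the argument is the causal analysis. For each effective metric I would locate a radius $r_{\pm}$ where the coefficient $\widetilde{g}_{tt}^{\pm}$ vanishes, marking the locus where the radial light cones close up. Writing down the equation for future-directed outgoing radial null geodesics, I would show that their coordinate speed $dr/dt$ tends to zero as $r\to r_{\pm}$ from outside, so that such geodesics cannot cross $r_{\pm}$ in the outward direction; this is precisely the black-hole condition of the definition above. Since the effective metric is fixed only up to a conformal factor, I would check that this trapping is conformally invariant — conformal rescalings leave the null cones, and hence the horizon, unchanged — so the conclusion is independent of the representative chosen.

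The main obstacle I anticipate is not the integration of the field equation but the simultaneous control of both polarizations: the coefficients $\mathcal{A}$ and $\mathcal{B}$ differ between the two states, so a single field profile must be arranged so that each of the two effective metrics independently develops a horizon. Ensuring that the exact solution lands in the parameter regime where both $\widetilde{g}_{tt}^{+}$ and $\widetilde{g}_{tt}^{-}$ vanish at finite radius — and that the vanishing genuinely traps outgoing null rays rather than merely signalling a coordinate artifact — is the delicate step, and is where I expect most of the real work of the proof to lie.
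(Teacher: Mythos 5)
Your opening move is where the argument breaks down. For a purely radial electrostatic field, $F_{tr}=E_r(r)$ with all other components zero, the nonlinear Gauss law you derive is indeed exactly integrable, but the resulting effective geometries contain no black hole for either polarization. The reason is structural: the only nonvanishing substress components are $F^{t}_{\phantom{t}\lambda}F^{\lambda t}=E_r^2$ and $F^{r}_{\phantom{r}\lambda}F^{\lambda r}=-E_r^2$, while $F^{t}_{\phantom{t}\lambda}F^{\lambda r}=0$; that is, in the $t$--$r$ block the substress is proportional to the background cometric. Hence the effective cometric (\ref{cometric effective}) stays diagonal with $\widetilde{g}^{tt}=1+\Lambda_\pm E_r^2=-\widetilde{g}^{rr}$, so along any radial null curve $(dr/dt)^2=-\widetilde{g}_{tt}/\widetilde{g}_{rr}=1$ identically: the radial light cones coincide with the background cones at every radius where the geometry is defined, outgoing rays escape at coordinate speed $1$, and the horizon condition $dr/dt|_{\mathrm{out}}=0$ can never be met (in particular your proposed criterion, $\widetilde{g}_{tt}\to 0$ with $dr/dt\to 0$, is unreachable because the ratio $\widetilde{g}_{tt}/\widetilde{g}_{rr}$ is constant). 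Physically: a static field with vanishing Poynting vector does not tilt the effective light cones (cf. Conjecture \ref{conjecture1}), and it is the tilt, encoded in an off-diagonal component $\widetilde{g}_{tr}$, that produces trapping. The same failure afflicts the dual choice of a purely azimuthal magnetostatic field, which is the paper's solution specialized to $E=0$ and is shown in Appendix \ref{Appendix:Key2} to trap nothing.

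The paper's construction supplies exactly the missing ingredient by crossing two static fields: a constant electric field $E$ along the $z$-axis together with an azimuthal magnetic field $B(r)$ obtained by integrating the reduced field equation in closed (implicit) form, $B+\frac{8\alpha^2}{45}\left(E^2B-B^3\right)=k/r$ (Equation (\ref{bIMP})). This configuration has a radially directed Poynting vector, hence $\widetilde{g}_{tr}=-\Lambda EB\neq0$, and the outgoing radial null velocity (\ref{4.79}) genuinely changes sign: trapping occurs at $r=r_s$ for the $(+)$ polarization when $E^2\geq 45/(34\alpha^2)$, and at an explicitly computable $r_c>r_s$ for the $(-)$ polarization when $45/(34\alpha^2)<E^2<9/(4\alpha^2)$, so both geometries have horizons for $E$ in the common window. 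Your closing worry--that the delicate step is making both polarizations develop horizons simultaneously--is well placed; in the paper it is resolved precisely by this window of $E$. But note two further points your sketch omits and the paper must handle: the two horizons sit at different radii, and one must also verify that nonradial null curves are trapped, which the paper does by checking $\widetilde{g}_{rr}<0$ and $\widetilde{g}_{\theta\theta}<0$ so that no null curve can outrun the radial ones.
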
 
\begin{conj}\label{mainconj}
There is  an exact solution to the Euler-Heisenberg field equations which is a soliton and whose effective geometries have  black holes. 
\end{conj}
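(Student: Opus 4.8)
\section*{Proof proposal for Conjecture \ref{mainconj}}

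The plan is to exploit the analogy, already implicit in (\ref{Novellogeometricequation}), between an intense optical pulse and a moving nonlinear medium, and to realize the black hole as an \emph{analog} (comoving) horizon carried by a traveling-wave soliton. First I would pass to a comoving coordinate $u = z - vt$ and seek a field tensor $F^{\mu\nu}=F^{\mu\nu}(x,y,u)$ that is localized in the transverse plane $(x,y)$ and rigid in $u$; such a configuration is, by construction, a solitary wave propagating with velocity $v$ and maintaining its shape, so the soliton requirement of Conjecture \ref{mainconj} is encoded in the ansatz from the outset. Substituting this ansatz into the EHFE (\ref{EHFE1}) collapses the system to a nonlinear elliptic problem in $(x,y,u)$ for the profile. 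The aim of the first major step is to show that, in the slowly-varying-envelope (weak-field, paraxial) regime, this reduced problem is asymptotically governed by a focusing nonlinear Schr\"odinger equation of the type (\ref{NSE1})---the cubic nonlinearity being furnished by the $F^2$ and $G^2$ terms of the Euler-Heisenberg Lagrangian---whose localized bright-soliton solutions are classical. This would deliver an approximate, and one hopes eventually an exact, shape-preserving localized field.

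Second, I would construct the two effective cometrics $\widetilde{g}^{\mu\nu}=\mathcal{A}\,g^{\mu\nu}+\mathcal{B}\,T^{\mu\nu}$ dictated by (\ref{Novellogeometricequation})---one per polarization, as required by birefringence---evaluating $\mathcal{A}$, $\mathcal{B}$, and $T^{\mu\nu}$ on the soliton profile. The horizon is then to be located not as a place where the field itself becomes singular, but as the locus where the \emph{comoving} effective light cone closes: the set on which the soliton velocity $v$ coincides with the local effective speed of light determined by $\widetilde{g}^{\mu\nu}$. Physically, the pulse locally slows light by modifying the refractive response of the vacuum, and in its own rest frame this produces a surface across which future-directed outgoing effective-null geodesics cannot advance---precisely the kind of optical event horizon contemplated in analog-gravity optics. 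Concretely I would compute the relevant component of $\widetilde{g}^{\mu\nu}$ along $u$ in the comoving frame and identify the zero where the effective coefficient $\widetilde{g}^{uu}$ changes sign, then verify that this zero bounds a genuine trapping region for each of the two polarization metrics.

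The hard part will be reconciling two pulls that act in opposite directions. A clean soliton---one for which the NSE reduction, and hence existence of a localized shape-preserving solution, is trustworthy---lives in the weak-field regime, precisely where the correction $\mathcal{B}\,T^{\mu\nu}$ to the background cometric is of order $\alpha^2$ and seemingly too small to close any light cone; whereas a horizon appears to demand a strong field capable of substantially deforming the effective geometry, a regime in which both the soliton analysis and the very validity of the Euler-Heisenberg approximation to QED become doubtful. My proposed escape from this impasse is to tune the propagation velocity $v$ to lie just below the effective speed of light, so that only an infinitesimal, weak-field modification of the effective cone is needed to match $v$ and thereby trap the outgoing rays: a near-luminal soliton can carry a horizon without ever leaving the weak-field window in which it demonstrably exists. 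Making this simultaneous limit---$v \to c_{\mathrm{eff}}$ with field strength held small but the horizon persistent---rigorous, and exhibiting it by an \emph{exact} rather than merely asymptotic soliton, is the decisive obstacle, and is, I believe, exactly why Conjecture \ref{mainconj} remains open rather than proven alongside Theorem \ref{mainthm}.
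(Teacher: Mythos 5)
You have attempted to prove a statement that the paper itself leaves open: Conjecture \ref{mainconj} is never proved in the thesis, which offers only evidence for it (the Solja\v ci\'c--Segev and Brodin \emph{et al.} reductions of the EHFE to the NSE, and the second-order Maxwellian approximation of Section \ref{maxwellianapproximation} exhibiting trapped outgoing geodesics near $r_c$ in (\ref{ehapprox1})), together with the remark that an eventual proof is expected to require Morse-theoretic tools. Your proposal is likewise not a proof: by your own admission the decisive step --- making the simultaneous limit $v \to c_{\mathrm{eff}}$ with weak field rigorous and upgrading the asymptotic NSE soliton to an exact solution --- is left unestablished, so nothing here settles the conjecture.

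Beyond incompleteness, there is a concrete reason the comoving-horizon mechanism fails in the Euler-Heisenberg vacuum, and it is visible in the paper's own plane-wave analysis. Equation (\ref{vphi1}) shows $v(0)=1$: field disturbances co-propagating with the carrier (along its Poynting vector) travel at exactly the background speed of light, at \emph{any} intensity; the drag effect exists only off-axis and is maximal for counter-propagation. Hence for a rigid pulse traveling at $v<1$ along $z$, the forward effective cone never closes, the locus where $v$ equals the local forward effective light speed is empty, and outgoing disturbances always escape ahead of the pulse --- your $\widetilde{g}^{uu}$ sign change does not occur no matter how you tune $v$ toward $1$. (Fiber-optic analog horizons of this kind rely on a dispersive medium slowing co-propagating probe light; the EH vacuum provides no such slowing.) This is exactly why the paper, via Conjecture \ref{conjecture1} and Section \ref{Hints of an optical black hole}, insists the trapping requires an \emph{inwardly}-directed Poynting vector, rules out the spherical implosion by the non-parallelizability of $S^2$, and conjectures an imploding cylindrically symmetric soliton (shape-preserving only after rescaling the radial cross section by $r$) rather than a shape-rigid traveling wave. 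Your $z$-traveling ansatz is therefore not merely a different route from the paper's intended one; it is a configuration in which the required horizon provably cannot form within the theory's own effective geometry.
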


The theorem is   proven in Section \ref{proofmaintheorem}.  The soliton of the conjecture is, we believe, an imploding solitonic wave. Evidence for this belief is discussed in the section below. The purist will note that the hypothetical ``imploding soliton" cannot strictly be a soliton as a soliton  does not change its shape. \index{solitons} As the soliton in the conjecture implodes, it will become more concentrated and lose its initial shape. However,  in cylindrical coordinates $(t,r,\theta,z)$, for  a cylindrically symmetric  wave approaching the axis $r=0$, we conjecture that  its radial cross section will keep its shape if multiplied by  $r$.

\section{Evidence for the conjecture}
 Evidence in support of the conjecture comes from  nonlinear optics   \cite{DelphenichNOA}. In particular, Solja\v ci\'c and Segev   \cite{SoljacicSegev} have examined the behavior of a beam resulting from the   perpendicular collision of two plane waves in the Euler-Heisenberg vacuum. Using approximations,  they determined that the amplitude of the resulting beam satisfies the NSE   (\ref{NSE1}). \index{nonlinear Schr\"odinger equation} Since an imploding wave can be thought of as a limiting case of infinitely many colliding plane waves, it seems reasonable to expect that the NSE should be obtained in the case of an imploding wave. 

In the work of Brodin \emph{et al.}   \cite{Brodin}, which nicely complements  Solja\v ci\'c and Segev's paper, a beam guided between  two parallel conducting planes is studied. It was found that the amplitude of this beam satisfies a 2-dimensional cylindrically symmetric  NSE. \index{nonlinear Schr\"odinger equation}  According to  Brodin \emph{et al.}   \cite{Brodin}, for a beam with a  certain critical intensity $I_c$, the dispersive and self-focusing effects exactly balance and the beam forms an optical soliton of constant width.\index{solitons} If the intensity $I$ of the beam is less than $I_c$, then the beam width diffracts without bound. If $I>I_c$, then  the beam width collapses to zero in a finite time.  These results from nonlinear optics show that there is an authentic link  between the EHFE and the NSE, which at least partly supports Conjecture \ref{mainconj}. 

Another piece of evidence for the conjecture comes from the work of  Section \ref{maxwellianapproximation}. The Maxwellian approximation, although it is only a first-order approximation to a solution to the EHFE, it gives information on the coordinate velocities of effective geodesics up to second-order (see Theorem \ref{justificationofMA}).  When we look at  the coordinate velocities of the outgoing geodesics to second-order, we find that they are trapped within a certain radius. (There is a black hole.) 

Since Conjecture \ref{mainconj}  concerns solutions of a nonlinear variational problem,  we suspect that   its proof will use tools from   Morse theory (i.e., the calculus of variations in the large).

\section{Organization}

This thesis is organized as follows.

Chapter \ref{chapter1} is a  pedestrian introduction to the required mathematical physics.  Chapter \ref{chapter2} is a self-contained review of  Novello's theory of effective geometry. In Chapter \ref{planewaves} (which can be omitted on a first reading), we  study the effective geometry of plane waves, and calculate the index of refraction through a plane wave confirming earlier approximations done by others using different methods. In Chapter \ref{chapter4},  we use  well-known solutions from Maxwell's theory (for imploding cylindrically symmetric waves) and investigate the corresponding effective geometries which resemble black holes. At the end of Chapter \ref{chapter4}, we prove Theorem \ref{mainthm} by explicitly finding an exact solution to the EHFE with the required properties. Although this exact solution is static, it shares some similarities with an ingoing cylindrical wave solution because its Poynting vector points radially inward.

\chapter{Preliminaries}\label{chapter1}
The present chapter is meant to be a  self-contained pedestrian introduction to the relevant mathematical physics. 

\section{Nonlinearity of the vacuum} 
According to quantum electrodynamics,\index{quantum electrodynamics} photons can scatter off of each other. This photon-photon scattering effect, also known as the  nonlinearity of the vacuum,\index{nonlinearity of the vacuum} was calculated by Euler and Heisenberg   \cite{EH} in the mid-1930s, but it is so subtle that no currently available  experiment   is yet sensitive enough to measure it. 

\begin{figure}[h]
\center
    \includegraphics[height=4.5in]{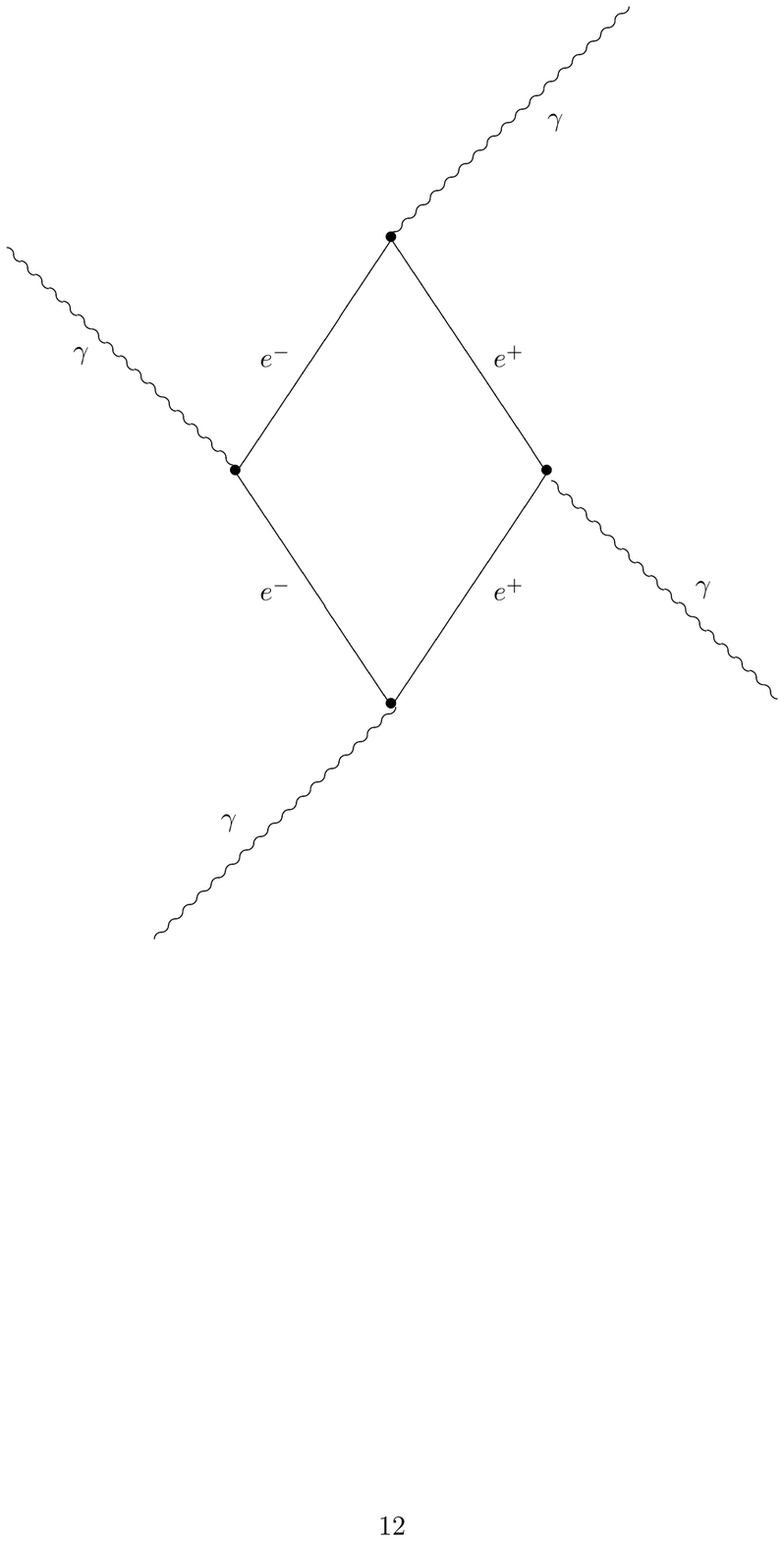}

    \caption[Photon-photon scattering]{Feynman diagram of photon-photon scattering.}

    \label{PhotonPhotonDiagram}
\end{figure}

Photon-photon scattering arises from processes that  involve virtual electron-positron pairs (see Figure \ref{PhotonPhotonDiagram}).  In the classical limit, the effect of these virtual particles on real  photons can be approximated  by introducing nonlinear terms to the Maxwellian Lagrangian. In this so-called effective field theory,   the photons do not necessarily follow null geodesics in the background metric. Instead, they follow null geodesics  with respect to a so-called effective metric, as will be explained  in Chapter \ref{chapter2}.  
\section{The background spacetime}

In a curved background, according to  Drummond and Hathrell   \cite{DrummondHathrell}, the physical Lagrangian for the effective field theory acquires  a nontrivial dependence on the     spacetime curvature. 
In the present thesis, we restrict ourselves  to a  Minkowskian background, so these curvature coupling effects can be ignored.

The background metric tensor \index{metric! background} is denoted by $g_{\mu\nu}$.\index{background metric}  Its inverse,\index{metric! inverse (cometric)} the so-called \emph{cometric,} \index{cometric! background} is denoted $g^{\mu\nu}$ and one has that  $g^{\mu\lambda}g_{\lambda\nu} = \delta^\mu_{\phantom{\mu}\nu}$, where $\delta^\mu_{\phantom{\mu}\nu}$ is the Kronecker delta (and the Einstein summation convention is  followed as usual). The present work uses the $+---$ signature convention.

Recall that a given metric determines a 
unique  torsion-free connection $\nabla$ \index{connection! derived from  background metric} by requiring  that the covariant derivative \index{covariant derivative! derived from  background metric} of the metric be zero (e.g. Hawking and Ellis   \cite{HawkingEllis} p. 40,  or  Spivak   \cite{Spivak} pp. 236 - 237):
\begin{eqnarray}
\nabla_\lambda g_{\mu\nu} = 0.
\end{eqnarray}The connection coefficients $\Gamma^\lambda_{\phantom{\lambda}\mu\nu}$ (Christoffel symbols) \index{Christoffel symbols (connection coefficients)! derived from background metric} thus  determined are given by:
\begin{eqnarray} \label{Christoffel}
\Gamma^\lambda_{\phantom{\lambda}\mu\nu}= \frac{1}{2}g^{\lambda\alpha}\left(\partial_\mu g_{\alpha\nu}  + \partial_\nu g_{\alpha\mu} -\partial_{\alpha} g_{\mu\nu}\right),
\end{eqnarray}where the operator $\partial_\mu$ denotes partial differentiation with respect to the $x^\mu$ coordinate. I.e., $\partial_\mu := \partial/\partial x^\mu$. 

\section{Electromagnetic fields}

The electromagnetic field $F_{\mu\nu}$ \index{electromagnetic field} is  a closed 2-form, and locally there exists an $A$-field $A_\mu$ such that:
\begin{eqnarray}\label{F=dA}
F_{\mu\nu} &=& \partial_\mu A_\nu - \partial_\nu A_\mu \ \ \
\left(=\nabla_\mu A_\nu - \nabla_\nu A_\mu\right).
\end{eqnarray}

Note  the   Bianchi  identity: \index{Bianchi identity}
\begin{eqnarray}\label{Bianchi}0=
 \partial_\lambda F_{\mu\nu} + \partial_\nu F_{\lambda\mu} + \partial_\mu F_{\nu\lambda} = \ ( \nabla_\lambda F_{\mu\nu} + \nabla_\nu F_{\lambda\mu} + \nabla_\mu F_{\nu\lambda}).
\end{eqnarray}

The antisymmetrization \index{antisymmetrization} of an indexed quantity is indicated by placing square brackets around indices. For example,  given the tensor $T_{\alpha_1\alpha_2\cdots \alpha_n}$, we can write:
\begin{eqnarray}
T_{[\alpha_1\alpha_2\cdots \alpha_n]}=\frac{1}{n!}\sum_{\sigma} \textrm{sgn}(\sigma)T_{\sigma(\alpha_1)\sigma(\alpha_2)\cdots \sigma(\alpha_n)},
\end{eqnarray} where the sum is taken over all permutations $\sigma$  of the indices $\alpha_1,\alpha_2,\cdots,\alpha_n$. The value of $\textrm{sgn}(\sigma)$ is $+1$ if $\sigma$ is an even permutation of the sequence  $\alpha_1\alpha_2\cdots\alpha_n$ and is $-1$ if $\sigma$ is an odd permutation. 

Using antisymmetrization, Equation (\ref{Bianchi}) can be written as: \index{Bianchi identity}
\begin{eqnarray}0=
\partial_{[\lambda}F_{\mu\nu]} \ (= \nabla_{[\lambda}F_{\mu\nu]}).
\end{eqnarray}

The Levi-Civita tensor $\varepsilon_{\alpha\beta\mu\nu}$ \index{Levi-Civita tensor} is defined such that:
\begin{eqnarray}
\varepsilon_{\alpha\beta\mu\nu}:= 4!\sqrt{|g|}\delta^0_{\phantom{0}[\alpha}\delta^1_{\phantom{1}\beta}\delta^2_{\phantom{2}\mu}\delta^3_{\phantom{3}\nu]},
\end{eqnarray} where $g$ denotes the determinant of the metric $g_{\mu\nu}$. Note that the value of $\varepsilon_{\alpha\beta\mu\nu}$ is $0$ unless the indices  $\alpha,\beta,\mu,\nu$ are all distinct.  Furthermore, note that  $\varepsilon_{\alpha\beta\mu\nu}$ is $+\sqrt{|g|}$ if the sequence $\alpha\beta\mu\nu$ is an even permutation of the sequence 0123 and is  $-\sqrt{|g|}$ if $\alpha\beta\mu\nu$ is an odd permutation.

The Levi-Civita tensor allows us to express the Hodge dual of $F_{\mu\nu}$ by writing:\index{Hodge dual}
\begin{eqnarray}\label{ABH1}
F^*_{\alpha\beta} := \frac{1}{2}\varepsilon_{\alpha\beta\mu\nu}F^{\mu\nu}.
\end{eqnarray} Unless otherwise specified (e.g., in Section \ref{nullgeo}), indices are always raised or lowered with respect to the \emph{background} metric. So, e.g.,  $F^{\mu\nu}=g^{\alpha\mu}g^{\beta\nu}F_{\alpha\beta}$.

Note that the Bianchi identity (\ref{Bianchi}) can be expressed in terms of the dual tensor by writing (cf. Landau and Lifshitz   \cite{Landau} p. 67):\index{Bianchi identity}
\begin{eqnarray}\label{Bianchi2}
0=
\partial_\mu {F^*}^{\mu\nu} \ (= \nabla_\mu {F^*}^{\mu\nu}).
\end{eqnarray}

\section{Effective Lagrangians}
The physical behavior of the electromagnetic   field is  governed by a Lagrangian $L$ which is a scalar function of the field $A_\mu$, its covariant derivatives $\nabla_\mu A_\nu$, and the background metric. 
The field equations are obtained from the principle of least action:\index{field equations! derived from Lagrangian}
\begin{eqnarray}\label{eulerlagrangeeqns}
\nabla_\mu\frac{\partial L}{\partial(\nabla_\mu A_\nu)} = \frac{\partial L}{\partial A_\nu}.
\end{eqnarray}

In the case of Maxwell's theory, the Lagrangian in the absence of charges and currents  (a field in the vacuum) can be written out as:\index{Lagrangian! Maxwell (vacuum case)}\index{Maxwell! Lagrangian (vacuum case)}
\begin{eqnarray}\label{maxwellslagrangian}
L&=&-\frac{1}{4}g^{\alpha\mu}g^{\beta\nu}\left(\nabla_\mu A_\nu - \nabla_\nu A_\mu \right)(\nabla_\alpha A_\beta - \nabla_\beta A_\alpha).
\end{eqnarray} If we define   $F:=F_{\mu\nu}F^{\mu\nu}$, then Equation (\ref{maxwellslagrangian}) simplifies to $L = -F/4$.  Using (\ref{eulerlagrangeeqns}), one recovers the familiar form of Maxwell's equation for the vacuum:\index{field equations! Maxwell (vacuum case)}\index{Maxwell!  field equations (vacuum case)}
\begin{eqnarray}\label{maxwelleqns}
\nabla_\mu F^{\mu\nu} = 0.
\end{eqnarray}

The quantity $F$ introduced here is a scalar invariant of the electromagnetic field tensor. In fact, there are  only two algebraically independent scalar invariants for  $F_{\mu\nu}$. These are represented by the so-called \emph{Poincar\'e  invariants} $F:=F_{\mu\nu}F^{\mu\nu}$ and $G:= F^{\mu\nu}F^*_{\mu\nu}$ (Landau and Lifshitz   \cite{Landau} p. 64).\index{Poincar\'e  invariants}

We define the class of  \emph{$L(F,G)$-theories} as electromagnetic theories in which the Lagrangian  $L$ \index{$L(F,G)$-theories! defined}   can be expressed in terms of the Poincar\'e   scalars $F$ and $G$ alone, i.e., $L=L(F,G)$. \index{Lagrangian! type $L=L(F,G)$} It is assumed  that the partial derivatives of $L=L(F,G)$, with respect to $F$ and $G$, exist at least up to second-order, and that they are continuous. We will use the notations $L_F:= \partial L/\partial F$, $L_G:=\partial L/\partial G$, $L_{FF} := \partial^2 L/\partial F^2$, $L_{GG} := \partial^2 L/\partial G^2$, $L_{FG} := \partial^2 L/(\partial G\partial F)$, etc.  We observe that  $L(F,G)$-theories are guaranteed to be Lorentz invariant since both  $F$ and $G$ are Lorentz invariant quantities. \index{Lagrangian! Lorentz invariance of} A particularly important $L(F,G)$-theory is the Euler-Heisenberg theory  (to be introduced in Section \ref{ehlagrangiansection}). 

 Chapter \ref{chapter2} is primarily concerned not with a particular theory but with general $L(F,G)$-theories. However, in subsequent chapters, attention is restricted to the Euler-Heisenberg theory. 
 
Although effective field theories more elaborate than the $L(F,G)$-type can be constructed by writing  Lagrangians that include terms involving the covariant derivatives of the field (e.g., terms like $\nabla_\lambda F^\lambda_{\phantom{\lambda}\nu}\nabla_\mu F^{\mu\nu}$)   \cite{SoljacicSegev}, we will not deal with such things in the present work.

\section{Euler-Heisenberg theory}\label{ehlagrangiansection}
Euler-Heisenberg theory is an effective field theory which   approximates the physical theory of quantum electrodynamics in  Minkowski spacetime.
 
 Up to second order in the fine-structure constant $\alpha$,  the \emph{Euler-Heisenberg Lagrangian} is given by (cf. Euler and Heisenberg   \cite{EH}, Schwinger   \cite{Schwinger}, Novello    \cite{ABH} p. 292, Boer and van Holten   \cite{Boer}): \index{Lagrangian! Euler-Heisenberg} \index{Euler-Heisenberg! Lagrangian}
\begin{eqnarray}\label{EH}
L = -\frac{1}{4}F + \frac{\alpha^2}{90}\left(F^2 + \frac{7}{4}G^2\right).
 \end{eqnarray} Since we are presently working in  natural  units,   the speed of light $c$, the reduced Planck constant $\hbar$, the mass of the electron $m_e$, and the permittivity of free space $\epsilon_0$, are here set equal to 1.  \index{units! natural} \index{natural units} Conventional Lorentz-Heaviside   units can be restored by replacing $\alpha^2$ with $ \alpha^2 \hbar^3 m_e^{-4} c^{-5}$.\index{units! Lorentz-Heaviside} \index{Lorentz-Heaviside units} (Sometimes we use the letter $\alpha$ as a tensor or   pseudotensor index, but no confusion between $\alpha$ as the fine-structure constant and   $\alpha$ as an index should arise  because  the context will make the meaning of $\alpha$ clear.)
 
Equation (\ref{EH}) applies to fields having  strength $A$ and frequency $\omega$ such that   \cite{EH}:
\begin{eqnarray}\label{physicalrestrictions1}
A&\ll& \frac{1}{\sqrt{4\pi \alpha}}\\
\omega&\ll&  1.\label{physrest2}
\end{eqnarray} In other words, the field strength should be  much weaker than the  critical field $1/\sqrt{4\pi \alpha}$   and it should be  approximately constant on scales much less than the Compton wavelength of the electron (which is unity, in our units). In the present work however, we will not worry  about these physical restrictions (\ref{physicalrestrictions1}) and (\ref{physrest2}).

\section{Field equations}
\begin{thm}
For a Lagrangian of type $L=L(F,G)$, the principle of least action (\ref{eulerlagrangeeqns}) yields  the field equations: \index{field equations! derived from Lagrangian}
\begin{eqnarray}\label{EOM2}
\nabla_\mu\left(L_F F^{\mu\nu} + L_G {F^*}^{\mu\nu}\right)=0.
\end{eqnarray} 
\end{thm}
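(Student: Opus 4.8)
The plan is to apply the Euler--Lagrange equation (\ref{eulerlagrangeeqns}) directly to $L=L(F,G)$ and reduce the whole problem to two elementary tensor derivatives. The first observation is that $L$ depends on $A_\nu$ only through the field strength $F_{\mu\nu}=\nabla_\mu A_\nu-\nabla_\nu A_\mu$, hence only through its covariant derivatives, with no bare dependence on $A_\nu$ itself. Consequently the right-hand side of (\ref{eulerlagrangeeqns}) vanishes, $\partial L/\partial A_\nu=0$, and the field equation collapses to $\nabla_\mu\bigl(\partial L/\partial(\nabla_\mu A_\nu)\bigr)=0$. Everything then hinges on evaluating the conjugate momentum $\partial L/\partial(\nabla_\mu A_\nu)$.

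By the chain rule, $\partial L/\partial(\nabla_\mu A_\nu)=L_F\,\partial F/\partial(\nabla_\mu A_\nu)+L_G\,\partial G/\partial(\nabla_\mu A_\nu)$, so I would first record the basic building block $\partial F_{\alpha\beta}/\partial(\nabla_\mu A_\nu)=\delta^\mu_\alpha\delta^\nu_\beta-\delta^\mu_\beta\delta^\nu_\alpha$, which follows at once from $F_{\alpha\beta}=\nabla_\alpha A_\beta-\nabla_\beta A_\alpha$ by treating the entries $\nabla_\mu A_\nu$ as independent. Feeding this into $F=F_{\alpha\beta}F^{\alpha\beta}$ and using the antisymmetry $F^{\nu\mu}=-F^{\mu\nu}$ to combine the two resulting terms gives $\partial F/\partial(\nabla_\mu A_\nu)=4F^{\mu\nu}$.

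The delicate computation is the $G$-derivative, and this is where I expect the only genuine bookkeeping to arise. Writing $G=F^{\mu\nu}F^*_{\mu\nu}=\tfrac12\varepsilon^{\alpha\beta\rho\sigma}F_{\alpha\beta}F_{\rho\sigma}$ in fully contracted form, I would differentiate both factors, substitute the building block above, and then exploit the total antisymmetry of $\varepsilon^{\alpha\beta\rho\sigma}$ together with its invariance under exchange of the two index pairs to see that the two contributions coincide. Each produces $\varepsilon^{\mu\nu\rho\sigma}F_{\rho\sigma}=2{F^*}^{\mu\nu}$, where ${F^*}^{\mu\nu}$ is the dual of (\ref{ABH1}), so that $\partial G/\partial(\nabla_\mu A_\nu)=4{F^*}^{\mu\nu}$. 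The point to watch is keeping the symmetry factors honest, since it is easy to drop or double-count a factor of two when passing between the $\varepsilon$-form of $G$ and the dual tensor.

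Assembling the pieces yields $\partial L/\partial(\nabla_\mu A_\nu)=4\bigl(L_F F^{\mu\nu}+L_G {F^*}^{\mu\nu}\bigr)$; substituting into the reduced field equation and cancelling the overall factor of $4$ gives exactly (\ref{EOM2}). As a consistency check I would verify the two specializations: $L=-F/4$ recovers Maxwell's equation (\ref{maxwelleqns}), and the Euler--Heisenberg Lagrangian (\ref{EH}) reproduces (\ref{EHFE1}) after computing $L_F$ and $L_G$. The main obstacle is thus not conceptual but purely combinatorial, namely correctly tracking the factor of $4$ and the appearance of the Hodge dual in the variation of $G$.
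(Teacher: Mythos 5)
Your proposal is correct and takes essentially the same route as the paper: both observe that $\partial L/\partial A_\nu=0$, apply the chain rule, and reduce everything to the two derivatives $\partial F/\partial(\nabla_\mu A_\nu)=4F^{\mu\nu}$ and $\partial G/\partial(\nabla_\mu A_\nu)=4{F^*}^{\mu\nu}$. Your Kronecker-delta building block is merely a tidier packaging of the brute-force index expansion the paper carries out for $F$, and your $\varepsilon$-bilinear computation of the $G$-derivative (with the correct factor bookkeeping, two product-rule terms each equal to $\varepsilon^{\mu\nu\rho\sigma}F_{\rho\sigma}=2{F^*}^{\mu\nu}$) correctly supplies the step the paper dismisses with ``can be established similarly.''
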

\begin{proof}For a Lagrangian of type $L=L(F,G)$, we get that $\partial L/\partial A_\nu = 0$. So Equation (\ref{eulerlagrangeeqns}) reduces to
\begin{eqnarray}
\nabla_\mu \frac{\partial L}{\partial(\nabla_\mu A_\nu)} =0.
\end{eqnarray}
We get that:
\begin{eqnarray}\label{th1.0}
\frac{\partial L}{\partial(\nabla_\mu A_\nu)} = L_F \frac{\partial F}{\partial(\nabla_\mu A_\nu)} + L_G\frac{\partial G}{\partial(\nabla_\mu A_\nu)}.
\end{eqnarray} because:
\begin{eqnarray}\label{th1c1}
\frac{\partial F}{\partial(\nabla_\mu A_\nu)} =4F^{\mu\nu},
\end{eqnarray}and:
\begin{eqnarray}\label{th1cl2}
\frac{\partial G}{\partial(\nabla_\mu A_\nu)} =4{F^*}^{\mu\nu}.
\end{eqnarray} When an index $\mu$ is free and not to be summed over by the Einstein convention, we draw a bar over it:
\begin{eqnarray}
\frac{\partial F}{\partial(\nabla_{\bar\mu} A_{\bar\nu})}&=& \frac{\partial}{\partial(\nabla_{\bar\mu} A_{\bar\nu})}\left(F_{\alpha\beta}F^{\alpha\beta}\right)\nonumber\\
&=&\frac{\partial}{\partial(\nabla_{\bar\mu} A_{\bar\nu})}\left(g^{\alpha\lambda}g^{\beta\rho}F_{\alpha\beta}F_{\lambda\rho}\right)\nonumber\\
&=&\frac{\partial}{\partial(\nabla_{\bar\mu} A_{\bar\nu})}\left(g^{\alpha\lambda}g^{\beta\rho}\left(\nabla_\alpha A_\beta - \nabla_\beta A_\alpha\right)\left(\nabla_\lambda A_\rho -\nabla_\rho A_\lambda \right)\right)\nonumber\\
&=&\frac{\partial}{\partial(\nabla_{\bar \mu} A_{\bar \nu})}\left(g^{{\bar\mu}\lambda}g^{{\bar\nu}\rho}\left(\nabla_{\bar\mu} A_{\bar\nu} - \nabla_{\bar\nu}A_{\bar\mu}\right)\left(\nabla_\lambda A_\rho -\nabla_\rho A_\lambda \right)\right.\nonumber\\
&&\phantom{\frac{\partial}{\partial(\nabla_{\bar \mu} A_{\bar \nu})}}+g^{{\bar\nu}\lambda}g^{{\bar\mu}\rho}\left(\nabla_{\bar\nu} A_{\bar\mu} - \nabla_{\bar\mu} A_{\bar\nu}\right)\left(\nabla_\lambda A_\rho -\nabla_\rho A_\lambda \right)\nonumber\\
&&\phantom{\frac{\partial}{\partial(\nabla_{\bar \mu} A_{\bar \nu})}}+g^{\alpha{\bar\mu}}g^{\beta{\bar\nu}}\left(\nabla_\alpha A_\beta - \nabla_\beta A_\alpha\right)\left(\nabla_{\bar\mu} A_{\bar\nu} -\nabla_{\bar\nu} A_{\bar\mu} \right)\nonumber\\
&&\phantom{\frac{\partial}{\partial(\nabla_{\bar \mu} A_{\bar \nu})}}+g^{\alpha{\bar\nu}}g^{\beta{\bar\mu}}\left(\nabla_\alpha A_\beta - \nabla_\beta A_\alpha\right)\left(\nabla_{\bar\nu} A_{\bar\mu} -\nabla_{\bar\mu} A_{\bar\nu} \right)\nonumber\\
&=&g^{\bar\mu  \lambda}g^{\bar\nu   \rho}\left(\nabla_\lambda A_\rho  - \nabla_\rho A_\lambda\right)\nonumber\\
&&-g^{\bar\nu \lambda}g^{\bar\mu \rho}\left(\nabla_\lambda A_\rho  - \nabla_\rho A_\lambda\right)\nonumber\\
&&+g^{\alpha \bar\mu}g^{\beta\bar\nu }\left(\nabla_\alpha A_\beta  - \nabla_\beta A_\alpha\right)\nonumber\\
&&+g^{\alpha \bar\nu}g^{\beta\bar\mu }\left(\nabla_\alpha A_\beta  - \nabla_\beta A_\alpha\right)\nonumber\\
&=&F^{\bar\mu\bar\nu}-F^{\bar\nu\bar\mu}+F^{\bar\mu\bar\nu}-F^{\bar\nu\bar\mu}\nonumber\\
&=&4F^{\bar\mu\bar\nu}.\nonumber
\end{eqnarray}

Equation (\ref{th1cl2}) can be established similarly. 

The result (\ref{EOM2}) follows from Equations (\ref{th1.0}) - (\ref{th1cl2}).
\end{proof}

Specializing  (\ref{EOM2})  to the Euler-Heisenberg Lagrangian (\ref{EH}), we get  the Euler-Heisenberg field equations:\index{field equations! Euler-Heisenberg}\index{Euler-Heisenberg! field equations}
\begin{eqnarray}\label{EHFE}
\nabla_\mu F^{\mu\nu} = \frac{\alpha^2}{45}\nabla_\mu\left(4F F^{\mu\nu} + 7 G {F^*}^{\mu\nu}\right).
\end{eqnarray}

For an arbitrary $L(F,G)$-theory, Equation (\ref{EOM2}) implies:
\begin{eqnarray}\label{EOM3}
0&=&
\nabla_\mu\left(L_F F^{\mu\nu}\right) + \nabla_\mu\left(L_G{F^*}^{\mu\nu}\right)\nonumber\\
&=&\left(L_{FF}\nabla_\mu F + L_{FG}\nabla_\mu G\right)F^{\mu\nu}
+ L_F\nabla_\mu F^{\mu\nu}\nonumber\\
&&+\left(L_{FG}\nabla_\mu F + L_{GG}\nabla_\mu G\right){F^*}^{\mu\nu}
+ L_G\underbrace{\nabla_\mu {F^*}^{\mu\nu}}_{zero}.
\end{eqnarray} The last term is zero by the Bianchi identity  (\ref{Bianchi2}). 
By computation, one notes that $\nabla_\mu F = 2F^{\alpha\beta}\nabla_\mu F_{\alpha\beta}$ and $\nabla_\mu G = 2{F^*}^{\alpha\beta}\nabla_\mu F_{\alpha\beta}$. Hence, if we define the tensor   \cite{DeLorenci}:
\begin{eqnarray}\label{Q}
Q^{\alpha\beta\mu\nu}:= L_{FF} F^{\alpha\beta}F^{\mu\nu} + L_{FG}\left(F^{\alpha\beta}{F^*}^{\mu\nu} + {F^*}^{\alpha\beta}F^{\mu\nu}\right) + L_{GG}{F^*}^{\alpha\beta}{F^*}^{\mu\nu},
\end{eqnarray} then we can rewrite the field equations (\ref{EOM2}) as: 
\begin{eqnarray}\label{EOM4}
L_F\nabla_\mu F^{\mu\nu} + 2Q^{\alpha\beta\mu\nu}\nabla_\mu F_{\alpha\beta} =0.
\end{eqnarray}

Assuming $L_F\neq 0$, Equation (\ref{EOM4}) can be rearranged:
\begin{eqnarray}\label{vaccurrent}
\nabla_\mu F^{\mu\nu} = -\frac{2}{L_F} Q^{\alpha\beta\mu\nu}\nabla_\mu F_{\alpha\beta}. 
\end{eqnarray} (The case $L_F=0$ is discarded since it is not physically interesting.)

\section{Stress-energy tensor}

Given a Lagrangian $L$, one can define  a stress-energy tensor $T_{\mu\nu}$ through the equation:
\begin{eqnarray}\label{sefromL}
T_{\mu\nu} := 2\frac{\partial L}{\partial g^{\mu\nu}} - Lg_{\mu\nu}.
\end{eqnarray} 
This expression for the stress-energy tensor is implicit in e.g. Novello   \cite{ABH} pp. 271, 275,  Landau and Lifshitz   \cite{Landau} p. 77,   Hawking and Ellis   \cite{HawkingEllis} p. 66,  and Poisson  \cite{Poisson} p. 125. Observe  that different  authors disagree on the overall sign on $T_{\mu\nu}$ due to differing signature conventions for the metric.

For a Lagrangian of the form $L=L(F,G)$, Equation (\ref{sefromL}) gives:
\begin{eqnarray}
T_{\mu\nu} &=& 2\left(L_F\frac{\partial F}{\partial g^{\mu\nu}} + L_G\frac{\partial G}{\partial g^{\mu\nu}}  \right) -L g_{\mu\nu} \nonumber\\
&=& -4L_F F_\mu^{\phantom{\mu}\alpha} F_{\alpha\nu} -4L_G F_\mu^{\phantom{\mu}\alpha} F^*_{\alpha\nu} -L g_{\mu\nu}.
\end{eqnarray} Using the well-known identity $4F_\mu^{\phantom{\mu}\alpha}F^*_{\alpha\nu} = -Gg_{\mu\nu}$ (cf. Novello   \cite{ABH} p. 272), we get  (in agreement with Novello   \cite{ABH} p. 275):
\begin{eqnarray}\label{SETensor1}
T_{\mu\nu} = -4L_F F_\mu^{\phantom{\mu}\alpha}F_{\alpha\nu} - (L-GL_G)g_{\mu\nu}.
\end{eqnarray}

\chapter{Effective geometries
}\label{chapter2}

The purpose of this chapter is to give a quick self-contained review of  Novello's   theory of  effective geometries in nonlinear electrodynamics   \cite{ABH,DeLorenci,Novello2,NovelloSalim, NovelloPerezBergliaffa}.  Our exposition  is informed by the existing literature, most notably the work of Novello  \cite{ABH}. However, we do not follow any  specific work too closely.

\section{Electromagnetic shock waves } 
The \emph{wave front} of an electromagnetic shock wave \index{shock wave} is defined by a hypersurface $\Sigma$ across which the field derivatives are discontinuous. Given a set of local coordinates $x^\mu$ for the (background) spacetime manifold,  this hypersurface $\Sigma$ can be described as the set of solutions to the equation:
\begin{eqnarray}
z(x^\mu) = 0.
\end{eqnarray}We will need to assume that the first-order partial derivatives of  $z(x^\mu)$ exist and are continuous on  $\Sigma$, and that the gradient $k_\mu := \partial_\mu z$ does not vanish on $\Sigma$.  The hypersurface $\Sigma$, at least locally, splits the manifold into two regions $\mathfrak{M}^+:=\{x^\mu:\ z(x^\mu)>0\}$ and $\mathfrak{M}^-:=\{x^\mu:\ z(x^\mu)<0\}$. 

The jump of an arbitrary function \index{jump of a function} $J$ through  $\Sigma$ is denoted  by the \emph{Hadamard bracket} $[J]_\Sigma$. \index{Hadamard bracket} For each point $p$ of $\Sigma$, we define:
\begin{eqnarray}
[J]_\Sigma (p) := \lim_{p^+\rightarrow p}J(p^+) - \lim_{p^- \rightarrow p}J(p^-),
\end{eqnarray}
where the points $p^+$ and $p^-$, which tend towards $p$, belong to  the regions  $\mathfrak{M}^+$ and $\mathfrak{M}^-$ respectively (Papapetrou  \cite{Papapetrou} p. 170). Note that if $J$ is continuous across $\Sigma$, then $[J]_\Sigma= 0$. The converse is not strictly true, since it is possible to have a function with a so-called \emph{simple discontinuity} \index{simple discontinuity} 
whereby  $\lim_{p^+\rightarrow p} J (p^+)= \lim_{p^-\rightarrow p}J(p^-)\neq J(p)$. On the other hand, the derivative of a function can be discontinuous but the discontinuity is never of the simple type (Rudin   \cite{Rudin} p. 109).\index{discontinuity of simple type} In light of this,   a partial derivative $\partial_\mu J$ is discontinuous  across $\Sigma$ if and only if $[\partial_\mu J]_\Sigma\neq0$.  

Since  $\Sigma$ is the front of an electromagnetic shock wave, the electromagnetic field is continuous across $\Sigma$ but some of its  derivatives are discontinuous across $\Sigma$. We express this by writing:\index{shock wave}
\begin{eqnarray}\label{sw1}
[F_{\mu\nu}]_\Sigma = 0,
\end{eqnarray}and:
\begin{eqnarray}\label{sw2}
[\nabla_\lambda F_{\mu\nu}]_\Sigma \neq 0 \textrm{ for  some $\lambda, \mu,\nu$}.
\end{eqnarray} Similar conditions hold for the dual tensor $F^*_{\mu\nu}$. 

Note that since the  field $F_{\mu\nu}$ and the Christoffel symbols $\Gamma^\lambda_{\mu\nu}$ are continuous, we have:
\begin{eqnarray}
[\nabla_\lambda F_{\mu\nu}]_\Sigma 
&=&[\partial_\lambda F_{\mu\nu}]_\Sigma.
\end{eqnarray}

Now consider a second  coordinate system $\{x^{\breve\mu}\}$ such that $x^{\breve0} = z(x^{\mu})$ (cf. Papapetrou  \cite{Papapetrou} p. 171). Then $\Sigma$ can be described by the equation $x^{\breve0}=0$. 
Moreover,  since $z(x^\mu)$ has continuous first-order partial derivatives, we get that:
\begin{eqnarray}\label{propagationvector}
[\partial_\lambda F_{\mu\nu}]_\Sigma&=&
\left[(\partial{}_{\breve0}F_{\mu\nu})\frac{\partial x^{\breve0}}{\partial x^\lambda}\right]_\Sigma\nonumber\\
&=&\left[(\partial{}_{\breve0}F_{\mu\nu})\partial_\lambda z\right]_\Sigma\nonumber\\
&=&[\partial_{\breve0}F_{\mu\nu}]_\Sigma \cdot \partial_\lambda z\nonumber\\
&=&
f_{\mu\nu} k_\lambda,
\end{eqnarray}where $f_{\mu\nu}:=[\partial{}_{\breve0}F_{\mu\nu}]_\Sigma
$ is the so-called \emph{discontinuity} or \emph{disturbance} in the field,\index{discontinuity! in a field} and the 1-form $k_\lambda :=\partial_\lambda z$ is called the \emph{propagation vector}.\index{propagation vector} It is required that $k_\lambda$ be nonzero.

\begin{thm}The quantity $f_{\mu\nu}$ is a tensor. Moreover, it is a 2-form.  
 \end{thm}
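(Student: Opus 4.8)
The plan is to leverage the factorization $[\nabla_\lambda F_{\mu\nu}]_\Sigma = f_{\mu\nu}k_\lambda$ obtained in (\ref{propagationvector}): the right-hand side is a product of the propagation vector $k_\lambda=\partial_\lambda z$ and the jump of a covariant derivative, both of which have controllable tensorial character, so tensoriality of $f_{\mu\nu}$ will follow by a division argument that is legitimate precisely because $k_\lambda\neq 0$.

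First I would show that $T_{\lambda\mu\nu}:=[\nabla_\lambda F_{\mu\nu}]_\Sigma$ is a genuine $(0,3)$-tensor. Although $\partial_\lambda F_{\mu\nu}$ is not tensorial, we have already noted that $[\nabla_\lambda F_{\mu\nu}]_\Sigma=[\partial_\lambda F_{\mu\nu}]_\Sigma$, so the bracket equals that of the tensor field $\nabla_\lambda F_{\mu\nu}$. For a fixed $p\in\Sigma$ this bracket is the difference of the one-sided limits of $\nabla_\lambda F_{\mu\nu}$ taken from $\mathfrak{M}^+$ and $\mathfrak{M}^-$; each limit lives in the single tensor space $T_p^*M\otimes T_p^*M\otimes T_p^*M$, so their difference lies there too and transforms tensorially. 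Likewise $k_\lambda=\partial_\lambda z$ is the gradient of a scalar, hence a bona fide $1$-form, and it is nonzero on $\Sigma$ by hypothesis.

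The core step is the tensor-division argument. From $T_{\lambda\mu\nu}=f_{\mu\nu}k_\lambda$ with $T$ and $k$ tensorial and $k\neq 0$, pick an index $\lambda_*$ with $k_{\lambda_*}\neq 0$, so that $f_{\mu\nu}=T_{\lambda_* \mu\nu}/k_{\lambda_*}$ (no sum). Passing to a second coordinate system $\{\tilde x^a\}$ and using the transformation laws of $T$ and $k$ gives, for every $a$, $T_{abc}=\big(\tfrac{\partial x^\mu}{\partial \tilde x^b}\,\tfrac{\partial x^\nu}{\partial \tilde x^c}\,f_{\mu\nu}\big)\,k_a$, where $T_{abc}$, $k_a$ denote components in the new system; dividing by a nonvanishing $k_{a_*}$ shows that $T_{a_* b c}/k_{a_*}$—the quantity defining $f$ in the tilde system—equals $\tfrac{\partial x^\mu}{\partial \tilde x^b}\,\tfrac{\partial x^\nu}{\partial \tilde x^c}\,f_{\mu\nu}$, which is exactly the $(0,2)$-tensor transformation law for $f$. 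Antisymmetry is then immediate: $F_{\mu\nu}=-F_{\nu\mu}$ gives $\partial_{\breve 0}F_{\mu\nu}=-\partial_{\breve 0}F_{\nu\mu}$, and the Hadamard bracket is linear, so $f_{\mu\nu}=-f_{\nu\mu}$; an antisymmetric covariant $2$-tensor is by definition a $2$-form.

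I expect the only real obstacle to be making this division step airtight rather than hand-wavy, since $f_{\mu\nu}$ is \emph{a priori} merely an array defined through the distinguished coordinate $x^{\breve 0}=z$, and its tensorial nature genuinely relies on the nonvanishing of $k_\lambda$ to divide cleanly. The remaining ingredients—tensoriality of the bracket and of the gradient, and antisymmetry—are routine.
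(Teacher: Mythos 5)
Your proof is correct, but it takes a genuinely different route from the paper's. The paper verifies the transformation law head-on: it writes $f_{\mu'\nu'}=[\partial_{\breve0}F_{\mu'\nu'}]_\Sigma$, expands $F_{\mu'\nu'}=F_{\mu\nu}\frac{\partial x^{\mu}}{\partial x^{\mu'}}\frac{\partial x^{\nu}}{\partial x^{\nu'}}$, applies the product rule inside the bracket, and discards the term $\left[F_{\mu\nu}\,\partial_{\breve0}\!\left(\frac{\partial x^{\mu}}{\partial x^{\mu'}}\frac{\partial x^{\nu}}{\partial x^{\nu'}}\right)\right]_\Sigma$ because it is continuous across $\Sigma$ (continuity of $F_{\mu\nu}$ plus $C^2$ coordinate transitions), leaving exactly the $(0,2)$ transformation law. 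You argue structurally instead: the Hadamard bracket of the tensor field $\nabla_\lambda F_{\mu\nu}$ is itself a tensor, it factors as $f_{\mu\nu}k_\lambda$ by Equation (\ref{propagationvector}), and division by a nonvanishing component of the $1$-form $k_\lambda$ transfers tensoriality to $f_{\mu\nu}$. The underlying mechanism is the same in both cases (continuous Jacobians pass through one-sided limits, so brackets of products of continuous and tensorial factors behave tensorially), but your packaging makes visible something the paper's proof does not: since $f_{\mu\nu}=[\partial_\lambda F_{\mu\nu}]_\Sigma/k_\lambda$ (no sum, for any $\lambda$ with $k_\lambda\neq0$), the disturbance is manifestly independent of the choice of auxiliary coordinates $x^{\breve\mu}$ completing $z$, whereas the paper's computation keeps $\partial_{\breve0}$ fixed throughout and never addresses that point. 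Conversely, the paper's proof is self-contained at the level of the definition, while your division step silently invokes one extra ingredient that you should state explicitly: identifying $\tilde T_{a_*bc}/\tilde k_{a_*}$ with ``the quantity defining $f$'' in the new system presupposes that the factorization (\ref{propagationvector}) holds there with the new-system definition $\tilde f_{bc}=[\partial_{\breve0}\tilde F_{bc}]_\Sigma$ --- legitimate, since the derivation of (\ref{propagationvector}) is coordinate-generic, but it is an ingredient and not a tautology. The antisymmetry part is identical in both proofs.
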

 \begin{proof} We verify that if we go to another coordinate system $\{x^{\mu'}\}$, then the quantity $f_{\mu\nu}$ transforms as a tensor should.
\begin{eqnarray}
f_{\mu'\nu'} &=& [\partial{}_{\breve0}F_{\mu'\nu'}]_\Sigma\nonumber\\
&=&\left[\partial{}_{\breve0}\left(F_{\mu\nu}\frac{\partial x^{\mu}}{\partial x^{\mu'}}\frac{\partial x^{\nu}}{\partial x^{\nu'}}\right)\right]_\Sigma\nonumber\\
&=&\left[\left(\partial{}_{\breve0}F_{\mu\nu}\right)\frac{\partial x^{\mu}}{\partial x^{\mu'}}\frac{\partial x^{\nu}}{\partial x^{\nu'}} + F_{\mu\nu}\partial{}_{\breve0}\left(\frac{\partial x^{\mu}}{\partial x^{\mu'}}\frac{\partial x^{\nu}}{\partial x^{\nu'}}\right)\right]_\Sigma\nonumber\\
&=&\left[\left(\partial{}_{\breve0}F_{\mu\nu}\right)\frac{\partial x^{\mu}}{\partial x^{\mu'}}\frac{\partial x^{\nu}}{\partial x^{\nu'}}\right]_\Sigma + \left[\underbrace{F_{\mu\nu}\partial{}_{\breve0}\left(\frac{\partial x^{\mu}}{\partial x^{\mu'}}\frac{\partial x^{\nu}}{\partial x^{\nu'}}\right)}_{\textrm{continuous}}\right]_\Sigma\nonumber\\
&=&\left[\left(\partial{}_{\breve0}F_{\mu\nu}\right)\right]_\Sigma\left(\frac{\partial x^{\mu}}{\partial x^{\mu'}}\frac{\partial x^{\nu}}{\partial x^{\nu'}}\right)\nonumber\\
&=&f_{\mu\nu}\frac{\partial x^{\mu}}{\partial x^{\mu'}}\frac{\partial x^{\nu}}{\partial x^{\nu'}}.
\end{eqnarray} Moreover, $f_{\mu\nu}$ is a 2-form since $f_{\mu\nu} = - f_{\nu\mu}$. (Note that the above underlined ``continuous" term is  continuous since the  background spacetime, which is Minkowskian, is $C^2$.)
\end{proof}
Note that for the dual $F^*_{\mu\nu}$  we write, in analogy with Equation (\ref{propagationvector}):
\begin{eqnarray}\label{propagationvector2}
[\partial_\lambda F^*_{\mu\nu}]_\Sigma = f^*_{\mu\nu}k_\lambda,
\end{eqnarray} where $f^*_{\mu\nu} :=[\partial_{0'}F^*_{\mu\nu}]_{\Sigma}$ is the discontinuity of the dual field.\index{discontinuity! in a field! dual of}\index{field discontinuity! dual of}  Analogously to the relation $F^*_{\alpha\beta} = \frac{1}{2}\varepsilon_{\alpha\beta\mu\nu}F^{\mu\nu}$, one has:
\begin{eqnarray}
f^*_{\alpha\beta} = \frac{1}{2}\varepsilon_{\alpha\beta\mu\nu}f^{\mu\nu}.
\end{eqnarray} Moreover:
\begin{eqnarray}\label{fup}
[\partial_\lambda F^{\mu\nu}]_\Sigma = f^{\mu\nu}k_\lambda,
\end{eqnarray}and: 
\begin{eqnarray}
[\partial_\lambda {F^*}^{\mu\nu}]_\Sigma = {f^*}^{\mu\nu}k_\lambda.
\end{eqnarray}

\section{Dispersion laws and polarization}\label{polarization}

In nonlinear field theory, field discontinuities (or \emph{photons}, in a classical corpuscular sense)  can exhibit birefringent behavior  \cite{ABH, Novello2, DeLorenci}. This means that the way a photon propagates through the field depends on its polarization state.\index{birefringence} Whether a theory predicts birefringence or not depends on the Lagrangrian used. For example, in Born-Infeld electrodynamics, there is no birefringence (see e.g. Novello  \cite{ABH} p. 276). In the Euler-Heisenberg theory however,  there is. 

The goal of the present section is to  derive the dispersion laws for $L(F,G)$-theories. We begin with the following observation:

\begin{thm}\label{lemmapolarize3}Locally, there exists a 1-form $p_\mu$ such that $f_{\mu\nu} = p_\mu  k_\nu - p_\nu k_\mu.$
\end{thm}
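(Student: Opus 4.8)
The plan is to extract an algebraic constraint on $f_{\mu\nu}$ from the Bianchi identity and then use the nonvanishing of $k_\lambda$ to exhibit the factor $p_\mu$ explicitly. First I would take the Hadamard bracket of the Bianchi identity (\ref{Bianchi}), $\partial_{[\lambda}F_{\mu\nu]}=0$, across $\Sigma$. Since the bracket is linear and annihilates continuous quantities, and since $[\partial_\lambda F_{\mu\nu}]_\Sigma = f_{\mu\nu}k_\lambda$ by (\ref{propagationvector}), this yields the cyclic (``decomposability'') relation
\begin{eqnarray}\label{cyclicrel}
f_{\mu\nu}k_\lambda + f_{\lambda\mu}k_\nu + f_{\nu\lambda}k_\mu = 0.
\end{eqnarray}
This is the only structural fact I need: it says that the $2$-form $f$ wedged with the $1$-form $k$ vanishes.

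Next, because $k_\lambda$ is required to be nonzero on $\Sigma$ and is continuous, I would choose, in a neighborhood of a point of $\Sigma$, a smooth vector field $v^\lambda$ satisfying $k_\lambda v^\lambda = 1$ (such a $v$ exists precisely because $k_\lambda\neq 0$, which is where the word ``locally'' in the statement is used). I then define the candidate $1$-form by
\begin{eqnarray}\label{defp}
p_\mu := f_{\mu\lambda}v^\lambda.
\end{eqnarray}

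Finally, I would contract the cyclic relation (\ref{cyclicrel}) with $v^\lambda$. Using $k_\lambda v^\lambda = 1$, the antisymmetry $f_{\lambda\mu}=-f_{\mu\lambda}$, and the definition (\ref{defp}), the three terms become $f_{\mu\nu}$, $-p_\mu k_\nu$, and $p_\nu k_\mu$ respectively, giving $f_{\mu\nu} - p_\mu k_\nu + p_\nu k_\mu = 0$, which is exactly the claimed decomposition $f_{\mu\nu} = p_\mu k_\nu - p_\nu k_\mu$. I do not expect any serious obstacle here: the content is entirely in recognizing that the Bianchi identity forces (\ref{cyclicrel}), after which the construction of $p_\mu$ is a routine contraction. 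The only point requiring minor care is the ``local'' clause, namely that the normalization $k_\lambda v^\lambda = 1$ produces a genuine (smooth) $1$-form field rather than merely a pointwise object, which follows from the continuity and nonvanishing of $k_\lambda$ near $\Sigma$.
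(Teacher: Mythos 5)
Your proof is correct, but it takes a genuinely different and more economical route than the paper's. The paper proceeds in two stages: first it applies the Hadamard bracket to the dual Bianchi identity (\ref{Bianchi2}) to obtain ${f^*}^{\mu\nu}k_\mu = 0$, deduces $\det\left(f_{\mu\nu}\right)=0$, and invokes a specifically four-dimensional result of Penrose and Rindler to conclude that $f_{\mu\nu}$ is \emph{simple}, $f_{\mu\nu}=u_{[\mu}v_{\nu]}$; second, it brackets the Bianchi identity (\ref{Bianchi}) to get $f_{[\mu\nu}k_{\lambda]}=0$ (your cyclic relation is exactly this, written out), concludes that $u_\mu$, $v_\mu$, $k_\mu$ are linearly dependent, and extracts the decomposition from that. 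You bypass the simplicity step entirely: contracting the cyclic relation with a locally chosen $v^\lambda$ normalized by $k_\lambda v^\lambda = 1$ yields $p_\mu = f_{\mu\lambda}v^\lambda$ explicitly and finishes the proof in one line, since the three terms become $f_{\mu\nu}$, $-p_\mu k_\nu$, and $p_\nu k_\mu$. Your argument is more elementary (no appeal to Penrose--Rindler or to the wedge-product linear-dependence lemma), it is dimension-independent, and it produces an explicit formula for $p_\mu$; what it forgoes is the side observation---made manifest by the paper's route but not needed for the theorem---that the disturbance 2-form $f_{\mu\nu}$ is itself decomposable. The one point of hygiene, which you correctly flag, is that $v^\lambda$ can be chosen continuously near a point of $\Sigma$ because $k_\lambda$ is continuous and nonvanishing there (pick a coordinate direction on which $k_\lambda$ does not vanish and normalize); this consumes the word ``locally'' in the statement in the same way the paper's own argument does. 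Afterwards one may, as the paper notes, adjust $p_\mu$ by a multiple of $k_\mu$ to make it orthogonal to $k_\mu$ without changing $p_\mu k_\nu - p_\nu k_\mu$.
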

\begin{proof}
Applying the Hadamard bracket  to both sides of Equation (\ref{Bianchi2}) gives: 
\begin{eqnarray}
{f^*}^{\mu\nu} k_\mu = 0,
\end{eqnarray} which implies that  $\det\left({f^*}^{\mu\nu}\right)= -|g|^{-1/2} \det\left(f_{\mu\nu}\right)=0$. Since we are in four dimensions, it follows that $f_{\mu\nu}$ is simple (e.g., Penrose and Rindler  \cite{PenroseRindler} p. 166).\index{discontinuity! in a field! simplicity of} \index{field discontinuity! simplicity of} That is, locally there exist 1-forms $u_\mu$ and $v_\mu$ such that:
\begin{eqnarray}\label{polarize1}
f_{\mu\nu} = u_{[\mu}v_{\nu]} =\frac{1}{2}( u_\mu v_\nu -u_\nu v_\mu). 
\end{eqnarray}Taking the Hadamard bracket of the Bianchi identity (\ref{Bianchi})  gives:
\begin{eqnarray}\label{polarize2}
f_{[\mu\nu} k_{\lambda]}=0.
\end{eqnarray} Equations (\ref{polarize1}) and (\ref{polarize2}) imply that  the triple wedge product of $u_\mu$, $v_\nu$ and $k_\lambda$ vanishes.  Hence   $u_\mu$, $v_\nu$ and $k_\lambda$   must be linearly dependent (e.g., Madsen and Tornehave   \cite{MadsenTornehave} pp. 11 - 12) and so locally  there exists a 1-form $p_\mu$ for which we have the decomposition: \index{discontinuity! in a field! decomposition of}\index{field discontinuity! decomposition of}
\begin{eqnarray}\label{polarize3}
f_{\mu\nu} = p_\mu k_\nu - p_\nu k_\mu.
\end{eqnarray}
\end{proof}
Theorem \ref{lemmapolarize3}/Equation (\ref{polarize3}) says that the field discontinuity $f_{\mu\nu}$  (or \emph{photon}, as we are apt to call it) is   the wedge product of the propagation vector $k_\mu$ together with $p_\mu$.  Without loss of generality  we can assume that $p_\mu$ is orthogonal to $k_\mu$ and  thereby interpret $p_\mu$ as being  the (non-normalized)  polarization vector (actually a 1-form) for the photon.\index{polarization}

Let us take  the Hadamard bracket of both sides of  Equation (\ref{EOM4}). After a bit of rearranging, one gets that:
\begin{eqnarray}\label{[EOM4]}
L_F g^{\lambda\nu}f_{\mu\nu}k_\lambda =-2Q_\mu^{\phantom{\mu}\nu\alpha\beta}f_{\alpha\beta}k_\nu.
\end{eqnarray}  Substituting  (\ref{polarize3}) into (\ref{[EOM4]}), and using  the assumption that $p_\mu$ is orthogonal to $k_\mu$, it follows that:
\begin{eqnarray}\label{pstates1-}
L_F k^2 p_\mu =-4Q_\mu^{\phantom{\mu}\alpha\nu\beta}k_\alpha k_\beta p_\nu,
\end{eqnarray}where $k^2 := g^{\mu\nu}k_\mu k_\nu$.  Assuming $L_F\neq 0$, we can write: 
\begin{eqnarray}\label{pstates1}
k^2 p_\mu =- \frac{4}{L_F}Q_\mu^{\phantom{\mu}\alpha\nu\beta}k_\alpha k_\beta p_\nu.
\end{eqnarray}

For convenience, define the tensor  (cf.  De Lorenci \emph{et al.}   \cite{DeLorenci}):
\begin{eqnarray}
S^{\mu\nu}:=k^2g^{\mu\nu} + \frac{4}{L_F} Q^{\mu\alpha\nu\beta}k_\alpha k_\beta.
\end{eqnarray} Then Equation (\ref{pstates1}) can be expressed as:
\begin{eqnarray}\label{pstates2}
S^\mu_{\phantom{\mu}\nu}p_\mu = 0.
\end{eqnarray} 

If $k^2\neq0$, Equation (\ref{pstates1}) implies that $p_\mu$ can be expressed  as a linear combination of $h_\mu:= F_\mu^{\phantom{\mu}\lambda}k_\lambda$ and $h^*_\mu:={F^*_\mu}^{\lambda}k_\lambda$. (Note that both $h_\mu$ and $h^*_\mu$ are orthogonal to $k_\mu$: since $F^{\mu\nu}$ and ${F^*}^{\mu\nu}$ are skew-symmetric we get that $h^\mu k_\mu = F^{\mu\lambda} k_\lambda k_\mu = 0$ and ${h^*}^\mu k_\mu ={F^*}^{\mu\nu}k_\lambda k_\mu =0$.)

Writing:
\begin{eqnarray}\label{polarizationdecomp}
p_\mu = ah_\mu  + bh^*_\mu.
\end{eqnarray}

We get that:
\begin{eqnarray}\label{actS1}
S^\mu_{\phantom{\mu}\nu} h_\mu= \frac{4}{L_F }\left(\left(\frac{L_F k^2}{4} + L_{FF}h^2 + L_{FG}h^\alpha h^*_\alpha\right)h_\nu + \left(L_{GG}h^\alpha h^*_\alpha + L_{FG}h^2\right)h^*_\nu\right),
\end{eqnarray}
and
\begin{eqnarray}\label{actS2}
S^\mu_{\phantom{\mu}\nu} h^*_\mu= \frac{4}{L_F }\left(\left(L_{FF}h^\alpha h^*_\alpha + L_{FG}{h^*}^\alpha h^*_\alpha\right)h_\nu + \left(\frac{L_F k^2}{4}  + L_{GG}{h^*}^\alpha h^*_\alpha+ L_{FG}h^\alpha h^*_\alpha\right)h^*_\nu\right).
\end{eqnarray}

Equations (\ref{actS1}) and (\ref{actS2}) can be recast into a somewhat more useful form by exploiting the well-known identities (cf. Novello  \cite{ABH} p. 272):
 \begin{eqnarray}\label{Gid}
 F^\nu_{\phantom{\nu}\lambda} {F^*}^{\mu\lambda} =\frac{1}{4}G g^{\mu\nu},
\end{eqnarray} and:
 \begin{eqnarray}\label{F**id}
{F^*}^\mu_{\phantom{\mu}\alpha}{F^*}^{\alpha\nu} -  F^\mu_{\phantom{\mu}\alpha} F^{\alpha\nu} = \frac{1}{2}F g^{\mu\nu}.
\end{eqnarray}For  (\ref{Gid}) and (\ref{F**id}) respectively, contracting both sides  with $k_\mu k_\nu$ gives:
\begin{eqnarray}\label{hG}
h^\alpha h^*_\alpha = \frac{1}{4}Gk^2,
\end{eqnarray} and:
\begin{eqnarray}\label{hF}
-{h^*}^\alpha h^*_\alpha + h^2 =\frac{1}{2}Fk^2.
\end{eqnarray} 

Using (\ref{hG}) and (\ref{hF}), Equations (\ref{actS1}) and (\ref{actS2}) become:
\begin{eqnarray}\label{actS1*}
S^\mu_{\phantom{\mu}\nu} h_\mu= \frac{4}{L_F }\left(\left(\left(\frac{L_F }{4} +\frac{1}{4}G L_{FG} \right)k^2+ L_{FF}h^2\right)h_\nu + \left( \frac{1}{4}GL_{GG}k^2 + L_{FG}h^2\right)h^*_\nu\right),
\end{eqnarray}
and:
\begin{eqnarray}\label{actS2*}
S^\mu_{\phantom{\mu}\nu} h^*_\mu&=& \frac{4}{L_F }\left(\left(\left( \frac{1}{4}GL_{FF}  - \frac{1}{2}FL_{FG}\right)k^2+ L_{FG}h^2 \right)h_\nu\right.\nonumber\\
&&\left. + \left(\left(\frac{L_F}{4}    - \frac{1}{2}FL_{GG}+\frac{1}{4}GL_{FG}\right)k^2 + L_{GG}h^2\right)h^*_\nu\right).
\end{eqnarray}

Equations (\ref{pstates2}), (\ref{polarizationdecomp}), (\ref{actS1*}), and (\ref{actS2*}) give:

\begin{eqnarray} 
0&=&\left(a\left(\left(\frac{L_F }{4} +\frac{1}{4}G L_{FG} \right)k^2+ L_{FF}h^2\right)+b\left(\left( \frac{1}{4}GL_{FF}  - \frac{1}{2}FL_{FG}\right)k^2+ L_{FG}h^2 \right)\right)h_\nu\nonumber\\
\nonumber\\
&&+ \left(a\left( \frac{1}{4}GL_{GG}k^2 + L_{FG}h^2\right)+b \left(\left(\frac{L_F}{4}    - \frac{1}{2}FL_{GG}+\frac{1}{4}GL_{FG}\right)k^2 + L_{GG}h^2\right)\right)h^*_\nu.\nonumber\\
\end{eqnarray}

First we consider the case where $h_\nu$ and $h^*_\nu$ are linearly independent. In this case, we have the following linear system in the variables $a$ and $b$:
\begin{eqnarray}\label{system=}
\left\{ \begin{array}{l}
a\left(\left(\frac{L_F}{4}+\frac{1}{4}GL_{FG}\right)k^2 + L_{FF}h^2\right)+ b\left(\left(\frac{1}{4}GL_{FF}-\frac{1}{2}FL_{FG}\right)k^2 +L_{FG}h^2 \right)
=0\\
\\
a\left(  \frac{1}{4}GL_{GG}k^2+L_{FG}h^2\right)+ b\left(\left(\frac{L_F}{4} -\frac{1}{2}FL_{GG}+\frac{1}{4}GL_{FG}\right)k^2 +L_{GG}h^2\right)=0.\end{array}\right.
\end{eqnarray}  The determinant of this system has to be zero (there is a nontrivial solution for $a$ and $b$ because the polarization vector  $p_\mu = ah_\mu + b h^*_\mu$ is nonzero). Thus:
\begin{eqnarray}\label{det0`}
\left(\left(\frac{L_F}{4}+\frac{1}{4}GL_{FG}\right)k^2 + L_{FF}h^2\right)
\left(\left(\frac{L_F}{4} -\frac{1}{2}FL_{GG}+\frac{1}{4}GL_{FG}\right)k^2 +L_{GG}h^2\right)
\nonumber\\
=\left(\left(\frac{1}{4}GL_{FF}-\frac{1}{2}FL_{FG}\right)k^2 +L_{FG}h^2 \right)
\left(  \frac{1}{4}GL_{GG}k^2+L_{FG}h^2\right).
\end{eqnarray}

In the case where $h_\mu$ and $h^*_\mu$ are linearly dependent,  it follows that $S^\mu_{\phantom{\mu}\nu}h_\mu=S^\mu_{\phantom{\mu}\nu}{h^*}_\mu=0$ since $p_\mu = ah_\mu + b{h^*}_\mu$ and $S^\mu_{\phantom{\mu}\nu}p_\mu=0$. We thereby obtain the system:

\begin{eqnarray}\label{lindepend}
\left\{ \begin{array}{l}
\left(\left(\frac{L_F}{4}+\frac{1}{4}GL_{FG}\right)k^2 + L_{FF}h^2\right)h_\nu  + \left(  \frac{1}{4}GL_{GG}k^2+L_{FG}h^2\right)h^*_\nu
=0\\
\\
\left(\left(\frac{1}{4}GL_{FF}-\frac{1}{2}FL_{FG}\right)k^2 +L_{FG}h^2 \right)h_\nu+ \left(\left(\frac{L_F}{4} -\frac{1}{2}FL_{GG}+\frac{1}{4}GL_{FG}\right)k^2 +L_{GG}h^2\right)h^*_\nu=0.\end{array}\right.\nonumber\\
\end{eqnarray}  Since we require at least one component of $h_\mu$ or $h^*_\mu$  to be nonzero ($p_\mu = ah_\mu + b h^*_\mu$ is nonzero),  Equation (\ref{det0`})  holds even if  $h_\mu$ and $h^*_\mu$ are linearly dependent.

Expanding the products and combining like terms, Equation (\ref{det0`}) can be put in the form:
\begin{eqnarray}\label{Fresnel}
\Lambda_1 k^4 + \Lambda_2 h^2 k^2 + \Lambda_3 h^4 = 0,
\end{eqnarray}
where we define:
\begin{eqnarray}\label{lambdadefstart}
\Lambda_1 &:=& (L_F +GL_{FG})^2 -L_{GG}(2F L_F + G^2L_{FF}),\\
\Lambda_2&:=&4\left(L_F(L_{FF}+L_{GG})+2F(L_{FG}^2-L_{FF}L_{GG})\right) ,\\
\Lambda_3&:=& 16(L_{FF}L_{GG} -L_{FG}^2).\label{lambdadefstop}
\end{eqnarray}  

We now have the following result:
\begin{thm}\label{thm2.4}  Assuming that $L_F\neq0$, $k^2\neq0$, 
 $\Lambda_1\neq0$ and $\Lambda_2^2 - 4\Lambda_1\Lambda_3\geq0$ (in order  to   ensure that  Equation (\ref{Fresnel}) gives real solutions  for $k^2$), we have the dispersion law(s) \cite{BB}: \index{discontinuity! in a field! dispersion/propagation of} \index{field discontinuity! dispersion/propagation of} \index{dispersion laws}
\begin{eqnarray}\label{dispersion}
k^2 = \Lambda_\pm h^2,
\end{eqnarray}where:\index{birefringence} 
\begin{eqnarray}\label{lambdapmDef}
\Lambda_\pm:=\frac{-\Lambda_2 \pm \sqrt{(\Lambda_2)^2 -4\Lambda_1\Lambda_3}}{2\Lambda_1}.
\end{eqnarray} 
\end{thm}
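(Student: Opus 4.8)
The plan is to read Equation (\ref{Fresnel}) as a quadratic equation in the single unknown $k^2$, treating $h^2$ as a fixed parameter, and then to solve it by the quadratic formula. Setting $x := k^2$, the equation becomes $\Lambda_1 x^2 + (\Lambda_2 h^2)x + \Lambda_3 h^4 = 0$. The hypothesis $\Lambda_1 \neq 0$ guarantees this is a genuine quadratic, so division by $\Lambda_1$ is permitted and the formula applies.

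First I would write down the two roots,
\begin{eqnarray}
k^2 = \frac{-\Lambda_2 h^2 \pm \sqrt{(\Lambda_2 h^2)^2 - 4\Lambda_1\Lambda_3 h^4}}{2\Lambda_1},
\end{eqnarray}
and then factor $h^4 = (h^2)^2$ out of the radicand, using $(\Lambda_2 h^2)^2 - 4\Lambda_1\Lambda_3 h^4 = h^4(\Lambda_2^2 - 4\Lambda_1\Lambda_3)$. The hypothesis $\Lambda_2^2 - 4\Lambda_1\Lambda_3 \geq 0$ makes the inner radical real, so both roots are real; pulling out the common factor $h^2$ delivers $k^2 = \Lambda_\pm h^2$ with $\Lambda_\pm$ exactly as in (\ref{lambdapmDef}).

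The one point deserving care --- and the closest thing here to an obstacle --- is that $\sqrt{(h^2)^2} = |h^2|$ rather than $h^2$, since in the $+---$ signature $h^2 = g^{\mu\nu}h_\mu h_\nu$ need not be positive. When $h^2 < 0$ this merely interchanges the roles of the two roots $\Lambda_+$ and $\Lambda_-$, and since the theorem records the pair of dispersion laws together, the stated conclusion is unaffected. Beyond this, there is no real difficulty: the substantive work was already done in passing from the vanishing determinant of the system (\ref{system=}) to (\ref{det0`}) and thence to (\ref{Fresnel}), while the remaining standing hypotheses $L_F \neq 0$ and $k^2 \neq 0$ are inherited from that derivation (they are what allowed the decomposition $p_\mu = ah_\mu + bh^*_\mu$ and the passage to (\ref{pstates1})). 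The theorem therefore amounts to solving an already-established quadratic, and the proof is essentially a one-line application of the quadratic formula.
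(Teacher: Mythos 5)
Your proof is correct and is essentially the paper's own route: the paper offers no separate argument for Theorem \ref{thm2.4}, presenting it as the immediate consequence of reading the Fresnel equation (\ref{Fresnel}) as a quadratic in $k^2$ and applying the quadratic formula under the hypotheses $\Lambda_1\neq 0$ and $\Lambda_2^2-4\Lambda_1\Lambda_3\geq 0$, exactly as you do. Your added observation that $\sqrt{(h^2)^2}=|h^2|$ merely interchanges $\Lambda_+$ and $\Lambda_-$ when $h^2<0$ --- harmless because the theorem records both roots together --- is a point of care the paper passes over silently.
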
\
\\
Next, we will show that Equation (\ref{dispersion}) continues to hold even if $k^2 =0$. More precisely:

\begin{thm} If  $k^2 = 0$ and  $L_{FF}L_{GG} - L_{FG}^2 \neq 0$, then  $h^2=0$.
\
\\
\emph{(Note that with the Euler-Heisenberg Lagrangian (\ref{EH}), we have $L_{FF}L_{GG}-L_{FG}^2\neq0$.)}\end{thm}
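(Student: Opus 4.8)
The plan is to substitute the null condition $k^2=0$ directly into the shock-wave equation (\ref{pstates1-}), extract a single linear relation between $h_\mu$ and $h^*_\mu$, and then run a short contradiction argument showing that $h^2\neq0$ is geometrically impossible for a genuine (nonzero) disturbance.

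First I would set $k^2=0$ in (\ref{pstates1-}). The left-hand side dies, leaving $Q_\mu{}^{\alpha\nu\beta}k_\alpha k_\beta\, p_\nu = 0$. Expanding $Q$ via its definition (\ref{Q}) and using $h_\mu = F_\mu{}^{\lambda}k_\lambda$, $h^*_\mu = F^*_\mu{}^{\lambda}k_\lambda$, this collapses to
\begin{equation*}
\bigl(L_{FF}(h\cdot p) + L_{FG}(h^*\cdot p)\bigr)h_\mu + \bigl(L_{FG}(h\cdot p) + L_{GG}(h^*\cdot p)\bigr)h^*_\mu = 0, \tag{$\ast$}
\end{equation*}
where $h\cdot p := h^\nu p_\nu$, etc. The decisive simplification is that the identities (\ref{hG}) and (\ref{hF}), evaluated at $k^2=0$, reduce to $h\cdot h^* = 0$ and $(h^*)^2 = h^2$.

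Next I would contract $(\ast)$ separately with $h^\mu$ and with $h^{*\mu}$. Because $h\cdot h^*=0$ the cross terms vanish, and using $(h^*)^2=h^2$ one is left with the \emph{decoupled} pair $\bigl(L_{FF}(h\cdot p)+L_{FG}(h^*\cdot p)\bigr)h^2=0$ and $\bigl(L_{FG}(h\cdot p)+L_{GG}(h^*\cdot p)\bigr)h^2=0$. Now suppose toward a contradiction that $h^2\neq 0$. Then both coefficients vanish, so $(h\cdot p,\,h^*\cdot p)$ lies in the kernel of the Hessian $\begin{pmatrix}L_{FF}&L_{FG}\\ L_{FG}&L_{GG}\end{pmatrix}$; since its determinant $L_{FF}L_{GG}-L_{FG}^2$ is nonzero by hypothesis, this forces $h\cdot p = 0$ and $h^*\cdot p = 0$.

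The geometric endgame is where I expect the only real subtlety. Under the standing assumption $h^2\neq0$, the vectors $k_\mu,h_\mu,h^*_\mu$ are linearly independent: a relation $ak+bh+ch^*=0$ dotted with $h$ and $h^*$, using $h\cdot k=h^*\cdot k=h\cdot h^*=0$ and $h^2=(h^*)^2\neq0$, yields $b=c=0$ and then $a=0$. Put $W=\mathrm{span}(k,h,h^*)$. Because $k$ is null and $F^{\mu\nu},{F^*}^{\mu\nu}$ are skew, $k$ is orthogonal to each generator of $W$, so $k\in W\cap W^\perp$; since the ambient metric is non-degenerate, $\dim W^\perp = 4-3 = 1$, whence $W^\perp=\langle k\rangle$. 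But $p_\mu$ is orthogonal to $k$ (WLOG, as assumed after (\ref{polarize3})), to $h$, and to $h^*$, so $p_\mu\in W^\perp=\langle k\rangle$, i.e.\ $p_\mu=\lambda k_\mu$. Feeding this into the decomposition (\ref{polarize3}) gives $f_{\mu\nu}=\lambda k_\mu k_\nu-\lambda k_\nu k_\mu=0$, contradicting the shock-wave hypothesis (\ref{sw2}) that $f_{\mu\nu}\neq0$. Hence $h^2=0$. The one point demanding care is that $W$ carries a degenerate induced metric (it contains the null vector $k$), so one must \emph{not} try to build an orthonormal basis inside $W$; the clean route is the codimension count $\dim W+\dim W^\perp=4$, which holds for any subspace regardless of degeneracy, together with the observation $k\in W\cap W^\perp$.
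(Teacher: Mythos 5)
Your proof is correct, and its algebraic half coincides with the paper's: both evaluate (\ref{pstates1-}) at $k^2=0$, expand $Q$ via (\ref{Q}), and contract with $h^\mu$ and ${h^*}^\mu$ using $h^\alpha h^*_\alpha=0$ and ${h^*}^\alpha h^*_\alpha=h^2$ (from (\ref{hG}) and (\ref{hF})) to arrive at the same $2\times 2$ system in $(h\cdot p,\,h^*\cdot p)$ with the Hessian matrix of $L$. Where you genuinely diverge is the endgame, which you run in the opposite logical direction with a different geometric lemma. The paper assumes $h^2\neq 0$, argues that $h_\mu$, $h^*_\mu$ and $p_\mu$ must all be spacelike (ruling out the timelike case since two orthogonal timelike vectors cannot coexist), and then invokes the impossibility of three mutually orthogonal spacelike 1-forms all orthogonal to the null $k_\mu$; this makes $(h\cdot p,\,h^*\cdot p)$ a nontrivial solution of the system, forcing $L_{FF}L_{GG}-L_{FG}^2=0$, contradicting the hypothesis. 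You instead use the nonvanishing determinant to force the trivial solution $h\cdot p=h^*\cdot p=0$, and then the dimension count $\dim W+\dim W^\perp=4$ together with $k\in W\cap W^\perp$ to get $W^\perp=\langle k\rangle$, hence $p_\mu\parallel k_\mu$ and $f_{\mu\nu}=0$ by (\ref{polarize3}), contradicting the shock-wave condition (\ref{sw2}). Your route buys two things: it eliminates the causal-character case analysis entirely (linear independence of $k,h,h^*$ follows from $h^2\neq0$ alone, with no need to decide spacelike versus timelike, so the argument works verbatim in any nondegenerate signature), and it makes explicit the role of $f_{\mu\nu}\neq0$, which the paper uses only tacitly when it asserts that $p_\mu$ is orthogonal ``and not parallel'' to $k_\mu$. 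What the paper's version buys is brevity at the finish: it never needs your orthogonal-complement computation, reading off the vanishing determinant directly from the existence of a nontrivial solution. Both proofs ultimately rest on the same Minkowski-geometry fact, packaged as two contrapositive formulations.
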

\begin{proof}
For an indirect proof, suppose  that $k^2 =0$ and  $h^2 \neq 0$. Note that when $k^2=0$, Equations (\ref{hG}) and (\ref{hF}) give  $h^\alpha h^*_\alpha=0$ and ${h^*}^\alpha h^*_\alpha = h^2$. Note that $h_\mu$ cannot be timelike because otherwise $h^*_\mu$ would also be timelike and one cannot have two orthogonal timelike vectors in Minkowski spacetime. The only remaining possibility is that $h_\mu$ (and consequently $h^*_\mu$) is spacelike. We show that this implies $L_{FF} L_{GG} - L_{FG}^2 =0$.

To this end, note that with $k^2 = 0$, Equation (\ref{pstates1-}) becomes:
\begin{eqnarray}\label{----}
0&=&Q^{\mu\alpha\nu\beta}k_\alpha k_\beta p_\nu\nonumber\\
&=&L_{FF}h^\mu\left(h^\nu p_\nu \right) + L_{FG}\left(h^\mu\left({h^*}^\nu p_\nu \right) +{h^*}^\mu\left(h^\nu p_\nu \right)  \right) +L_{GG}{h^*}^\mu \left({h^*}^\nu p_\nu \right).
\end{eqnarray} Contracting (\ref{----}) with $h_\mu$ and ${h^*}_\mu$ respectively, and using the relations ${h^*}^\alpha h^*_\alpha=h^2\neq0$ and $h^\alpha h^*_\alpha=0$, it follows that:
\begin{eqnarray}\label{----++++}
\left\{ \begin{array}{l}
L_{FF}\left(h^\nu p_\nu \right) + L_{FG} \left({h^*}^\nu p_\nu \right)
=0\\
L_{FG}\left(h^\nu p_\nu \right) + L_{GG} \left({h^*}^\nu p_\nu \right)=0.\end{array}\right.
\end{eqnarray} We claim that $h^\nu p_\nu$ and ${h^*}^\nu p_\nu$ cannot simultaneously be zero. To establish this claim, note that since $k_\mu$ is null, and since $p_\mu$ is orthogonal (and not parallel)  to $k_\mu$, it must be that $p_\mu$ is spacelike. So if $h^\nu p_\nu$ and ${h^*}^\nu p_\nu$ were both simultaneously zero, we would have an  orthogonal basis consisting of three spacelike 1-forms $h_\mu$, $h^*_\mu$, $p_\mu$ and a null 1-form  $k_\mu$, which is not possible.

Consequently, the system (\ref{----++++}) has a nontrivial solution for $h^\nu p_\nu$ and ${h^*}^\nu p_\nu$, and so it must have   a  vanishing determinant: $L_{FF}L_{GG} - L_{FG}^2 =0$.
\end{proof}

Now, according to Theorem \ref{thm2.4},  there can be two possible values of $k^2$. This has to do with the fact that a given photon (field disturbance) is in one of two polarization states. To understand why this is true,  note that (as explained in  e.g., De Lorenci \emph{et al.}  \cite{DeLorenci}.)\index{polarization! and dispersion (birefringence)} for each possible  value of $k^2$  there corresponds a certain solution space for the unknowns $a$ and $b$ in the system (\ref{system=}), and  $p_\mu = a h_\mu + b h^*_\mu$. (Here we identify the solution space of $a$ and $b$ with the ``polarization state.")

By defining:
\begin{eqnarray}
\Omega_\pm : = -\frac{4L_{FF} + (L_F + G L_{FG})\Lambda_\pm}{4L_{FG} + GL_{GG}\Lambda_\pm},
\end{eqnarray}  Equation (\ref{dispersion}) becomes:
\begin{eqnarray}
k^2 = -4\left(\frac{L_{FF} + L_{FG} \Omega_\pm}{L_F + (L_{FG} + L_{GG}\Omega_\pm)G}\right) h^2,
\end{eqnarray} which matches Equation (24) in De Lorenci \emph{et al.}  \cite{DeLorenci}.

We note that with the Euler-Heisenberg Lagrangian (\ref{EH}), Equation (\ref{lambdapmDef}) reads:
\begin{eqnarray}\label{EHlambda}
\Lambda_\pm = \frac{224\alpha^2}{495 + 12 F \alpha^2 \mp \sqrt{18225 - 18360 F \alpha^2 +
4624 F^2\alpha^4 + 3136 G^2 \alpha^4}}.
\end{eqnarray}

\section{Effective null geodesics}\label{nullgeo}

The dispersion laws in nonlinear electrodynamics have an appealing geometric interpretation, where they are thought of as  light cone conditions \index{effective light cone} in a so-called \emph{effective geometry}. \index{effective geometry}  Using the fact that $h^2 = -F^\mu_{\phantom{\mu}\alpha} F^{\alpha\nu}k_\mu k_\nu$, we can write Equation (\ref{dispersion}) in the form:
\begin{eqnarray}\label{egdisp1}
\left(g^{\mu\nu} + \Lambda_\pm F^\mu_{\phantom{\mu}\alpha}F^{\alpha\nu}\right)k_\mu k_\nu = 0. 
\end{eqnarray} Provided that the symmetric tensor $\widetilde{g}^{\mu\nu}$ defined by:\index{cometric! effective}
\begin{eqnarray}\label{cometric effective}
\widetilde{g}^{\mu\nu}:=g^{\mu\nu} + \Lambda_\pm F^\mu_{\phantom{\mu}\alpha}F^{\alpha\nu},
\end{eqnarray} is nonsingular, an \emph{effective metric} $\widetilde{g}_{\mu\nu}$ can be defined such that $\widetilde{g}^{\mu\lambda}\widetilde{g}_{\lambda\nu}=\delta^\mu_{\phantom{\mu}\nu}$. We get the \emph{effective geometry}  \index{effective geometry}  by treating  $\widetilde{g}_{\mu\nu}$ as if it were  the metric for spacetime.

We can  think of  $k_\mu$ as being  null with respect to the effective metric \index{metric! effective} since:
\begin{eqnarray}\label{egdisp2}
\widetilde{g}^{\mu\nu}k_\mu k_\nu = 0.
\end{eqnarray}

Moreover, the integral curves of $k_\mu$ (i.e., \emph{photon worldlines}) turn out to be geodesics with respect to the effective metric:\index{effective geometry! null geodesics in} 

\begin{thm}The integral curves of $k_\mu$  are null geodesics with respect to the effective metric.
\end{thm}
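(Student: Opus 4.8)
The plan is to read $k_\mu=\partial_\mu z$ as the gradient of a phase function, to raise its index with the \emph{effective} cometric (as flagged for this section, where indices are raised with $\widetilde g^{\mu\nu}$ rather than the background metric) so that the photon worldline carries the tangent vector $k^\mu:=\widetilde g^{\mu\nu}k_\nu$, and then to show directly that $k^\mu$ is autoparallel for the Levi-Civita connection $\widetilde\nabla$ of the effective metric $\widetilde g_{\mu\nu}$. The two structural inputs are: (i) $k_\mu$ is closed, since $\partial_\nu k_\mu=\partial_\nu\partial_\mu z=\partial_\mu\partial_\nu z=\partial_\mu k_\nu$; and (ii) the dispersion law in the form $\widetilde g^{\mu\nu}k_\mu k_\nu=0$ from Equation (\ref{egdisp2}).

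First I would promote the dispersion relation from the single wavefront $\Sigma$ to an open neighborhood: the successive positions of the front are the level sets $z=\mathrm{const}$ of a solution of the eikonal (Hamilton--Jacobi) equation $\widetilde g^{\mu\nu}\partial_\mu z\,\partial_\nu z=0$, so the scalar $\widetilde g^{\mu\nu}k_\mu k_\nu$ vanishes identically and may be differentiated. Differentiating yields $(\partial_\mu\widetilde g^{\alpha\beta})k_\alpha k_\beta+2k^\beta\partial_\mu k_\beta=0$, that is $k^\beta\partial_\mu k_\beta=-\tfrac12(\partial_\mu\widetilde g^{\alpha\beta})k_\alpha k_\beta$. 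This single identity is the only analytic fact the rest of the argument needs.

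Next I would compute $k^\nu\widetilde\nabla_\nu k_\mu=k^\nu\partial_\nu k_\mu-\widetilde\Gamma^\lambda_{\nu\mu}k^\nu k_\lambda$. Closedness gives $k^\nu\partial_\nu k_\mu=k^\nu\partial_\mu k_\nu$. For the connection term, contracting the effective Christoffel symbols with $k^\nu k_\lambda$ and using $k_\lambda\widetilde g^{\lambda\rho}=k^\rho$, the $\partial_\nu\widetilde g_{\rho\mu}$ and $\partial_\rho\widetilde g_{\nu\mu}$ contributions cancel after relabelling the dummy indices $\nu\leftrightarrow\rho$, leaving $\widetilde\Gamma^\lambda_{\nu\mu}k^\nu k_\lambda=\tfrac12 k^\nu k^\rho\partial_\mu\widetilde g_{\nu\rho}$. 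Applying $\partial_\mu\widetilde g_{\nu\rho}=-\widetilde g_{\nu\alpha}\widetilde g_{\rho\beta}\partial_\mu\widetilde g^{\alpha\beta}$ converts this into $-\tfrac12 k_\alpha k_\beta\partial_\mu\widetilde g^{\alpha\beta}$, so that $k^\nu\widetilde\nabla_\nu k_\mu=k^\nu\partial_\mu k_\nu+\tfrac12 k_\alpha k_\beta\partial_\mu\widetilde g^{\alpha\beta}$. Substituting the differentiated dispersion identity of the previous step makes the two terms cancel, giving $k^\nu\widetilde\nabla_\nu k_\mu=0$; hence the worldlines are null geodesics of $\widetilde g$, and indeed \emph{affinely} parametrized.

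The main obstacle is conceptual rather than computational: it is the step that upgrades $\widetilde g^{\mu\nu}k_\mu k_\nu=0$ from the single surface $\Sigma$ to an identity on an open set, which is precisely what licenses the differentiation. I would justify it by regarding the shock front as one member of the one-parameter family of characteristic surfaces $z=\mathrm{const}$ for the eikonal equation; equivalently, by passing to the Hamiltonian $H(x,p)=\tfrac12\widetilde g^{\mu\nu}(x)p_\mu p_\nu$ and observing that the integral curves of $k^\mu$ are the spatial projections of its null bicharacteristics, for which the two Hamilton equations $\dot x^\mu=\widetilde g^{\mu\nu}k_\nu$ and $\dot k_\mu=-\tfrac12(\partial_\mu\widetilde g^{\alpha\beta})k_\alpha k_\beta$ are exactly closedness and the differentiated dispersion law. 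Since the projection of a null bicharacteristic of a metric Hamiltonian is a standard null geodesic, the result follows; everything else is routine index bookkeeping, granted the standing assumptions $L_F\neq0$ and $\widetilde g^{\mu\nu}$ nonsingular.
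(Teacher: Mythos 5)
Your proof is correct and follows essentially the same route as the paper's (which is repeated from Novello): differentiate the null condition $\widetilde g^{\mu\nu}k_\mu k_\nu=0$, use the closedness of $k_\mu=\partial_\mu z$, and use compatibility of the effective Levi-Civita connection with the effective metric to arrive at $(\widetilde\nabla_\lambda k_\mu)k^\lambda=0$; your explicit Christoffel contraction is just the unpacked form of the paper's direct appeal to $\widetilde\nabla_\lambda\widetilde g^{\mu\nu}=0$. The only substantive addition is your explicit justification, via the eikonal/bicharacteristic picture, for extending the dispersion relation off the single wavefront $\Sigma$ before differentiating --- a step the paper performs tacitly.
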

\begin{proof}
The proof is given in Novello   \cite{ABH} pp. 273 - 274.  We repeat it in order to be self-contained. The first step is to take the partial derivative  of Equation  (\ref{egdisp2}) to get:
\begin{eqnarray}\label{enpd}
2(\partial_\lambda k_\mu) k_\nu \widetilde{g}^{\mu\nu}  + k_\mu k_\nu \partial_\lambda \widetilde{g}^{\mu\nu} = 0.
\end{eqnarray}
Next, exploit the fact that the effective metric ${\widetilde g}^{\mu\nu}$, provided it is nonsingular, determines a set of torsion-free connection coefficients ${\widetilde\Gamma}^\lambda_{\phantom{\lambda}\mu\nu}$ (through the usual Christoffel formulae, i.e., Equation (\ref{Christoffel}))\index{connection! derived from effective metric} \index{Christoffel symbols (connection coefficients)! derived from effective metric} and thereby determines a covariant differential operator $\widetilde{\nabla}_\lambda$ such that:\index{covariant derivative! derived from effective metric}
\begin{eqnarray}\label{effcovconst}
\widetilde\nabla_\lambda {\widetilde g}^{\mu\nu} = \partial_\lambda \widetilde{g}^{\mu\nu} + \widetilde\Gamma^\mu_{\phantom{\mu}\alpha\lambda}\widetilde g^{\alpha\nu} + \widetilde\Gamma^\nu_{\phantom{\nu}\alpha\lambda}\widetilde g^{\alpha\mu}=0.
\end{eqnarray}
Contracting  Equation (\ref{effcovconst}) with $k_\mu k_\nu$ one  gets:
\begin{eqnarray}\label{ecccontract}
k_\mu k_\nu \partial_\lambda \widetilde{g}^{\mu\nu} = - 2k_\mu k_\nu \widetilde\Gamma^\mu_{\phantom{\mu}\alpha \lambda} \widetilde{g}^{\alpha\nu}.
\end{eqnarray} By substituting (\ref{ecccontract}) into  Equation (\ref{enpd}), it follows that:
\begin{eqnarray}\label{geo-}
{\widetilde{g}^{\mu\nu}}\left(\widetilde\nabla_\lambda k_\mu\right) k_\nu&:=&\widetilde{g}^{\mu\nu}\left(\partial_\lambda k_\mu -\widetilde\Gamma^\alpha_{\phantom{\alpha}\mu\lambda}k_\alpha\right)k_\nu \nonumber\\
&=&0.
\end{eqnarray} Since $k_\mu :=\partial_\mu z$, and since partial derivatives commute, one gets that:
\begin{eqnarray}\label{exactgradient}
\widetilde\nabla_\lambda k_\mu = \widetilde\nabla_\mu k_\lambda.
\end{eqnarray} Defining $k^\mu := \widetilde{g}^{\mu\nu}k_\nu$, and using (\ref{exactgradient}), Equation (\ref{geo-}) can be rewritten as:
\begin{eqnarray}\label{geo}
(\widetilde{\nabla}_\lambda k_\mu)k^\lambda =0,
\end{eqnarray}  which implies that we have a geodesic (e.g., Poisson  \cite{Poisson} p. 61).
\end{proof}

Note that a given effective metric is only  defined up to conformal equivalence because all that we have are  \emph{null} geodesics. More precisely, an effective geometry is an equivalence class of conformally equivalent Lorentzian metrics. However, we will only work with one representative at a time, choosing whichever conformal factor suits our fancy.

In the case of  Euler-Heisenberg theory, the  effective cometric (\ref{cometric effective}) becomes:
\begin{eqnarray}\label{cometric effective EH}
\widetilde{g}^{\mu\nu}=g^{\mu\nu} + \frac{224\alpha^2 F^\mu_{\phantom{\mu}\lambda}F^{\lambda\nu}}{495 + 12 F \alpha^2 \mp \sqrt{18225 - 18360 F \alpha^2 +
4624 F^2\alpha^4 + 3136 G^2 \alpha^4}}.
\end{eqnarray}

\section{The relationship between the effective metric and the stress-energy tensor}

For an electromagnetic field governed by a Lagrangian  of the type $L=L(F,G)$, we found that the stress-energy tensor is (\ref{SETensor1}): \index{stress-energy tensor} 
\begin{eqnarray}\label{SETensor2}
T_{\mu\nu} = -4L_F F_\mu^{\phantom{\mu}\alpha}F_{\alpha\nu} - (L-GL_G)g_{\mu\nu}.
\end{eqnarray} 
Raising the indices of Equation (\ref{SETensor2}), and rearranging,  we get that  the \emph{substress} tensor $F^\mu_{\phantom{\mu}\alpha}F^{\alpha\nu}$ is (assuming $L_F\neq 0$): \index{substress tensor}
\begin{eqnarray}\label{SETensor4}
F^\mu_{\phantom{\mu}\alpha}F^{\alpha\nu} =- \frac{1}{4L_F}\left(T^{\mu\nu} + (L-GL_G)g^{\mu\nu}\right).
\end{eqnarray}
Using Equation (\ref{SETensor4}), we can rewrite Equation (\ref{cometric effective}) as:  \index{effective geometry! and stress-energy tensor}  \index{stress-energy tensor! and effective geometry} 
\begin{eqnarray}\label{effcometricSET}
\widetilde{g}^{\mu\nu}=\left(1 + \frac{\Lambda_\pm(GL_G - L)}{4L_F}\right)g^{\mu\nu} -\frac{ \Lambda_\pm}{4L_F}T^{\mu\nu}
\end{eqnarray} Equation (\ref{effcometricSET}) is analogous to Einstein's field equation from  general relativity in that it relates the effective geometry with the stress-energy tensor.

We note that since the effective metric has this direct dependence on the stress-energy tensor, de Oliveira and Perez Bergliaffa  \cite{deOliveira} have suggested that the Segr\`e classification of the stress-energy tensor yields a simple  classification scheme for  effective geometries in nonlinear electrodynamics.  \index{effective geometry! classification scheme}

\chapter{Plane waves
}\label{planewaves}

The main purpose of this chapter is to investigate  the effective geometry of  a circularly polarized   plane wave.   Our findings independently confirm  those of a  2002  paper by Denisov and Denisova \cite{DenisovDenisova}. The effective geometry of a plane wave is conformally equivalent to Minkowski spacetime  but it is distinguishable from the Minkowski background because the effective null geodesics  are not necessarily null in the background.

Moreover, we show that, as viewed from the background coordinates,  shock  disturbances  in a circularly polarized monochromatic plane wave field propagate with a directionally dependent index of refraction which can be easily computed. 
 Our result for the index of refraction confirms, to lowest order,    the   calculations performed by   
Affleck \cite{Affleck}   in 1988. Lorentz invariance is manifestly preserved throughout in our approach. This is in contrast  to Affleck's non-invariant approximation. 

We close the chapter  with a discussion anticipating the possibility of  optical black holes in vacuum. 

\section{Effective geometry of null fields}\label{egonf}
Recall that a null field \index{null fields} is one such that $F^2 + G^2 \equiv 0$. In the case of a null field, the nonlinear field equations (\ref{EOM2}) reduce  exactly to Maxwell's equations in the absence of charges and currents, provided that $L_G  = 0$ when $F^2 + G^2 =0$. Note that the Euler-Heisenberg Lagrangian (\ref{EH}) indeed satisfies this latter condition. Consequently,  a null field is an exact solution to the  Euler-Heisenberg equations if and only if it is an exact solution to Maxwell's equations.  

Let us consider the effective geometry corresponding to a general null field. 
 
 In Section \ref{polarization}, we found that  field disturbances (shock waves)   disperse according to:
\begin{eqnarray}\label{dispCH2}
\left(\left(1 + \frac{\Lambda_\pm(GL_G - L)}{4L_F}\right)g^{\mu\nu} -\frac{ \Lambda_\pm}{4L_F}T^{\mu\nu}\right)k_\mu k_\nu = 0, 
\end{eqnarray} where $\Lambda_\pm$ is given by Equation (\ref{lambdapmDef}). The choice of $\pm$ depends on the polarization of the disturbance.  Accordingly, we will refer to the polarization modes as being either $+$ or $-$ modes.

Specializing to null fields, Equation (\ref{dispCH2})  leads to an effective cometric given by:
\begin{eqnarray}\label{ecometricNF}
\widetilde{g}^{\mu\nu} = g^{\mu\nu} + P_\pm T^{\mu\nu},
\end{eqnarray} where:
\begin{eqnarray}\label{P_pm def}
P_\pm= -\frac{ \Lambda_\pm}{4L_F + \Lambda_\pm(GL_G - L)}\Big|_{F^2 + G^2 =0}.
\end{eqnarray}
For the Euler-Heisenberg Lagrangian, we get that $P_\pm=  (22\pm6)\alpha^2/45$  (cf. De Lorenci \emph{et al.} \cite{DeLorenci}).

Locally, the stress-energy tensor for a null  electromagnetic field satisfying the dominant energy condition (see e.g., Hawking and Ellis \cite{HawkingEllis} p. 91) can be expressed as \cite{HallNegm}: \index{stress-energy tensor! of null field}
\begin{eqnarray}
T^{\mu\nu} =l^\mu l^\nu,
\end{eqnarray}  where $l^\mu$ is a null vector. 
Thus  effective cometrics corresponding to null fields can be expressed by equations of the form:\index{null fields! effective geometry of} \index{effective geometry! of null fields}
\begin{eqnarray}\label{effnull}
\widetilde{g}^{\mu\nu} = g^{\mu\nu} + P_\pm l^\mu l^\nu.
\end{eqnarray}

\section{Effective geometry of  plane waves}

Taking the usual $t, x, y, z$ coordinates for Minkowski   spacetime. The field tensor: 
\begin{eqnarray}\label{Ftensor}
F_{\mu\nu}&=&\left(\begin{array}{cccc}
F_{tt}&F_{tx} &F_{ty}& F_{tz}\\
F_{xt}&F_{xx} &F_{xy} &F_{xz}\\
F_{yt} & F_{yx} &F_{yy}&F_{yz}\\
F_{zt} & F_{zx}&F_{zy}&F_{zz}\\
\end{array}\right)\nonumber\\
&=&\left(\begin{array}{cccc}
0& A\cos\left(\omega(t-z)\right) &B\sin\left(\omega(t-z)\right) & 0\\
-A\cos\left(\omega(t-z)\right) &0 &0 &A\cos\left(\omega(t-z)\right)\\
-B\sin\left(\omega(t-z)\right) & 0  &0&B\sin\left(\omega(t-z)\right)\\
0 & -A\cos\left(\omega(t-z)\right)&-B\sin\left(\omega(t-z)\right)  &0\\
\end{array}\right)\nonumber\\
\end{eqnarray} describes a monochromatic plane wave propagating along the $+z$ direction, having  frequency $\omega$, and elliptical polarization with fixed amplitudes $A$ and $B$. The stress-energy tensor corresponding to this field is:\index{stress-energy tensor! of plane waves! elliptical polarization}\index{plane wave! stress-energy tensor of! elliptical polarization}
 \begin{eqnarray}
T^{\mu\nu}&=&\left(\begin{array}{cccc}
T^{tt}&T^{tx} &T^{ty}& T^{tz}\\
T^{xt} & T^{xx} &T^{xy}&T^{xz}\\
T^{yt} & T^{yx} &T^{yy}&T^{yz}\\
T^{zt} &T^{zx}& T^{zy}& T^{zz}\\
\end{array}\right)\nonumber\\
&=&\left(\begin{array}{cccc}
A^2\cos^2(\omega(t-z)) + B^2\sin^2(\omega(t-z))& 0 &0 & A^2\cos^2(\omega(t-z)) + B^2\sin^2(\omega(t-z))\\
0 &0 &0 &0\\
0& 0  &0&0\\
A^2\cos^2(\omega(t-z)) + B^2\sin^2(\omega(t-z)) & 0&0&A^2\cos^2(\omega(t-z)) + B^2\sin^2(\omega(t-z))\\
\end{array}\right)\nonumber\\
&=&l^\mu l^\nu,
\end{eqnarray} where:
\begin{eqnarray}
l^\mu &=& \left(\begin{array}{c}
l^t\\
l^x \\
l^y\\
l^z\\
\end{array}\right)
\nonumber\\
&=& \left(\begin{array}{c}
\sqrt{A^2\cos^2(\omega(t-z)) + B^2\sin^2(\omega(t-z)) }\\
0 \\
0\\
\sqrt{A^2\cos^2(\omega(t-z)) + B^2\sin^2(\omega(t-z))} \\
\end{array}\right).
\end{eqnarray}

Note that in the case of circular polarization ($A=B$), the stress-energy tensor takes on a particularly simple form; it becomes constant: \index{stress-energy tensor! of plane waves! circular polarization} \index{plane wave! stress-energy tensor of! circular polarization}

\begin{eqnarray}\label{circPWsetensor}
T^{\mu\nu} 
&=&\left(\begin{array}{cccc}
A^2& 0 & 0  & A^2\\
0 &0 &0 & 0\\
0& 0  &0&0\\
A^2& 0&0  &A^2\\
\end{array}\right).
\end{eqnarray} 
 
 Since the stress-energy tensor (\ref{circPWsetensor}) is constant, the corresponding effective geometry (\ref{ecometricNF}) must be flat. In the case of an arbitrary elliptically polarized plane wave, the stress-energy tensor is no longer constant but one can nevertheless calculate that the  Riemann curvature of the effective geometry still vanishes identically. \index{effective geometry! of plane waves}  This confirms results first published by Denisov and Denisova in 2002, who found that the effective geometries corresponding to monochromatic plane waves are flat using the Euler-Heisenberg Lagrangian (\ref{EH}) \cite{DenisovDenisova}. \index{plane wave! effective geometry of} More generally, we note that if we have a Lagrangian of the form $L=L(F,G)$ and if $L_G =0$ when $F^2 + G^2 = 0$, then the resulting field theory will have plane waves as exact solutions and these  will yield   flat effective geometries. \index{effective geometry! of plane waves} \index{plane wave! effective geometry of}  
 
 Although the effective geometry of the plane wave is flat, the effective null geodesics are not necessarily null with respect to the flat background metric (the  precise manner in which the effective light cones embed in the background is studied in  Section \ref{lightconePW}). 
 Thus we see that effective geometries which are flat  can nevertheless be distinguishable from the flat background spacetime. This phenomenon is not limited to plane waves or even to null fields. Just to give a concrete example, a constant uniform electric field (e.g., in a rest frame, $(A_t,A_x,A_y,A_z )= (0,0,0,Et)$ with $E$ constant),  satisfies  the Euler-Heisenberg equations   (\ref{EHFE}) and its effective geometry is a copy of Minkowksi spacetime. Indeed if the stress energy tensor of a field is constant then its effective geometry must be flat.  Considering the  symmetry of such a situation, this  should be expected.

\section{Refraction in a circularly polarized plane wave}\label{section3.3}

The purpose of this section is to calculate the index of refraction for  field disturbances (low-intensity external ``photons")  propagating in a circularly polarized plane wave field. 

Note that the stress-energy tensor of a  null field given by two constant perpendicularly crossed electric and magnetic fields\index{crossed fields} is the  \index{crossed fields! stress-energy tensor of} same as that  of the  circularly polarized plane wave (\ref{circPWsetensor}). \index{stress-energy tensor! of crossed fields} Thus, the effective geometry of  crossed null fields is the same as that of the circularly polarized plane wave.

Using the stress-energy tensor (\ref{circPWsetensor}), together with Equation (\ref{ecometricNF}), we calculate that the effective metrics corresponding to a circularly polarized plane wave (propagating in the $+z$ direction, with amplitude $A$) are:\index{effective geometry! of plane waves! with circular polarization}
\begin{eqnarray}\label{PWeg}
ds^2 &=& (1-P_\pm A^2)dt^2 + 2P_\pm A^2 dt dz
 - dx^2 - dy^2 - (1+P_\pm A^2) dz^2.
\end{eqnarray} Since the Christoffel symbols for the effective metric (\ref{PWeg})  vanish identically, the   null geodesics in the effective geometry are    simply rectilinear curves  in the coordinates $t,x,y,z$.

Consider an effective null geodesic that passes through the origin of the coordinates $t,x,y,z$. The corresponding  projection (i.e., light ray)\index{light ray} for  this geodesic in the three-dimensional $x,y,z$ space  issues from the origin and intersects the unit sphere at spherical coordinates $\theta$ by $\varphi$ (see Figure \ref{SphericalCoordLightRay}). The angle  $\varphi$ measures the angle that the ray   makes with respect to the $+z$ axis, as measured in the $x,y,z$ system.

\begin{figure}[h]
\center
    \includegraphics[height=4.5in]{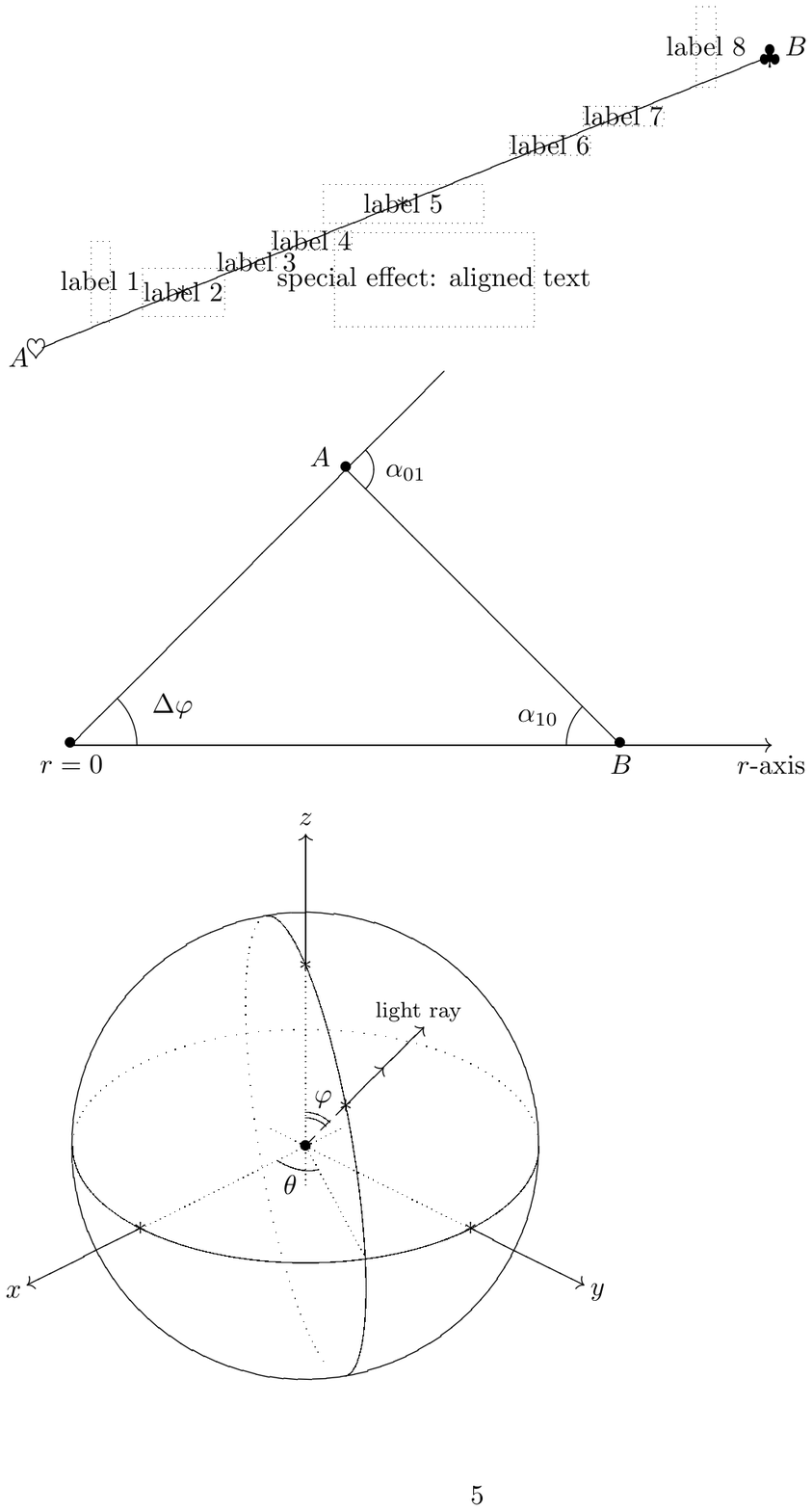}

    \caption[A light ray issuing from the origin]{A light ray\index{light ray} issuing from the origin intersects the unit sphere at $\theta$ by $\varphi$.}

    \label{SphericalCoordLightRay}
\end{figure}

Using the standard conversion formulae between rectilinear and spherical coordinates ($x = r\sin\varphi \cos\theta$, $y=r\sin\varphi \sin\theta$, $z=r \cos\varphi$),  Equation (\ref{PWeg}) implies that, along an effective null geodesic through the origin:
\begin{eqnarray}
\left(1 + P_\pm A^2 \cos^2\varphi\right)\left(\frac{dz}{dt}\right)^2
-2P_\pm A^2 \cos^2\varphi \left(\frac{dz}{dt}\right)
-(1-P_\pm A^2)\cos^2\varphi =0.
\end{eqnarray}Thus:
\begin{eqnarray}
\frac{dz}{dt} = \frac{P_\pm A^2 \cos^2\varphi + \cos\varphi\sqrt{1-P_\pm A^2\sin^2\varphi}}{1 + P_\pm A^2\cos^2\varphi}.
\end{eqnarray} 
Consequently, as measured in  the   $t,x,y,z$ coordinates with respect to the background metric, discontinuities in the plane wave field propagate with a $\varphi$-dependent velocity:
\begin{eqnarray}\label{vphi1}
v(\varphi)&=&\sqrt{\frac{dx^2 + dy^2 + dz^2}{dt^2}}\nonumber\\
&=& \frac{P_\pm A^2 \cos\varphi +\sqrt{1-P_\pm A^2\sin^2\varphi}}{1 + P_\pm A^2\cos^2\varphi}.
\end{eqnarray}
So  the plane wave has an index of refraction, $n(\varphi) = 1/v(\varphi)$: \index{index of refraction}
\begin{eqnarray}\label{PWrefraction}
n(\varphi)&=&\frac{1 + P_\pm A^2\cos^2\varphi}{P_\pm A^2 \cos\varphi +\sqrt{1-P_\pm A^2\sin^2\varphi}}.
\end{eqnarray} Expressing Equation (\ref{PWrefraction}) as a power series in $A$,  one gets:
\begin{eqnarray}\label{PWrefractionTaylor}
n(\varphi)&=& 1 + 2P_\pm A^2 \sin^4\left(\frac{\varphi}{2}\right) +P_\pm^2O(A^4).
\end{eqnarray} Affleck \cite{Affleck} approximated a formula for  $n(\varphi)$ using  methods different from ours. The formula which he  obtained (correcting for typos) is nothing but  the  first two nonzero terms in the expansion (\ref{PWrefractionTaylor}). (Note that Affleck's formula for $n(\varphi)$ was apparently  published with a small  typo: in his Equation (14), the factor $(e E_0/m^2)$ should be  $(e E_0/m^2)^2$.)

\section{Visualizing effective light cones}\label{lightconePW}
The purpose of this section is to describe how the light cone structure of the effective geometry given by Equation (\ref{PWeg}) embeds in the background spacetime.   

From Equation (\ref{vphi1}) we get that, as seen in the background, the plane wave induces a drag effect for field disturbances. Low-intensity photons  that probe the field along the direction of the plane wave (the direction given by the so-called Poynting vector) will continue to travel  at the usual speed of light: $v(0)= 1$. Along  any other direction, the field disturbances  are made to travel at less than the speed of light.   This  drag effect  is most pronounced for $\varphi = \pi$,  the direction exactly opposite to the Poynting vector.

According to Equation (\ref{vphi1}), when $P_\pm A^2\geq1$,  field disturbances  are confined to propagate only in directions such that $\csc^2\varphi \leq P_\pm A^{2}$.

If a shock wave issues from the origin, then we can calculate  the location of the  wavefront in $x,y,z$ space after a unit $t$-time  by plotting Equation (\ref{vphi1}) in the $xz$-plane using $(v,\varphi)$-polar coordinates (i.e., $z=v\cos\varphi$ and $x=v\sin\varphi$). One can then rotate this graph about the $z$-axis ($\varphi=0$) in order to visualize the wavefront as a surface of revolution (in fact, the surfaces in this case are ellipsoids). 
Plots of (\ref{vphi1}), representing the range of qualitative behaviors, are given   in  Figure \ref{PWegFIG}. Note  that disturbances   propagating against  the direction of the field experience a kind of drag effect. The case $A=0$ (solid red in the figure) is a limiting case in which the plane wave has vanishing intensity. The standard propagation law for field discontinuities (propagation at the speed of light) is recovered in this limiting case.
\begin{figure}[h]
\center
    \includegraphics[height=4.5in]{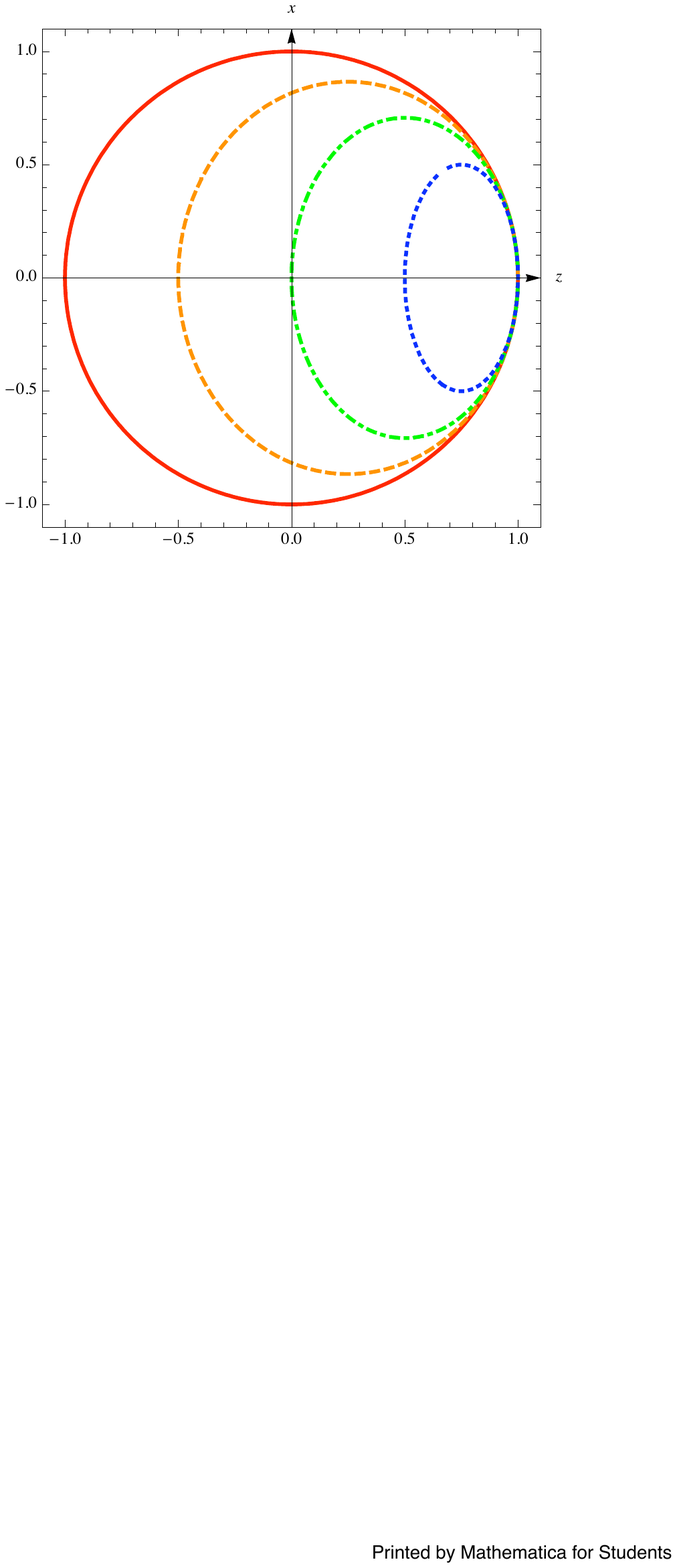}

    \caption[Shock waves in  plane wave fields]{
  Shock wave fronts in circularly polarized plane wave fields.  Four separate cases are shown simultaneously for comparison purposes. In a monochromatic  circularly polarized plane wave of amplitude $A$ propagating in the $+z$ direction,  a shock wave  initiated at the origin is allowed to propagate for a unit $t$-time. The resulting shock fronts are plotted according to Equation (\ref{vphi1}) for four   representative cases: (1) $P_\pm A = 0$ (solid red), (2) $0< P_\pm A<1$ (dashed orange, plotted using $P_\pm A = 1/3$), (3) $P_\pm A = 1$ (dot-dashed green), and (4) $P_\pm A >1$ (dotted blue,  plotted using $P_\pm A = 3$). }

    \label{PWegFIG}
\end{figure}

It is straightforward to verify that 
the polar plots of Equation (\ref{vphi1}) are genuine ellipses with eccentricity:
\begin{eqnarray}
\epsilon = \sqrt{\frac{P_\pm A^2}{1+ P_\pm A^2}}.
\end{eqnarray} Similar observations were made by Boillat, who was however interested in the Born-Infeld  rather than the Euler-Heisenberg Lagrangian. Note  Equation (2.39) in his  1970 paper \cite{Boillat1970}.

\begin{figure}[h]
\center
    \includegraphics[height=4.5in]{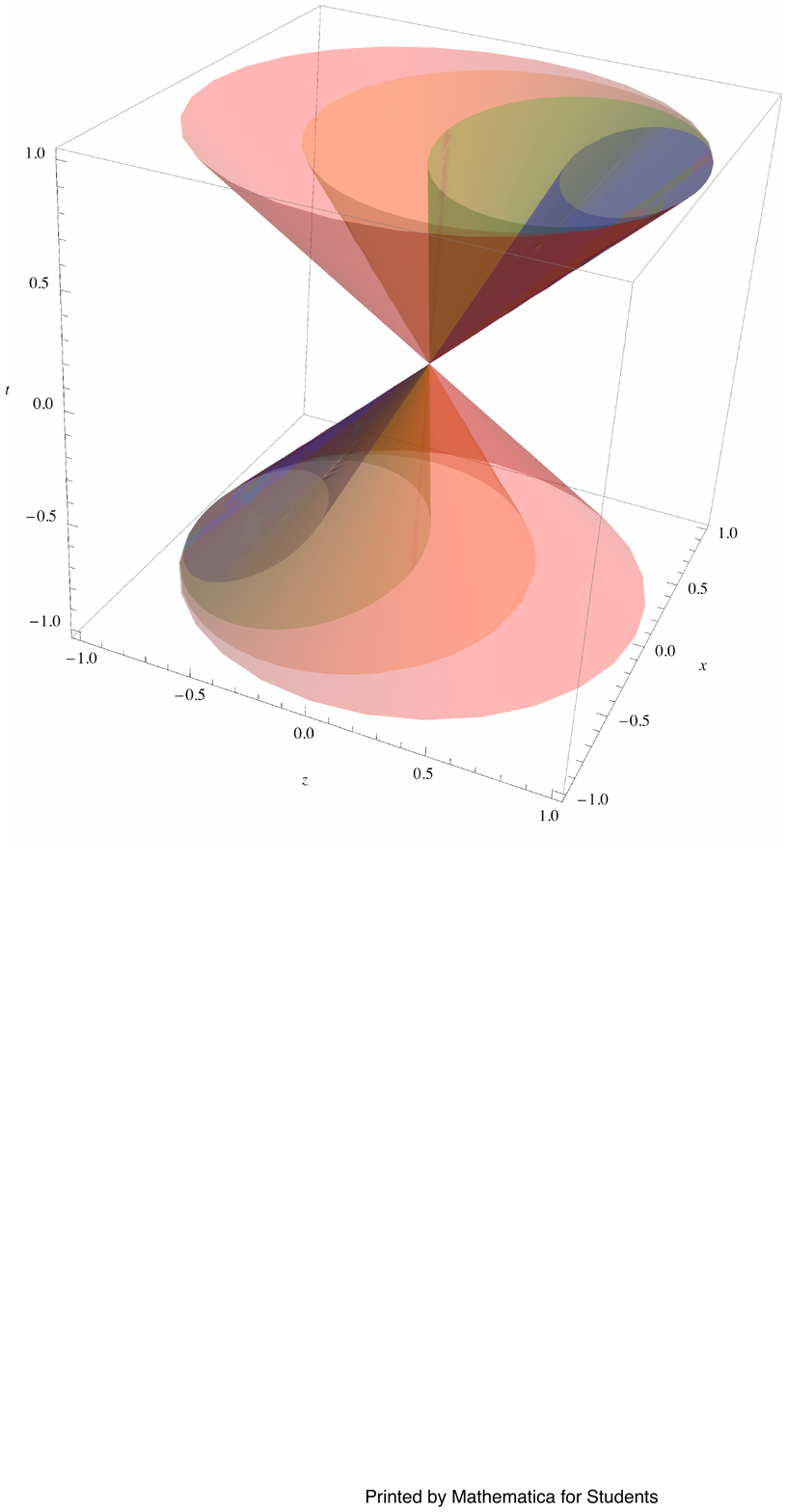}

    \caption[Effective light cones in plane wave fields]{Effective light cones\index{effective light cone! in plane wave field} in circularly polarized plane wave fields. The transparent red, orange, green, and blue cones (outermost to innermost) correspond to the cases plotted in  Figure \ref{PWegFIG}. In fact, Figure \ref{PWegFIG} is just  the cross section  through the plane $t=1$. The transparent red cone (outermost) corresponds to a standard  light cone in  the background Minkowksi spacetime. The transparent orange, green, and blue cones show how the effective  light cones  embed in the background  for plane waves of increasingly intense amplitude.
}

    \label{pwLightCones}
\end{figure}

Figure \ref{pwLightCones} visualizes how the effective light cones of   (\ref{PWeg}) embed  in the background geometry. The $y$-dimension is suppressed.  Again, four representative cases (the same cases used in Figure \ref{PWegFIG}) are presented simultaneously for comparison purposes.  We note that de Oliveira Costa and Perez Bergliaffa \cite{deOliveira} have classified effective light cones according to the Segr\'e type of the stress-energy tensor for  the  field.

Due to birefringence, there are actually  two different effective light cones for a given field configuration. We did not try to depict both of them in the cases shown in   Figures \ref{PWegFIG} and \ref{pwLightCones} in order to  avoid  unnecessary clutter. Note however that the difference between the light cones of the two polarization states  becomes more pronounced at higher intensities.

As we see from these calculations, the effective light cones are tilted in the direction given by the Poynting vector of the field. Based on this observation, we make the following conjecture: 

\begin{conj}\label{conjecture1}
For any given carrier field, the corresponding effective light cones tilt into the direction of the Poynting vector of the field. Moreover, the higher the field intensity, the more pronounced  the tilt. 
\end{conj}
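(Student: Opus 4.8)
The plan is to reduce the geometric notion of ``tilt'' to a single algebraic quantity --- the shift vector of the effective metric --- and then to read that quantity directly off Equation~(\ref{effcometricSET}). I work in a background inertial frame, so that $g^{\mu\nu}=\mathrm{diag}(1,-1,-1,-1)$ and $g^{0i}=0$, and write $S^{i}:=T^{0i}$ for the Poynting (energy-flux) components of the carrier field. A lapse--shift (ADM-type) decomposition of the nonsingular effective metric $\widetilde g_{\mu\nu}$ expresses its future light cone at a point as a cone that is symmetric about the background time axis exactly when the effective shift $\widetilde N^{i}$ vanishes, and that leans in the spatial direction of $\widetilde N^{i}$ otherwise; this $\widetilde N^{i}$ is precisely the displacement of the centre of the cone's cross-section with a surface $\{t=\mathrm{const}\}$, so it is the correct invariant formalisation of ``tilt.'' The standard relation between a metric and its inverse makes the shift proportional to $\widetilde g^{0i}$, with a proportionality factor of definite sign once the $+---$ convention is fixed, so the tilt lies along the line determined by $\widetilde g^{0i}$.

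First I would compute $\widetilde g^{0i}$ from Equation~(\ref{effcometricSET}). Since $g^{0i}=0$, only the stress-energy term contributes, giving
\begin{eqnarray}
\widetilde g^{0i}=-\frac{\Lambda_\pm}{4L_F}\,T^{0i}=\mathcal{B}\,S^{i},\qquad \mathcal{B}:=-\frac{\Lambda_\pm}{4L_F},
\end{eqnarray}
so that the tilt points along $\vec S$, with \emph{no} admixture of the energy density $T^{00}$ or the spatial stresses $T^{ij}$ entering the \emph{direction} (these affect only the lapse and the induced spatial metric). The next step is to fix the orientation: for the Euler--Heisenberg Lagrangian~(\ref{EH}) one has $L_F<0$, and Equation~(\ref{EHlambda}) gives $\Lambda_\pm>0$ in the physical regime, whence $\mathcal{B}>0$; this is what forces the cone to lean \emph{into} $\vec S$ rather than against it. As a consistency check I would recover the closed-form plane-wave result: the cross-section of~(\ref{PWeg}) is the ellipse traced by~(\ref{vphi1}), whose centre sits at $P_\pm A^{2}/(1+P_\pm A^{2})>0$ along $+z$, the Poynting direction, in agreement with $\widetilde N^{i}\propto\mathcal{B}S^{i}$.

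For the monotonicity clause I would track the magnitude of the shift, which scales as $\mathcal{B}\,|\vec S|$ times a positive lapse factor, as a function of the field intensity; to leading order $|\vec S|$ grows with the intensity while $\mathcal{B}$ stays $O(1)$, and the plane-wave centre $P_\pm A^{2}/(1+P_\pm A^{2})$ is manifestly increasing in $A^{2}$. The main obstacle is twofold. First, one must supply a frame-independent definition of ``intensity'' and then prove global --- not merely leading-order --- monotonicity of the tilt magnitude, which is delicate because $\mathcal{B}$ and the lapse depend on the invariants $F$ and $G$ and hence themselves vary with the field strength; controlling this plausibly requires the weak-field window~(\ref{physicalrestrictions1}) together with a monotonicity estimate for $\Lambda_\pm/L_F$. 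Second, and more seriously, for a \emph{general} carrier field one must guarantee that inverting the cometric to obtain $\widetilde g_{\mu\nu}$ does not rotate the shift off the Poynting line --- that is, that the spatial block of the cometric cannot conspire with a degenerate or null direction of $T^{\mu\nu}$ to displace the cone's centre away from $\vec S$. Establishing this requires invoking the algebraic (Segr\'e) structure of the electromagnetic stress-energy tensor recalled after Equation~(\ref{effcometricSET}), and is the step I expect to carry the real analytic weight of the proof.
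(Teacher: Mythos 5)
First, note what you are being measured against: the statement is Conjecture \ref{conjecture1}, and the paper never proves it. Its only support there is the explicit plane-wave computation (Equations (\ref{PWeg})--(\ref{vphi1}), Figures \ref{PWegFIG} and \ref{pwLightCones}) and, later, the cylindrical examples of Chapter \ref{chapter4}. So your proposal is not an alternative route to a proof the paper already has; it is an attempt to prove something the paper leaves open. Its central reduction is sound and is in fact sharper than anything in the paper: defining the tilt as the displacement of the centre of the effective null cone's cross-section in coordinate-velocity space, block inversion of $\widetilde g_{\mu\nu}$ gives the exact, convention-free formula
\begin{eqnarray}
v^i_{\mathrm{centre}} \;=\; \frac{\widetilde g^{0i}}{\widetilde g^{00}},
\end{eqnarray}
and Equation (\ref{effcometricSET}) with $g^{0i}=0$ gives $\widetilde g^{0i} = -\frac{\Lambda_\pm}{4L_F}T^{0i}$; hence the tilt lies along the Poynting line with the claimed orientation exactly when $-\Lambda_\pm/(4L_F)>0$ and $\widetilde g^{00}>0$. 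Notice that this same formula disposes of what you call your ``second, more serious'' obstacle: the centre is determined by the cometric row $\widetilde g^{0\mu}$ alone, so the spatial block cannot rotate it off the Poynting line, and no Segr\`e analysis is needed for the direction claim. That worry is a red herring, and it is internally inconsistent with the proportionality you yourself invoked earlier.

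The genuine gaps lie elsewhere, and they are why this remains a conjecture. (1) Your sign inputs are regime-dependent, not generic. From (\ref{EH}), $L_F = -\frac{1}{4} + \frac{\alpha^2 F}{45}$ changes sign at $F = 45/(4\alpha^2)$; and from (\ref{EHlambda}) with $G=0$, one polarization state has $1/\Lambda = (360 + 80F\alpha^2)/(224\alpha^2)$ for $68F\alpha^2<135$, which vanishes at $F = -9/(2\alpha^2)$ and is negative beyond. This is precisely why the paper's own theorems in Section \ref{proofmaintheorem} must impose $E^2 < 9/(4\alpha^2)$, and why in Appendix \ref{Appendix:Key2} one state has $1/\Lambda|_{r=\infty}=0$. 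So your argument proves the direction clause only in a weak-field window, whereas the conjecture is asserted for ``any given carrier field''---and the fields the paper actually cares about (the black-hole solutions of Section \ref{static}) sit at critical strength, where $\Lambda_\pm>0$, $L_F<0$, and $\widetilde g^{00}>0$ must be re-derived case by case rather than assumed. (2) The monotonicity clause is not proved at all: you verify the plane-wave centre $P_\pm A^2/(1+P_\pm A^2)$ and assert leading-order scaling elsewhere, but since $\Lambda_\pm$, $L_F$, and the lapse all depend on $F$ and $G$, and since ``intensity'' is not given a frame-independent meaning for a general carrier field, nothing general follows. (For null fields your formula does yield $|v_{\mathrm{centre}}| = P_\pm T^{00}/(1+P_\pm T^{00})$, manifestly increasing in $T^{00}$; that special case is worth recording, but it is where the easy part ends.) In short: you have a clean weak-field proof of the first clause, which already exceeds what the paper establishes, but not a proof of Conjecture \ref{conjecture1}.
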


This conjecture suggests that an optical black hole can form if one contrives to create a field, with an inwardly-directed Poynting vector, intense enough to tilt  the  effective light cones all the way to form a trapped surface.

\section{Distortion of clock readings}

Consider an observer at rest at the origin in $t,x,y,z$ coordinates. Surround the observer with clocks, so that in the coordinates these clocks form a sphere $S$ of unit radius. Let these clocks be set in such a way that if light travels along null geodesics in the background geometry, then the clocks appear to the observer as if precisely synchronized. 

Now assume that the observer is immersed in a plane wave of amplitude $A$ such that the effective geometries   given by Equation (\ref{PWeg}) pertain.  We stipulate that the observer sees objects only by way of small  disturbances in the plane wave field;  the  ``photons" seen by  the observer follow null geodesics in the effective geometries (\ref{PWeg}). 
The readings on the stationary clocks at $S$ will no longer appear to be synchronized  since the effective null geodesics propagate anisotropically with respect to the  $t,x,y,z$ coordinates. Moreover, due to birefringence, two clock readings may be seen at once. 
An additional consequence of such  birefringence effects would be that moving bodies could appear to have  double images.

Since the field disturbances that travel in the direction $\varphi = 0$ travel at the usual speed of light, the apparent reading of a clock on $S$ as viewed in the direction $\varphi = \pi$ will not be affected by the effective geometry. By contrast, the  other clock readings will  be affected. One can show that the  difference in  readings $\Delta\tau$ between a clock viewed at angle $\varphi$, and the unaffected clock at $\varphi = \pi$, is given by the formula:
\begin{eqnarray}
\Delta\tau &=& 1-n(\pi - \varphi)\nonumber\\
&=&1+\frac{1 + P_\pm A^2\cos^2\varphi}{P_\pm A^2 \cos\varphi -\sqrt{1-P_\pm A^2\sin^2\varphi}}.
\end{eqnarray} Since there are actually two distinct values of $P_\pm$ corresponding to birefringence, there are double images. If one sees both polarization states, then two clock readings can be seen. In the critical case $P_\pm A^2 =1$, the clock reading viewed through $\varphi = 0$ is infinitely delayed (and not visible). If $P_\pm A^2 \geq 1$, then the only visible   clocks  are at angles $\varphi$ such that $\csc^2\varphi > P_\pm A^{2}$. In a special range of cases where $P_-A^2 <1  \leq P_+ A^2$, the +  polarization modes cannot be seen  at all when viewed through  $\varphi$-angles such that  $\csc^2\varphi \leq P_\pm A^{2}$.

Since the  wave fronts of field disturbances are ellipsoidal in the $t,x,y,z$ coordinates, one might consider reconfiguring $S$ into an ellipsoidal arrangement so that the clocks will appear to be synchronized to the observer (provided that the field intensity is kept small enough that all points on $S$ are visible to the observer). However, due to birefringence, one would  only be able to manage the appearance of the  clocks as viewed through one polarization mode at a time.

\section{Hints of an optical black hole?}\label{Hints of an optical black hole} 
In the present chapter, we have found that  effective light cones in plane wave fields are tilted towards the direction of the Poynting vector of the plane wave  (Figure \ref{pwLightCones}). Field disturbances propagating in the direction of the Poynting travel at the usual speed of light, but in  other directions there is a drag effect. This drag effect is most pronounced for  disturbances that propagate in the direction exactly opposite to the Poynting vector. The speed of these field disturbances, as measured with respect to the background coordinates, is:
\begin{eqnarray}\label{eq1,3.6}
v(\pi) = \frac{1-P_\pm A^2}{1+P_\pm A^2},
\end{eqnarray}  where $A$ is the intensity of the plane wave.

Comparing  effective light cones for plane wave fields of higher and higher intensities as in Figure  \ref{pwLightCones},  one finds that the light cones become progressively more tilted. A similar phenomenon occurs in the geometry of gravitational black holes,\index{black holes} where light cones become progressively more and more tilted as one approaches the event horizon. At the event horizon, the light cones are so tilted that information cannot flow from the event horizon to the outside world.\index{event horizon} We suggest the notion   that an optical black hole would form  if one could increase the intensity of an electromagnetic wave by a sufficiently large amount in a localized region of space.\index{black holes}

Though such a field will no longer correspond to a true plane wave, we propose the following Gedankenexperiment. Consider an electromagnetic wave that is focusing to a point. Let us consider a spherical point-like implosion in which the intensity of the wave is assumed to follow the inverse square law.  Assuming that the wave front is locally like a plane wave, field disturbances that propagate radially outwards would travel at a coordinate  speed:
\begin{eqnarray}\label{eq2,3.6}
v=\frac{dr}{dt}=\frac{r^4-P_\pm A^2}{r^4 + P_\pm A^2}.
\end{eqnarray} Here, spherical coordinates $(t,r,\theta,\varphi)$ are implied. Equation (\ref{eq2,3.6}) is calculated by replacing $A$ with $A/r^2$ in Equation (\ref{eq1,3.6}) - in order to take the inverse square law into account.

Equation (\ref{eq2,3.6}) suggests that a spherical event horizon will form at  a critical radius  $P_\pm^{1/4}A^{1/2}$. That is, within the critical radius, ``outgoing" disturbances are not able to escape to infinity.

However, it is not possible to have a nontrivial spherically symmetric electromagnetic wave. The fundamental reason for this is that  the polarization vectors due to such a  field configuration would introduce a continuous nowhere-vanishing vector field tangent to the 2-sphere, thereby contradicting the well-known fact that the 2-sphere  is not parallelizable.

For this reason, we turn our attention to other configurations. Since the cylinder $S^1\times \mathbb{R}$ is parallelizable, the case of cylindrical collapse can be considered. The next chapter will look into this. 

As a tentative calculation for the cylindrical case, replacing $A$ with $A/r$ into Equation (\ref{eq1,3.6}) - in order to take the inverse distance law for cylindrical radiation into account - we have:
\begin{eqnarray}\label{outcylindricalapprox1}
v=\frac{dr}{dt}= \frac{r^2 - P_\pm A^2}{r^2 + P_\pm A^2},
\end{eqnarray}with cylindrical coordinates $(t,r,\theta,z)$ implied. Equation (\ref{outcylindricalapprox1}) suggests an effective event horizon at $r=P_\pm^{1/2} A$. Slightly more refined approximations,  done in Chapter \ref{chapter4},  yield an effective horizon at a radius that is proportional to the square of the intensity $A$ and inversely proportional to the frequency.

Using Equation (\ref{outcylindricalapprox1}) as an estimate for the coordinate velocity of radial null effective geodesics which are ``outgoing" in a cylindrically imploding wave field, and assuming that the ingoing null geodesics propagate at the speed of light, we can use \emph{Mathematica} to draw a graph of the coordinate velocities (see Figure \ref{Figure36A}). More detailed calculations, done in Chapter \ref{chapter4}, will confirm that this guess is qualitatively on the right track. In Figure \ref{Figure36A}, we are guessing ingoing rays will propagate at the usual speed of light. This guess is motivated by our experience with plane waves; we have seen that field disturbances that propagate along with the flow of a plane wave simply propagate at the usual speed of light.

\begin{figure}[h]
\center
    \includegraphics[height=3.75in]{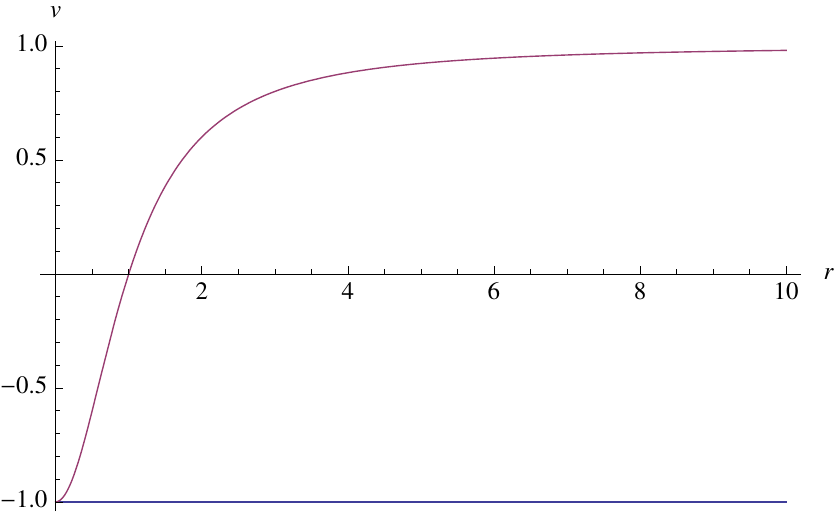}

    \caption[Thought experiment]{Thought experiment. In this graph, we have plotted (in red)  our  tentative guess on how the coordinate velocity $v$ of a radially outgoing light ray will vary with distance $r$ in   a cylindrically symmetric imploding field. This plot of Equation (\ref{outcylindricalapprox1}) uses $P_\pm A^2 =1$. The horizontal blue line (at $v=-1$) corresponds to our tentative  guess that the ingoing ray propagate at the usual speed of light. }
    \label{Figure36A}
\end{figure}

\chapter{Cylindrical fields
}\label{chapter4}

In Section \ref{Hints of an optical black hole} we suggested, by means of a crude  Gedankenexperiment, that optical black holes might arise in the effective geometry of an imploding electromagnetic wave. As we have noted, a spherically symmetric implosion is not possible because the 2-sphere $S^2$  is not parallelizable. On the other hand, the cylinder $S^1 \times \mathbb{R}$ by contrast is parallelizable. For this reason, in the present chapter we move our focus to cylindrically symmetric fields. The main formulas derived in Sections \ref{type} and \ref{maxwellianapproximation} have been checked with \emph{Mathematica}, as we detail in  Appendix \ref{Appendix:Key1}.

Naturally then, we will be working with cylindrical coordinates $(t, r, \theta, z)$ in which the background (Minkowskian) metric $g_{\mu\nu}$ is given by:
\begin{eqnarray}
ds^2 = dt^2 - dr^2 - r^2 d\theta^2 - dz^2.
\end{eqnarray} The nonvanishing Christoffel symbols are:
\begin{eqnarray}
\Gamma^r_{\theta\theta} = -r,\\
\Gamma^\theta_{r\theta} = \frac{1}{r} = \Gamma^\theta_{\theta r}.
\end{eqnarray}

We will define a \emph{cylindrically symmetric electromagnetic field} as being an  $A$-field whose components are functions of the coordinates $t$ and $r$ only:\index{cylindrically symmetric field}\index{electromagnetic field! cylindrically symmetric}
\begin{eqnarray}\label{cylindricalafield}
A_t &=&A_t(t,r), \\
A_r &=&A_r(t,r), \\
A_\theta &=&A_\theta(t,r), \\
A_z &=& A_z(t,r).\label{caf2}
\end{eqnarray}
Although this is not a necessary condition for making the physical field $F_{\mu\nu}= \partial_\mu A_\nu - \partial_\nu A_\mu$  cylindrically symmetric, it is a sufficient one. 

The Euler-Heisenberg field equation (\ref{EHFE}), is:
\begin{eqnarray}\label{EHFEch4}
\nabla_\mu F^{\mu\nu} = \frac{\alpha^2}{45}\left( 4 \nabla_\mu(F F^{\mu\nu})  +  7\nabla_\mu(G {F^*}^{\mu\nu})\right).
\end{eqnarray} This amounts to a system of nonlinear PDEs  (as shown explicitly  in  Section \ref{type} below). There are no known general methods  for finding exact solutions to such systems. We note however, that to first order in $\alpha$, Equation (\ref{EHFEch4}) reproduces the Maxwell vacuum equations: $\nabla_\mu F^{\mu\nu}=0$. This means that the familiar exact solutions   from Maxwell's theory (whether they are cylindrically symmetric or not) are approximations of solutions to Equation (\ref{EHFEch4}), up to first-order in $\alpha$. 

 In Section \ref{maxwellianapproximation} we will   treat ingoing cylindrical wave solutions from Maxwell's theory as  approximate first-order   solutions to the nonlinear theory. We will plug the Maxwellian field into the equations for the effective geometry from Chapter \ref{chapter2} and we will study the  resulting geometry  to second-order in $\alpha$.  As expected, we find that the effective geometry is analogous to a black hole. 
This is evidence for our main conjecture on the existence of black hole\index{black holes} soliton solutions as we explained in the Introduction.
 
 Since the Maxwellian solution is only valid up to first-order, one might worry about whether it is meaningful to do second-order calculations with it. However, as we show in Section \ref{maxwellianapproximation}, the geometric quantities that we calculate to second-order only depend on the first-order (Maxwellian) part of the exact solution, so the  approximation is justified.  
 
 In Section \ref{static}, we will derive an exact static  solution to Equation (\ref{EHFEch4}). This solution corresponds to a constant electric field running in the $z$-direction together with a magnetic field circulating around the $z$-axis. In the linear case, such a field configuration corresponds to that of a constant current through an infinitely long straight wire together with a constant electric field.  With the fields arranged so as to give an inwardly-directed Poynting vector,  the resulting effective geometry is  analogous to that of a black hole.

\section{Cylindrical  fields of a  particular type}\label{type}
Here we present formulas for the effective metrics and  field equations for cylindrically symmetric fields of a particular type.  These will be needed  in  later sections. 

\subsection{Field tensors}\label{5.1.1}
The particular type of cylindrical field considered here is one in which the $t$ and $r$-components of the $A$-field vanish. This would be the case, for example, with   an elliptically polarized imploding cylindrical radiation field  in the radiation gauge. 

Let us write:
\begin{eqnarray}\label{at}
A_t&\equiv&0,\\
A_r&\equiv &0,\\
A_\theta &=& u(t,r),\\
A_z&=& v(t,r).\label{az}
\end{eqnarray} 

The corresponding electromagnetic field tensor $F_{\mu\nu} = \partial_\mu A_\nu - \partial_\nu A_\mu$ is:
\begin{eqnarray}\label{farad}
F_{\mu\nu} &=& \left(\begin{array}{cccc}
F_{tt}&F_{tr} &F_{t\theta}& F_{tz}\\
F_{rt}&F_{rr} &F_{r\theta} &F_{rz}\\
F_{\theta t} & F_{\theta r} &F_{\theta \theta}&F_{\theta z}\\
F_{zt} & F_{zr}&F_{z\theta}&F_{zz}\\
\end{array}\right)\nonumber\\
&=&\left(\begin{array}{cccc}
0& 0 &\partial_t u & \partial_t v \\
0&0 &\partial_r u&\partial_r v\\
-\partial_t u& -\partial_r u  &0&0\\
-\partial_t v & -\partial_r v &0  &0\\
\end{array}\right).
\end{eqnarray} 
Raising the first index, one gets that $F^\mu_{\phantom{\mu}\nu} = g^{\lambda\mu} F_{\lambda_\nu}$ is: 
\begin{eqnarray}\label{farad1}
F^\mu_{\phantom{\mu}\nu} 
&=&\left(\begin{array}{cccc}
0& 0 & \partial_t u & \partial_t v \\
0&0 &-\partial_r u&-\partial_r v\\
\frac{1}{r^2}\partial_t u& \frac{1}{r^2}\partial_r u  &0&0\\
\partial_t v & \partial_r v &0  &0\\
\end{array}\right).
\end{eqnarray} 
Raising the second index, one gets that $F^{\mu\nu} = g^{\lambda\nu}F^\mu_{\phantom{\mu}\lambda}$ is:
\begin{eqnarray}\label{farad2}
F^{\mu\nu}
&=&\left(\begin{array}{cccc}
0& 0 &-\frac{1}{r^2}\partial_t u & -\partial_t v \\
0&0 &\frac{1}{r^2}\partial_r u&\partial_r v\\
\frac{1}{r^2}\partial_t u& -\frac{1}{r^2}\partial_r u  &0&0\\
\partial_t v & -\partial_r v &0  &0\\
\end{array}\right).
\end{eqnarray} Using (\ref{farad}) and (\ref{farad2}), one gets that the invariant $F=F_{\mu\nu}F^{\mu\nu}$ is: 
\begin{eqnarray}\label{F1}
F&=&\frac{2}{r^2}\left[ (\partial_r u)^2 - (\partial_t u)^2 + r^2(\partial_r v)^2 - r^2(\partial_t v)^2\right].
\end{eqnarray}
The dual tensor $F^*_{\mu\nu}= \frac{1}{2}\varepsilon_{\alpha\beta\mu\nu} F^{\alpha\beta}$ is:
\begin{eqnarray}\label{max}
F^*_{\mu\nu} &=&\left(\begin{array}{cccc}
0& 0 &-r\partial_r v & \frac{1}{r}\partial_r u \\
0&0 &-r \partial_t v&\frac{1}{r}\partial_t u\\
r \partial_r v& r \partial_t v  &0&0\\
-\frac{1}{r}\partial_r u & -\frac{1}{r}\partial_t u &0  &0\\
\end{array}\right).
\end{eqnarray} Using (\ref{farad2}) and (\ref{max}), one gets that $G=F^*_{\mu\nu}F^{\mu\nu}$ is:
\begin{eqnarray}\label{G1}
G&=&\frac{4}{r}\left[(\partial_t u)( \partial_ r v)  - (\partial_r u)(\partial_t v)\right].
\end{eqnarray}

Raising the indices of the dual, one gets that ${F^*}^{\mu\nu} = g^{\alpha\mu}g^{\beta\nu}F^*_{\alpha\beta}$ is:
\begin{eqnarray}\label{max2}
{F^*}^{\mu\nu} &=&\left(\begin{array}{cccc}
0& 0 &\frac{1}{r}\partial_r v & -\frac{1}{r}\partial_r u \\
0&0 &-\frac{1}{r} \partial_t v&\frac{1}{r}\partial_t u\\
-\frac{1}{r} \partial_r v& \frac{1}{r} \partial_t v  &0&0\\
\frac{1}{r}\partial_r u & -\frac{1}{r}\partial_t u &0  &0\\
\end{array}\right).
\end{eqnarray} 

\subsection{Effective metric}\label{5.1.2}
As we saw in Section \ref{nullgeo},  for the Euler-Heisenberg theory, the effective metric $\widetilde{g}_{\mu\nu}$ is given by the inverse of:
\begin{eqnarray}\label{cometricch4}
\widetilde{g}^{\mu\nu} &=&g^{\mu\nu} + \Lambda_\pm F^\mu_{\phantom{\mu}\lambda} F^{\lambda \nu},
\end{eqnarray} where:
\begin{eqnarray}
\Lambda_\pm &=& \frac{224\alpha^2 }{495+ 12 F \alpha^2 \mp \sqrt{18225 - 18360 F\alpha^2 + 4624 F^2 \alpha^4 + 3136 G^2 \alpha^4 }}.\nonumber\\
\end{eqnarray}
Henceforth let us write $\Lambda:= \Lambda_\pm$, keeping in mind that there is a choice of $\pm$ involved in the calculation of $\Lambda$ (this $\Lambda$ has nothing to do with the cosmological constant). As we explained in Section  \ref{polarization}, this choice of $\pm$ depends on the polarization state of the field disturbance.

Using Equations (\ref{farad1}) and (\ref{farad2}), we get that the only nonvanishing components of  the substress tensor $F^\mu_{\phantom{\mu}\lambda} F^{\lambda \nu}$ are:
\begin{eqnarray}\label{substressTT}
F^t_{\phantom{t}\lambda} F^{\lambda t}&=&\frac{1}{r^2}\left[ (\partial_t u)^2 + r^2 (\partial_t v)^2\right]\\
F^t_{\phantom{t}\lambda} F^{\lambda r}=F^r_{\phantom{r}\lambda} F^{\lambda t}&=&-\frac{1}{r^2}\left[(\partial_t u)(\partial_r u) +r^2 (\partial_t v )(\partial_r v)\right] \\
F^r_{\phantom{r}\lambda} F^{\lambda r}&=&\frac{1}{r^2}\left[(\partial_r u)^2 + r^2 (\partial_r v)^2\right]\\
F^\theta_{\phantom{\theta}\lambda} F^{\lambda \theta}&=&\frac{1}{r^4}\left[ (\partial_r u)^2 - (\partial_t u)^2 \right]\\
F^\theta_{\phantom{\theta}\lambda} F^{\lambda z}=F^z_{\phantom{z}\lambda} F^{\lambda \theta}&=&\frac{1}{r^2}\left[ (\partial_r u)(\partial_r v)-(\partial_t u)(\partial_t v)\right] \\
F^z_{\phantom{z}\lambda} F^{\lambda z}&=&(\partial_r v)^2 - (\partial_t v)^2\label{substressZZ}
\end{eqnarray}

Plugging  our result for the substress tensor into Equation (\ref{cometricch4}),  we get the effective cometric $\widetilde{g}^{\mu\nu}$. Taking the inverse of $\widetilde{g}^{\mu\nu}$, we find  that the only nonvanishing components of the effective metric are (up to a conformal factor $\kappa$): 
\begin{eqnarray}\label{emetric4.28}
\kappa \widetilde{g}_{tt}&=&1-\frac{\Lambda(\partial_r u )^2}{r^2} - \Lambda(\partial_r v)^2 \\
\kappa \widetilde{g}_{tr}=\kappa\widetilde{g}_{rt}&=&-\frac{\Lambda (\partial_t u)(\partial_r u)}{r^2} -  \Lambda (\partial_t v)(\partial_r v)\\
\kappa\widetilde{g}_{rr}&=&-1 -\frac{\Lambda(\partial_t u)^2}{r^2}-\Lambda(\partial_t v)^2\\
\kappa\widetilde{g}_{\theta\theta}&=&-r^2 +\Lambda r^2 (\partial_r v)^2 -\Lambda r^2 (\partial_t v)^2\\
\kappa\widetilde{g}_{\theta z}=\kappa\widetilde{g}_{\theta z}&=&\Lambda  (\partial_t u)(\partial_t v) - \Lambda(\partial_r u)(\partial_r v )\\
\kappa\widetilde{g}_{zz}&=&-1 +\frac{\Lambda (\partial_r u)^2}{r^2} - 
\frac{\Lambda (\partial_t u)^2}{r^2},\label{emetric4.33}
\end{eqnarray} \index{effective geometry! of cylindrically symmetric fields}
Since only the null geodesics are important, we can drop the conformal factor $\kappa$. Note that the effective metric becomes conformally equivalent to  the background metric in the limit where $\Lambda\rightarrow 0$. 

\subsection{Radial null geodesics}\label{5.1.3}

Since the components of the metric tensor only depend on the coordinates $t$ and $r$, it follows that   radial null curves (that is, null curves with constant $\theta$ and $z$ coordinates), are automatically geodesics. For such a curve we can write:
\begin{eqnarray}\label{radial1}
0&=& \widetilde{g}_{tt} + 2\widetilde{g}_{tr} \frac{dr}{dt} + \widetilde{g}_{rr}\left(\frac{dr}{dt}\right)^2.
\end{eqnarray} Solving (\ref{radial1}) for $dr/dt$ gives:
\begin{eqnarray}\label{radial2}
\frac{dr}{dt} &=& \frac{-\widetilde{g}_{tr}\pm\sqrt{\widetilde{g}_{tr}^{\ 2}- \widetilde{g}_{tt}\widetilde{g}_{rr}}}{\widetilde{g}_{rr}}\nonumber\\
&=&-\frac{(\partial_t u)(\partial_r u) + r^2 (\partial_t v)(\partial_r v)}{(\partial_t u)^2 + r^2(\partial_t v)^2+\frac{r^2}{\Lambda} }\nonumber\\
&&\ \mp\frac{\sqrt{\left[(\partial_t u)(\partial_r u) + r^2(\partial_t v)(\partial_r v)\right]^2 + \left(\frac{r^2}{\Lambda}+r^2(\partial_t v)^2+ (\partial_t u)^2\right)\left(\frac{r^2}{\Lambda}-r^2(\partial_r v)^2-(\partial_r u)^2\right)}}{(\partial_t u)^2 + r^2(\partial_t v)^2+\frac{r^2}{\Lambda}}.\nonumber\\
\end{eqnarray} We define \emph{outgoing} geodesics as corresponding to choosing $+$ in (the second line of) Equation (\ref{radial2}) and \emph{ingoing} geodesics as corresponding to choosing $-$.

For radial geodesics of the ingoing type,  one gets that  to second-order in $\alpha$:
\begin{eqnarray}\label{in2o}
\frac{dr}{dt}\Big|_{\textrm{in}}&=&-1+\frac{(11\pm 3)\alpha^2}{45r^2}\left((\partial_t u - \partial_r u)^2 \phantom{\Big|}+\phantom{\Big|} r^2 (\partial_t v-\partial_r v)^2\right) + O(\alpha^4).
\end{eqnarray}
For the outgoing type:
\begin{eqnarray}\label{out2o}
\frac{dr}{dt}\Big|_{\textrm{out}} &=&1-\frac{(11\pm 3)\alpha^2}{45r^2}\left((\partial_t u + \partial_r u)^2 \phantom{\Big|}+\phantom{\Big|} r^2 (\partial_t v + \partial_r v)^2\right) + O(\alpha^4).\end{eqnarray} In Equations (\ref{in2o}) and (\ref{out2o}), the choice of $\pm$ has to do with the polarization state of the disturbance (there is birefringence). With the  birefringence averaged out, we have:
\begin{eqnarray}\label{in2ave}
\Big\langle\frac{dr}{dt}\Big\rangle\Big|_{\textrm{in}}&=&-1+\frac{11\alpha^2}{45r^2} \left((\partial_t u - \partial_r u)^2 \phantom{\Big|}+\phantom{\Big|} r^2 (\partial_t v-\partial_r v)^2\right) + O(\alpha^4),
\end{eqnarray}
and:
\begin{eqnarray}\label{out2ave}
\Big\langle\frac{dr}{dt}\Big\rangle \Big|_{\textrm{out}}&=& 1-\frac{11\alpha^2}{45r^2}\left((\partial_t u + \partial_r u)^2 \phantom{\Big|}+\phantom{\Big|} r^2 (\partial_t v + \partial_r v)^2\right) +  O(\alpha^4).
\end{eqnarray}

\subsection{Field equations}\label{5.1.4}
 Our next task is to express the field equations as a system of nonlinear PDEs. To this end, 
using (\ref{farad2}), 
and introducing $\hat u(t,r) := u(t,r)/r$, we get that  the left-hand side of the field equation (\ref{EHFEch4}) is:
\begin{eqnarray}\label{lefthandside}
\nabla_\mu F^{\mu\nu} &=&\left\{\begin{array}{cl}
0& \textrm{for } \nu = t\\
0&\textrm{for } \nu = r\\
\frac{1}{r^2}\partial_r (r\partial_r \hat u) - \frac{1}{r}\partial_t^2 \hat u  - \frac{\hat u}{r^3}  &\textrm{for }\nu = \theta\\
\frac{1}{r}\partial_r(r \partial_r v) - \partial_t^2 v& \textrm{for }\nu = z.\\
\end{array}\right.
\end{eqnarray}
Using (\ref{farad2}), (\ref{F1}), (\ref{max2}) and (\ref{G1}), we get that the right-hand side of the field equation (\ref{EHFEch4}) is:
\begin{eqnarray}\label{righthandside}
 \frac{\alpha^2}{45}\left( 4 \nabla_\mu(F F^{\mu\nu})  +  7\nabla_\mu(G {F^*}^{\mu\nu})\right)&=&\left\{\begin{array}{cl}
0& \textrm{for } \nu = t\\
0&\textrm{for } \nu = r\\
\frac{4\alpha^2}{45r^5}\mathcal{U} &\textrm{for }\nu = \theta\\
\frac{4\alpha^2}{45r^3}\mathcal{V}& \textrm{for }\nu = z.\\
\end{array}\right.
\end{eqnarray}
Here,
\begin{eqnarray}\label{mathfrakU}
\mathcal{U}&=& 2r\hat u^2 \left(3r\partial_r^2 \hat u - 3\partial_r \hat u -r \partial_t^2 \hat u \right)-6 \hat u^3 +\nonumber\\
&&r^2 \hat u \left[6(\partial_r \hat u)^2 - 2(\partial_r v)^2 -5(\partial_t v)^2 -2(\partial_t \hat u)\left(\partial_t \hat u + 4r \partial_t\partial_r \hat u\right) +  \right.\nonumber\\
&&\left.3r(\partial_t v)(\partial_t \partial_r v) + 4r(\partial_r \hat u) \left(3\partial_r^2 \hat u - \partial_t^2 \hat u\right)+r(\partial_r v)\left(4\partial_r^2 v -7\partial_t^2 v\right)\right]+\nonumber\\
&&r^3\left\{6(\partial_r \hat u)^3 -(\partial_r v)\left[7(\partial_t v)\left(\partial_t \hat u + 2 r \partial_t\partial_r \hat u\right) - 3r(\partial_t\hat u)(\partial_t \partial_r v)\right] + \right.\nonumber\\
&&r(\partial_r v)^2\left(2\partial_r^2 \hat u + 5\partial_t^2 \hat u\right)+ (\partial_r \hat u)^2\left(6r\partial_r^2 \hat u - 2r\partial_t^2\hat u\right)+\nonumber\\
&&(\partial_r \hat u)\left[2(\partial_r v)^2 - 6 (\partial_t \hat u)^2-8r(\partial_t \hat u)(\partial_t \partial_r \hat u)+\right.\nonumber\\
&&\left. (\partial_t v)(5\partial_t v + 3r\partial_t\partial_r v) +r(\partial_r v)\left(4\partial_r^2 v  - 7\partial_t^2 v\right)\right] +\nonumber\\
&&r\left[(\partial_r^2 \hat u)\left(5(\partial_t v)^2-2(\partial_t\hat u)^2\right) -7(\partial_r^2 v)(\partial_t \hat u)(\partial_t v)\right.+\nonumber\\
&&\left.\left.2(\partial_t^2 \hat u)\left(3(\partial_t\hat u)^2 + (\partial_t v)^2\right) +4(\partial_t \hat u)(\partial_t v)(\partial_t^2 v)\right]\right\},
\end{eqnarray} 

and:
\begin{eqnarray}\label{mathfrakV}
\mathcal{V}&=&\hat u^2\left(2r\partial_r^2 v -2\partial_r v + 5r\partial_t^2 v\right)+\nonumber\\
&&r\hat u\left[(\partial_t v)\left(10\partial_t \hat u + 3r\partial_t \partial_r \hat u\right) - 14 r (\partial_t \hat u)(\partial_t \partial_r v)+\right.\nonumber\\
&&\left.r(\partial_r v)\left(4\partial_r^2\hat u - 7 \partial_t^2 \hat u\right)+ 2(\partial_r \hat u)\left(2\partial_r v + 2r\partial_r^2 v + 5 r\partial_t^2 v\right)\right]+\nonumber\\
&&r^2\left\{2(\partial_r v)^3 -(\partial_r v)\left[2(\partial_t\hat u)^2 - 3 r (\partial_t\hat u)(\partial_t\partial_r \hat u)+\right.2(\partial_t v)\left(\partial_t v + 4 r \partial_t\partial_r v\right)\right] + \nonumber\\
&&(\partial_r \hat u)\left[3r(\partial_t v)(\partial_t\partial_r \hat u) -\right.2(\partial_t\hat u ) \left(2\partial_t v + 7 r \partial_t \partial_r v\right) +\nonumber\\
&&\left.r(\partial_r v)\left(4\partial_r^2 \hat u - 7 \partial_t^2 \hat u\right)\right]+(\partial_r v)^2\left(6r\partial_r^2 v - 2r \partial_t^2 v\right)+\nonumber\\
&&(\partial_r \hat u)^2\left(6\partial_r v + 2r\partial_r^2 v + 5r\partial_t^2 v\right) +r\left[(\partial_r^2 v)\left(5(\partial_t\hat u)^2 - 2 (\partial_t v)^2\right)+\right.\nonumber\\
&&(\partial_t \hat u)(\partial_t v)\left(4\partial_t^2 \hat u - 7 \partial_r^2 \hat u\right) +\left.\left.2(\partial_t^2 v ) \left((\partial_t \hat u )^2 + 3 (\partial_t v)^2\right)\right]\right\}.
\end{eqnarray}

The field equations for a field of the kind specified by Equations (\ref{at}) - (\ref{az}) can thus be expressed as a nonlinear system of PDEs:
\begin{eqnarray}\label{system4}
\left\{\begin{array}{l}
\frac{1}{r}\partial_r (r \partial_r \hat u)-\partial_t^2 \hat u - \frac{\hat u}{r^2}= \frac{4\alpha^2}{45r^4}\mathcal{U} \\
\frac{1}{r}\partial_r(r\partial_r v)- \partial_t^2 v=\frac{4\alpha^2}{45r^3}\mathcal{V}\\
\end{array}\right.
\end{eqnarray}

\section{Maxwellian approximation}\label{maxwellianapproximation}
To first-order in $\alpha$, in which Maxwell's theory is recovered, the field equations  (\ref{system4}) become:
\begin{eqnarray}\label{maxwellappsystem}
\left\{\begin{array}{l}
\frac{1}{r}\partial_r (r \partial_r \hat u)-\partial_t^2 \hat u - \frac{\hat u}{r^2}=0 \\
\frac{1}{r}\partial_r(r\partial_r v)- \partial_t^2 v=0.\\
\end{array}\right.
\end{eqnarray}Seeking solutions to (\ref{maxwellappsystem}) for a monochromatic field of constant frequency $\omega>0$, we  express the components of the $A$-field  (\ref{at}) - (\ref{az}) in the form:\begin{eqnarray}\label{ansatz0}
A_\mu = \textrm{Re}\left[S_\mu \exp(i\omega t)\right],
\end{eqnarray} where $S_\mu$ is a function of $r$ only. Note that the ansatz (\ref{ansatz0}) implies that $\partial_t^2 A_\mu = -\omega^2 A_\mu$, and so (\ref{maxwellappsystem}) becomes:
\begin{eqnarray}\label{besselsystem1}
\left\{\begin{array}{l}
\frac{1}{r}\partial_r (r \partial_r \hat u)+\left(\omega^2 - \frac{1}{r^2} \right)\hat u=0 \\
\frac{1}{r}\partial_r(r\partial_r v)+ \omega^2 v=0.\\
\end{array}\right.
\end{eqnarray} These equations are of the form:
\begin{eqnarray}\label{Besseltypeeq}
\frac{1}{r}\partial_r \left(r\partial_r \psi \right) + \left(\omega^2 - \frac{n}{r^2}\right)\psi =0,
\end{eqnarray} where $n$ is either 0 or 1. Equation (\ref{Besseltypeeq}) is a Bessel-type differential equation, having solutions of the form (see e.g., Bowman \cite{Bowman} p. 116):\index{Bessel functions}
\begin{eqnarray}
\psi =c_{1}J_n(\omega r) + c_{2}Y_n(\omega r),
\end{eqnarray} where $c_1$ and $c_2$ are complex constants,  and $J_n$ and $Y_n$ denote the $n$th order Bessel functions of the first and second kinds respectively. \index{Bessel functions}
Monochromatic solutions to (\ref{maxwellappsystem}) are therefore given by:
\begin{eqnarray}
\hat u &=& \textrm{Re}\left[\left(c_{\hat u1}J_1(\omega r) + c_{\hat u2}Y_1(\omega r)\right)\exp(i\omega t)\right]\\
v&=&\textrm{Re}\left[\left(c_{v1}J_0(\omega r) + c_{v2}Y_0(\omega r)\right)\exp(i\omega t)\right],
\end{eqnarray} where the $c_{ij}$ are complex constants.

In order to choose the constants $c_{ij}$ so that one gets radially propagating solutions, consider the fact that the graphs of $J_n$ and $Y_n$ look like dampened sine and cosine graphs. In this sense,  combinations such as $J_n(\omega r)\pm i Y_n(\omega r)$ are like dampened versions of $\exp(i \omega r)$. So, after taking the real part, the functions $\left(J_n(\omega r) \pm iY_n(\omega r)\right)\exp(i\omega t)$ describe  radially propagating waves (which can be either ingoing or outgoing depending on the choice of $\pm$).  

Accordingly, for an elliptically polarized ingoing cylindrical wave,\index{cylindrical wave (Maxwell)} we write:\index{Maxwell! cylindrical wave}
\begin{eqnarray}\label{4.53}
u(t,r) &=&r\cdot \textrm{Re}\left[\frac{U}{\omega}\left(J_1(\omega r) + iY_1(\omega r)\right)\exp(i\omega t)\right]\nonumber\\
&=& \frac{Ur}{\omega} \left(J_1(\omega r) \cos(\omega t)\phantom{\Big|} -\phantom{\Big|}Y_1(\omega r)\sin (\omega t)\right)  ,
\end{eqnarray} and:
\begin{eqnarray}\label{4.54}
v(t,r)&=&  \textrm{Re}\left[\frac{V}{\omega}\left(J_0(\omega r)+iY_0(\omega r)\right)\exp(i\omega t)\right]\nonumber\\
&=& \frac{V}{\omega} \left(J_0(\omega r) \cos(\omega t)\phantom{\Big|} - \phantom{\Big|}Y_0(\omega r)\sin (\omega t)\right)  ,
\end{eqnarray} where $U$ and $V$ are real-valued constants (not to be confused with the  functions $\mathcal{U}$ and $\mathcal{V}$ as defined by (\ref{mathfrakU}) and (\ref{mathfrakV})). 

Our task is to study the effective metric corresponding to this wave field. More specifically, we want to have a look  at the effective radial null geodesics using (\ref{in2ave}) and (\ref{out2ave}) to calculate $dr/dt$ to second-order in $\alpha$, where we use the Maxwellian solution to evaluate the field variables. We claim that, up to second-order in $\alpha$, the calculation of $dr/dt$ only depends on the first-order (Maxwellian) part  of the exact solution to (\ref{system4}). In other words:

\begin{thm}\label{justificationofMA} Maxwellian approximations for  $dr/dt$ (along radial null geodesics in the effective geometry) are   accurate up to second-order. 
\end{thm}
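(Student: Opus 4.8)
The plan is to exploit the observation that the second-order formulas (\ref{in2o}) and (\ref{out2o}) already carry an explicit factor of $\alpha^2$ in front of every field-dependent term. Consequently, any correction to the field that is itself $O(\alpha^2)$ can only perturb $dr/dt$ at order $\alpha^4$, which lies beyond the order we are tracking; so only the leading (Maxwellian) part of the field survives through second order.

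First I would set up the perturbative expansion of the exact solution in powers of $\alpha$. Because the right-hand side of the system (\ref{system4}) manifestly carries a factor $\alpha^2$, putting $\alpha = 0$ collapses (\ref{system4}) to the Maxwell system (\ref{maxwellappsystem}). Positing that the exact solution depends smoothly on $\alpha$ and reduces to the Maxwellian field as $\alpha \to 0$, I would write
\[
u = u_M + \alpha^2 u_1 + O(\alpha^4), \qquad v = v_M + \alpha^2 v_1 + O(\alpha^4),
\]
where $(u_M, v_M)$ is the Maxwellian solution of Section \ref{maxwellianapproximation} and the corrections $(u_1, v_1)$ are sourced by the functions $\mathcal{U}$ and $\mathcal{V}$ of (\ref{mathfrakU}) and (\ref{mathfrakV}).

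Next I would substitute this expansion into (\ref{out2o}), the ingoing case (\ref{in2o}) being identical \emph{mutatis mutandis}. Each field-dependent factor expands as $\partial_t u + \partial_r u = (\partial_t u_M + \partial_r u_M) + O(\alpha^2)$, so its square differs from $(\partial_t u_M + \partial_r u_M)^2$ by a term of order $\alpha^2$; the same holds for the $v$-terms. Multiplying by the prefactor $\tfrac{(11\pm 3)\alpha^2}{45 r^2}$ demotes that discrepancy to $O(\alpha^4)$, so the entire bracket may be evaluated on the Maxwellian field without affecting $dr/dt$ through second order. This is exactly the assertion of the theorem, and averaging over the two polarization choices $\pm$ yields the same conclusion for the birefringence-averaged formulas (\ref{in2ave}) and (\ref{out2ave}).

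The one genuinely delicate point is the justification of the perturbative ansatz itself: one needs to know that a solution of the nonlinear system (\ref{system4}) exists near $\alpha = 0$, depends at least continuously differentiably on $\alpha$, and limits to the Maxwellian solution. A rigorous treatment of this step would require an existence/regularity argument for (\ref{system4}) --- for instance an implicit-function-theorem argument in an appropriate function space --- and I expect this to be the main obstacle. Once the expansion is granted, the remainder of the proof is merely bookkeeping in powers of $\alpha$, and the result follows at once.
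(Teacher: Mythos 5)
Your proposal is correct and follows essentially the same route as the paper's own proof: expand the exact solution as the Maxwellian field plus an $O(\alpha^2)$ correction, substitute into the second-order formulas (\ref{in2o})--(\ref{out2ave}), and observe that the explicit $\alpha^2$ prefactor demotes the correction to $O(\alpha^4)$. The only difference is emphasis --- you flag the existence/smooth-dependence of the exact solution as the delicate step, while the paper simply assumes an exact solution and notes parenthetically that the expansion may only be asymptotic.
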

\begin{proof}  
Suppose that we have an exact solution $(\hat u, v)= (s_\theta /r, s_z )$ for the nonlinear system (\ref{system4}).  In the limit $\alpha^2\rightarrow 0$, this solution becomes a  solution to  (\ref{maxwellappsystem}). So expanding the exact solution as a power series in $\alpha$ would yield $s_\theta=  m_\theta + O(\alpha^2) $ and $s_z= m_z + O(\alpha^2)$,  where $(\hat u,v)=(m_\theta/r,m_z)$ is an exact solution to (\ref{maxwellappsystem}). (Note: \emph{a priori} the series may only be asymptotic.) Using Equation (\ref{out2ave}):
\begin{eqnarray}
\Big\langle\frac{dr}{dt}\Big\rangle \Big|_{\textrm{out}}&=& 1-\frac{11\alpha^2}{45r^2}\left((\partial_t s_\theta+ \partial_r s_\theta)^2 \phantom{\Big|}+\phantom{\Big|} r^2 (\partial_t s_z + \partial_r s_z)^2\right) +  O(\alpha^4)\nonumber\\
&=&1-\frac{11\alpha^2}{45r^2}\left((\partial_t m_\theta+ \partial_r m_\theta + O(\alpha^2))^2 \phantom{\Big|}+\phantom{\Big|} r^2 (\partial_t m_z + \partial_r m_z + O(\alpha^2))^2\right) +  O(\alpha^4)\nonumber\\
&=&1-\frac{11\alpha^2}{45r^2}\left((\partial_t m_\theta+ \partial_r m_\theta )^2 \phantom{\Big|}+\phantom{\Big|} r^2 (\partial_t m_z + \partial_r m_z ))^2\right) +  O(\alpha^4).
\end{eqnarray} Similar calculations can be done using Equations (\ref{in2o}) - (\ref{in2ave}). \end{proof}

Proceeding now, by specializing  Equations (\ref{in2ave}) and (\ref{out2ave}) to the Maxwellian solution  (\ref{4.53}) and  (\ref{4.54}),  we find that the radial null geodesics are described by:
\begin{eqnarray}\label{inMEave}
\Big\langle\frac{dr}{dt}\Big\rangle\Big|_{\textrm{in}}&=&-1+\frac{11\alpha^2}{45} 
\left\{U^2\left[\left(J_0(\omega r) + Y_1(\omega r)\right)\cos(\omega t)  \phantom{\Big|}+ \phantom{\Big|} \left(J_1(\omega r) - Y_0(\omega r)\right)\sin(\omega t)\right]^2 
+ \right.\nonumber\\
&&\left.
\phantom{\cdot-1+\frac{11\alpha^2}{45} 
}V^2\left[\left(Y_0(\omega r) - J_1(\omega r)\right)\cos(\omega t)  \phantom{\Big|}+ \phantom{\Big|} \left(Y_1(\omega r) +J_0(\omega r)\right)\sin(\omega t)\right]^2
\right\} + O(\alpha^4),\nonumber\\
\end{eqnarray}
and:
\begin{eqnarray}\label{outMEave}
\Big\langle\frac{dr}{dt}\Big\rangle \Big|_{\textrm{out}}&=& 1-\frac{11\alpha^2}{45}
\left\{U^2\left[\left(Y_1(\omega r) - J_0(\omega r)\right)\cos(\omega t)  \phantom{\Big|}+ \phantom{\Big|} \left(J_1(\omega r) + Y_0(\omega r)\right)\sin(\omega t)\right]^2 
+ \right.\nonumber\\
&&\left.
\phantom{-1+\frac{11\alpha^2}{45} 
}V^2\left[\left(J_1(\omega r) + Y_0(\omega r)\right)\cos(\omega t)  \phantom{\Big|}+ \phantom{\Big|} \left(J_0(\omega r) -Y_1(\omega r)\right)\sin(\omega t)\right]^2
\right\} + O(\alpha^4),\nonumber\\
\end{eqnarray} We note that the oscillatory terms involving trigonometric functions of $t$ disappear in the case of circular polarization (where $U=V$). One might have expected this out of consideration of the fact that, as we saw in Chapter \ref{planewaves}, a similar simplification occurs in the effective geometry of circularly polarized  plane waves. In  fact, the stress-energy tensor (at least, as computed using the Maxwellian Lagrangian $L=-F/4$) for the   field given by Equations (\ref{4.53}) and (\ref{4.54}) does not have any oscillatory terms in the case where $U=V$.

Henceforth, let us assume that the wave is circularly polarized, with $U=V=:A$. In this case, we get that the effective radial geodesics are described by:
\begin{eqnarray}\label{inMCave}
\Big\langle\frac{dr}{dt}\Big\rangle\Big|_{\textrm{in}}&=&-1+\frac{11\alpha^2 A^2}{45} \left(-\frac{4}{\pi \omega r} + J_0(\omega r)^2 +J_1(\omega r)^2 + Y_0(\omega r)^2 + Y_1(\omega r)^2\right) + O(\alpha^4),\nonumber\\
\end{eqnarray}
and:
\begin{eqnarray}\label{outMCave}
\Big\langle\frac{dr}{dt}\Big\rangle \Big|_{\textrm{out}}&=& 1-\frac{11\alpha^2 A^2}{45}\left(\frac{4}{\pi \omega r} + J_0(\omega r)^2 +J_1(\omega r)^2 + Y_0(\omega r)^2 + Y_1(\omega r)^2\right) +  O(\alpha^4).\nonumber\\
\end{eqnarray}For simplicity, we are using the formulas in which the birefringence is averaged out. To recover the birefringence,  replace the factor $11\alpha^2$ with $(11\pm 3)\alpha^2$.  \index{cylindrical wave (Maxwell)! effective geometry} \index{Maxwell! cylindrical wave! effective geometry} 

We note that, for large $x$, one has the approximations (Arfken and Weber \cite{ArfkenWeber} p. 718):
\begin{eqnarray}
J_n(x)\approx \sqrt{\frac{2}{\pi x}}\cos\left[x-\left(n+\frac{1}{2}\right)\left(\frac{\pi}{2}\right)\right],
\end{eqnarray}and:
\begin{eqnarray}
Y_n(x)\approx \sqrt{\frac{2}{\pi x}}\sin\left[x-\left(n+\frac{1}{2}\right)\left(\frac{\pi}{2}\right)\right].
\end{eqnarray} Consequently, for large $x$:
\begin{eqnarray}\label{identityapprox}
J_0(x)^2 +J_1(x)^2 + Y_0(x)^2 + Y_1(x)^2
&\approx&\frac{2}{\pi x}\cos^2\left[x-\left(\frac{1}{2}\right)\left(\frac{\pi}{2}\right)\right]\nonumber\\
&&
+ \frac{2}{\pi x}\cos^2\left[x-\left(1+\frac{1}{2}\right)\left(\frac{\pi}{2}\right)\right]\nonumber\\
&&
+ \frac{2}{\pi x}\sin^2\left[x-\left(\frac{1}{2}\right)\left(\frac{\pi}{2}\right)\right]
\nonumber\\
&&
+\frac{2}{\pi x}\sin^2\left[x-\left(1+\frac{1}{2}\right)\left(\frac{\pi}{2}\right)\right]\nonumber\\
&=&\frac{4}{\pi x}.
\end{eqnarray}

In fact, one could have guessed at Equation (\ref{identityapprox}) using the following idea. Far from the origin, the cylindrical wave should look like a plane wave and we know that field disturbances in a plane wave, when they travel along the same direction as the plane wave (the Poynting vector), travel at the usual speed of light. Hence one should have $\Big\langle\frac{dr}{dt}\Big\rangle\Big|_{\textrm{in}} \approx -1$ in the limit where  $\omega r$ is large, and this implies (\ref{identityapprox}). 

So in the limit where the quantity $\omega r$ is large, we have, to second-order in $\alpha$:
\begin{eqnarray}\label{inasym}
\Big\langle\frac{dr}{dt}\Big\rangle\Big|_{\textrm{in}}&\approx&-1,
\end{eqnarray}and:
\begin{eqnarray}\label{outasym}
\Big\langle\frac{dr}{dt}\Big\rangle\Big|_{\textrm{out}}&\approx&1-\frac{88\alpha^2 A^2}{45\pi\omega r}.
\end{eqnarray}   \index{cylindrical wave (Maxwell)! effective geometry! asymptotic approximation} \index{Maxwell! cylindrical wave! effective geometry!}  
Equation (\ref{outasym}) suggests that within radii $r\leq r_c$, where:
\begin{eqnarray}\label{ehapprox1}
r_c \approx \frac{88\alpha^2 A^2}{45\pi \omega},
\end{eqnarray} even the ``outward" geodesics are compelled to fall inward.  Hence the critical radius $r_c$ is the event horizon\index{event horizon} of a black hole. (Note: we have only checked this for outward geodesics in the radial direction.)  

Since (\ref{ehapprox1}) was derived assuming that $\omega r$ is large,  we only expect this approximation to hold in  the limit of very large $A^2$. (We could also mention that, due to birefringence, there are actually two event horizons. If we had taken this into account in the above, then we would have estimated the critical radii as occurring at $r_c\approx(88\pm24)\alpha^2 A^2/(45\pi\omega)$, where the $\pm$ depends on the polarization of the disturbance.)

Plotting Equations (\ref{inasym}) and (\ref{outasym}) on the same graph (Figure \ref{figure40}), we can compare  to our  earlier  na\"ive guess of (\ref{outcylindricalapprox1}). We note  that there are substantial quantitative differences between our initial guess and our slightly more refined calculation, but the qualitative picture is basically the same. 

\begin{figure}[h]
\center
    \includegraphics[height=4in]{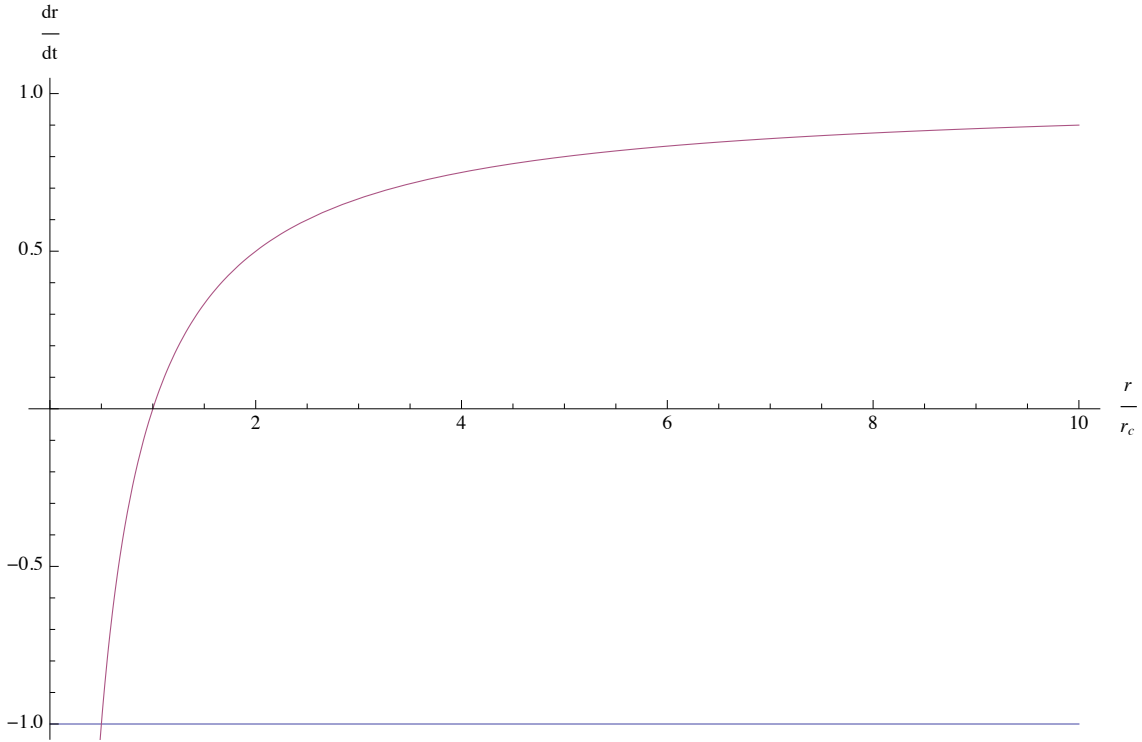}

    \caption[Coordinate velocities of  null geodesics in the effective geometry of an ingoing cylindrical wave according to the ``asymptotic approximation"]{In this graph, the coordinate velocities of  effective null geodesics are plotted using the asymptotic approximations (\ref{inasym}) and (\ref{outasym}), which assume that the quantity  $\omega r$ is large. The horizontal blue line at $dr/dt =-1$ corresponds to the ingoing geodesics, and the red curve corresponds to the ``outgoing" radial geodesics. Compare to Figure \ref{Figure36A}. }

    \label{figure40}
\end{figure}

Treating the asymptotic approximations (\ref{inasym}) and (\ref{outasym}) as ordinary differential equations, and solving them by integration, we obtain approximate  equations  for the radial null geodesics.  Specifically:
\begin{eqnarray}\label{inasymapproxt}
t &=& -r + r_0,
\end{eqnarray}
for the ingoing geodesics ($r_0$ := the radial coordinate of the geodesic when $t=0$), and:
\begin{eqnarray}\label{outasymapproxt}
t&=&\left\{\begin{array}{ll}
r+r_c\ln(r-r_c)-r_0-r_c\ln(r_0-r_c)&\textrm{if }r_0 > r_c\\
r+r_c\ln(r_c-r)-r_0-r_c\ln(r_c-r_0)&\textrm{if }r_0<r_c,
\end{array}\right.
\end{eqnarray}for the ``outgoing" geodesics.  An outgoing radial  null geodesic that initiates from $r=r_c$ would just remain there. Using \emph{Mathematica}, we have plotted Equations (\ref{inasymapproxt}) and (\ref{outasymapproxt}) for a few values of $r_0$. The resulting plot is shown in Figure \ref{figure4.1}. Note that the effective light cones are tilted in towards the origin, just like in the situation of gravitational black holes. 
\begin{figure}[h]
\center
    \includegraphics[height=4in]{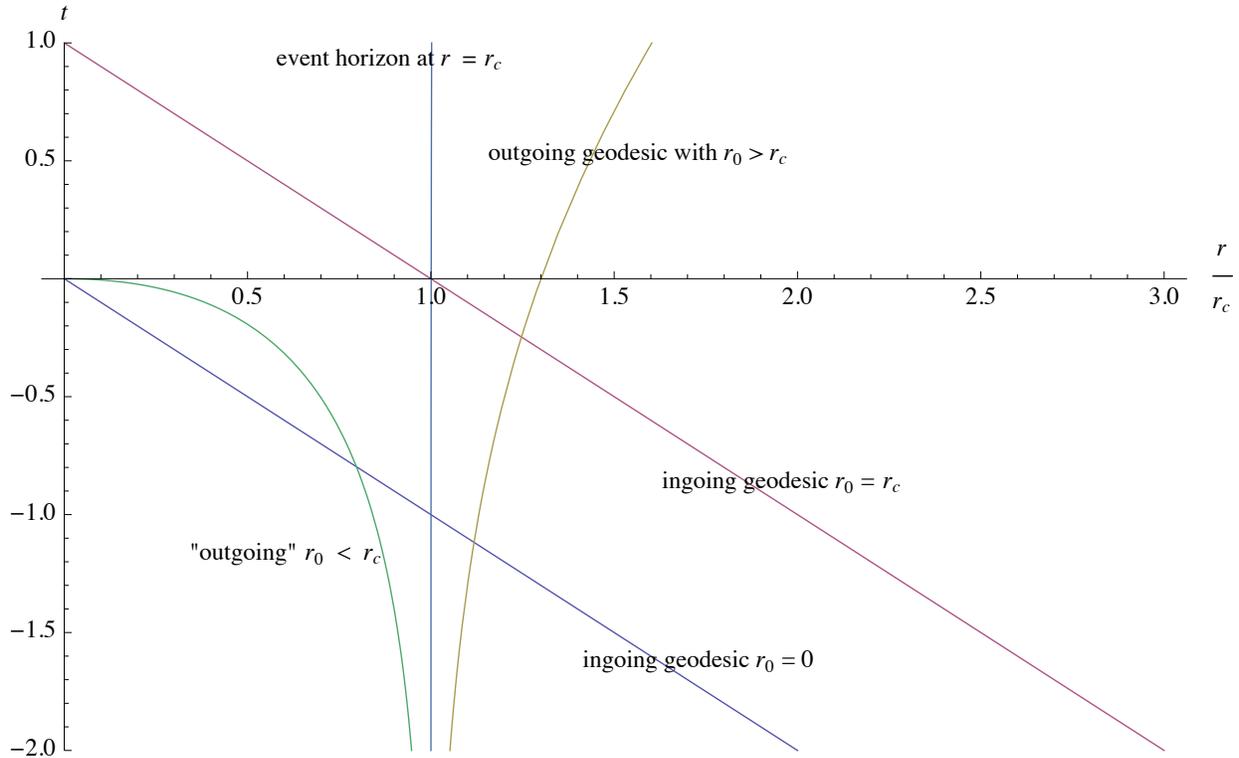}

    \caption[Effective null geodesics in the effective geometry of an ingoing cylindrical wave according to the ``asymptotic approximation"]{In this graph, effective null geodesics are plotted using Equations (\ref{inasymapproxt}) and (\ref{outasymapproxt}). Equations (\ref{inasymapproxt}) and (\ref{outasymapproxt}) are themselves based on the asymptotic approximations (\ref{inasym}) and (\ref{outasym}), which assumes that the quantity  $\omega r$ is large. }

    \label{figure4.1}
\end{figure}

In fact, using the numerical integration capabilities of \emph{Mathematica}, we can make spacetime diagrams for the effective null geodesics, as described by Equations (\ref{inMCave}) and (\ref{outMCave}) to second-order in $\alpha$, without recourse to the asymptotic approximations (\ref{inasym}) and (\ref{outasym}). These diagrams, shown in Figures \ref{figure42} - \ref{figure44},   are similar to  Figures \ref{figure40} - \ref{figure4.1}.  Again  we find that the effective geometry contains a black hole. That is,   the effective light cones are tilted towards the origin, and there is an effective event horizon. In making the plots for Figures \ref{figure42} - \ref{figure44}, 
we have set $\omega = 1$ and we have chosen $A^2$ such that $88\alpha^2 A^2/(45\pi\omega)=1$.

\begin{figure}[h]
\center
    \includegraphics[height=4in]{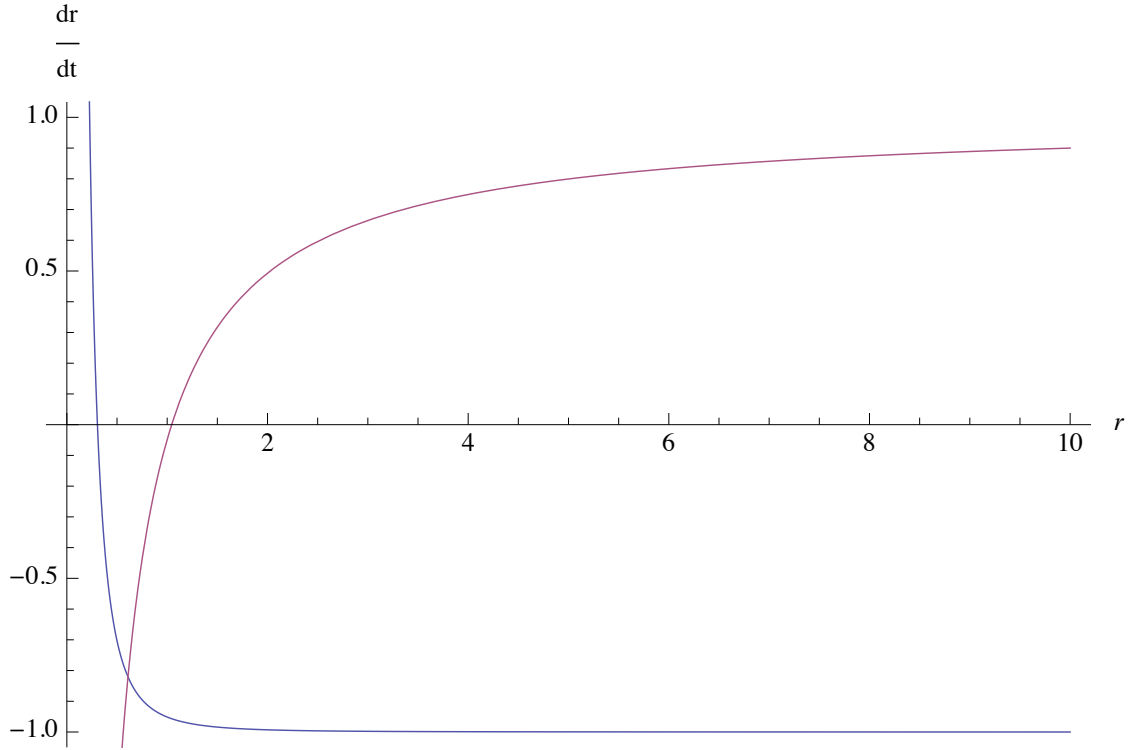}

    \caption[Coordinate velocities of  null geodesics in the effective geometry of an ingoing cylindrical wave according to the ``Maxwellian approximation"]{In this graph, the coordinate velocities of  effective null geodesics are plotted using Equations   (\ref{inMCave}) and (\ref{outMCave}) up to to second-order in $\alpha$. In plotting this graph, we have set $\omega = 1$ and we have chosen $A^2$ so that the quantity $88\alpha^2 A^2/(45\pi\omega)$ (our crude estimate for the effective horizon radius) is unity.  The  blue curve corresponds to the ingoing geodesics, and the red curve corresponds to the ``outgoing" radial geodesics. Compare to Figure \ref{figure40}. }

    \label{figure42}
\end{figure}

\begin{figure}[p]
\center
    \includegraphics[height=6.5in]{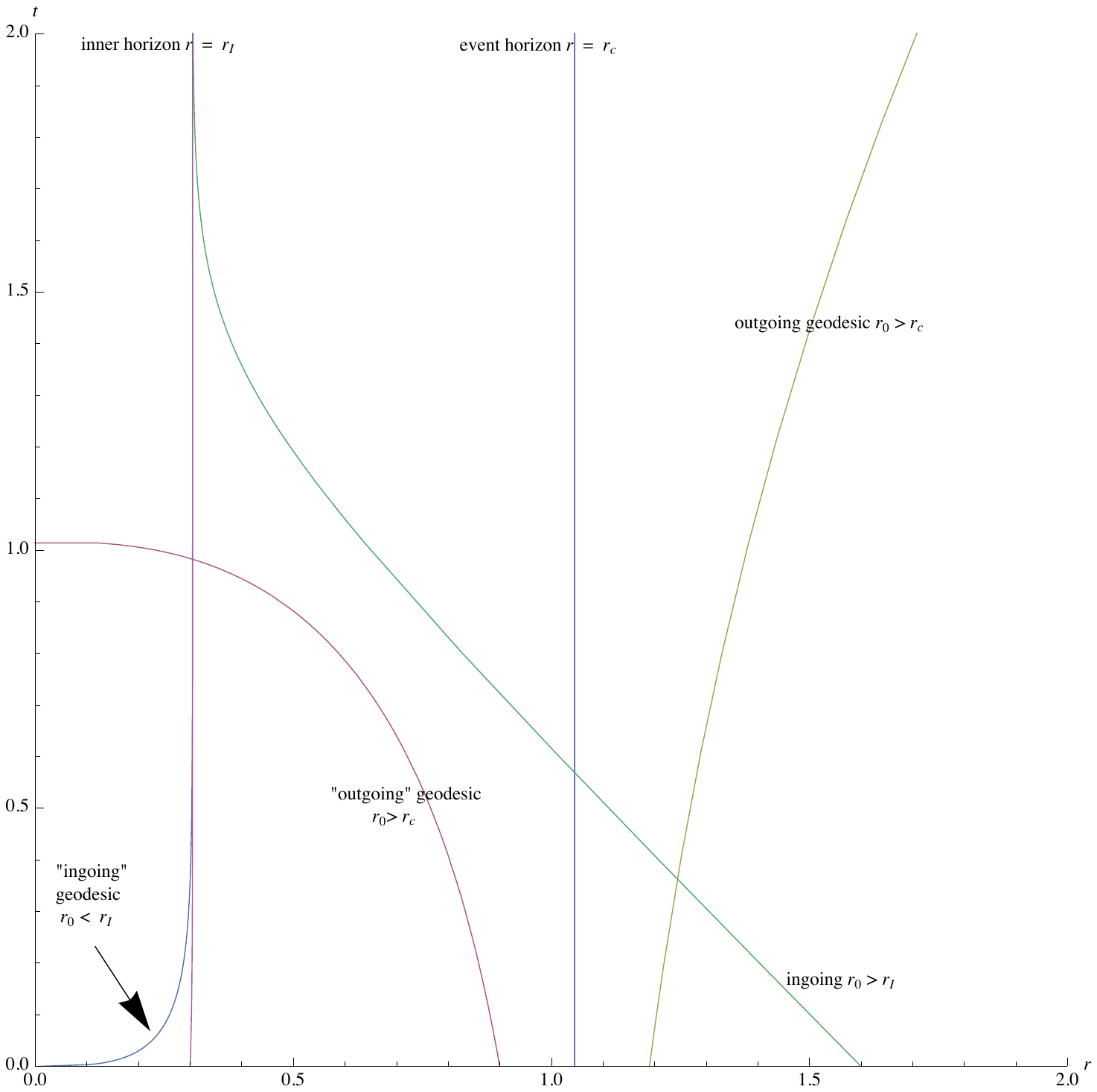}

    \caption[Null geodesics in the effective geometry of an ingoing cylindrical wave obtained by numerically integrating the ``Mawellian approximation"]{In this graph, effective null geodesics are plotted by numerically integrating Equations (\ref{inMCave}) and (\ref{outMCave}). We have set $\omega=1$ and we have chosen $A^2$ so that the quantity $88\alpha^2 A^2/(45\pi \omega)$ is unity. Compare to Figure \ref{figure4.1}. }

    \label{figure43}
\end{figure}

As we see from these plots, the radially outgoing rays are significantly slowed down near the critical radius. This means that if we slowly move a clock radially inwards, then an observer at infinity would see it ticking at a progressively slower   rate. Actually, due to birefringence, the situation is even more complicated since there will also be double images, but we are ignoring birefringent effects for now. When the clock reaches the critical radius, its light rays will not travel beyond the critical radius, and the clock will no longer be visible from the outside.

We note that, according to the second-order approximations   (\ref{inMCave}) and (\ref{outMCave}),  there is a small radius within the event horizon where the ingoing geodesics are brought to zero coordinate velocity. Thereby the ingoing geodesics coming in from infinity do not penetrate all the way to the origin, but instead are blocked by an ``inner horizon" at $r=r_I$ (see Figure (\ref{figure43})).   Also, we note that null geodesics exceed the usual speed of light, as viewed in the background coordinates.\index{superluminal photons} Some of the phenomena shown in Figure \ref{figure43}, especially at the smaller radii, may be mere artifacts of the  approximation. We note however that the superluminal photons, if such exist, will not  violate causality if the effective spacetime which they propagate is a causal spacetime. 

Note also that there is a radius between the inner and outer horizons where the ``ingoing" and ``outgoing" geodesics cannot be locally distinguished. At this special radius, the ingoing and outgoing radial geodesics travel in the same direction at the same velocity, so the effective light cone is degenerate at this radius.

We remark that if one were to take $\omega$ as negative, which amounts to turning  our ingoing wave into an \emph{outgoing} cylindrical wave, then one would obtain an effective spacetime which contains an optical white hole  rather than a black hole.\index{white holes}

\section{A static exact solution }\label{static}

In  the case of a field with $z$-polarization, $u\equiv 0$, the  field equations (\ref{system4}) reduce to a single nonlinear PDE:
\begin{eqnarray}\label{zPDE}
\partial_r^2v - \partial_t^2v+\frac{\partial_r v}{r} &=&\frac{8\alpha^2}{45r}\left[(\partial_rv)^3 - (\partial_t v) (\partial_r v) (\partial_t v + 4r \partial_t \partial_r v)\right.\nonumber\\
 &&\left.+r(\partial_t v)^2(3\partial_t^2v - \partial_r^2v)+r(\partial_r v)^2(3\partial^2_{r}v - \partial_t^2v)\right].
\end{eqnarray} A particularly interesting   case in which Equation (\ref{zPDE}) can be solved exactly is that of a field where $\partial_t v \equiv E=$ constant.  In this case (\ref{zPDE}) reads:
\begin{eqnarray}\label{zODE1}
\frac{d^2 v}{dr^2} +\frac{1}{r}\frac{dv}{dr} &=&\frac{8\alpha^2}{45r}\left[\left(\frac{dv}{dr}\right)^3 -E^2\left(\frac{dv}{dr}\right) - r E^2 \left(\frac{d^2 v}{dr^2}\right) +3r\left(\frac{dv}{dr}\right)^2 \left(\frac{d^2v}{dr^2}\right)\right].\nonumber\\
\end{eqnarray}Introducing the function $B(r):=dv/dr$, Equation (\ref{zODE1}) becomes:
\begin{eqnarray}
\frac{dB}{dr} + \frac{1}{r}B  &=& \frac{8\alpha^2}{45r}\left[B^3 -E^2 B - r E^2 \left(\frac{dB}{dr}\right)+ 3rB^2 \left(\frac{dB}{dr}\right)\right].
\end{eqnarray}This can be rearranged into:
\begin{eqnarray}\label{easy}
\frac{dB}{dr}=-\left(\frac{8\alpha^2 B^3  -  8\alpha^2 E^2 B - 45B}{r(24\alpha^2 B^2 - 8\alpha^2 E^2 -45)}\right).
\end{eqnarray} Equation (\ref{easy}) is   solvable  by integration. The general solution is given through the relation:
\begin{eqnarray}\label{bIMP}
B+\frac{8\alpha^2}{45}\left(E^2 B - B^3\right)= \frac{k}{r},
\end{eqnarray} where $k$ is an arbitrary constant. A sketch of the graph of Equation (\ref{bIMP}) is shown in Figure \ref{figure44}. There are three asymptotic values of $B$ as $r\rightarrow \infty$, namely: $B=0$, and $B=\pm\sqrt{E^2 + 45/(8\alpha^2)} $.

Note that $dB/dr$ has a singularity when $r=0$ and when $B=\pm\sqrt{E^2/3 + 15/(8\alpha^2)}$. Let us define $B_s:=\sqrt{E^2/3 + 15/(8\alpha^2)}$, and let $r_s$ denote the radius where $B^2 = B_s^2$.

We remark that (\ref{bIMP})  corresponds physically to a constant electric field $E$ directed along the $\pm z$-direction ($\pm$ depending on where $E$ is positive or negative, respectively), together with a magnetic field $B$, which  in the Maxwellian limit $\alpha^2\rightarrow 0$,  would be    produced by a constant current $2\pi k$ running  along the $\pm z$-direction ($\pm$ depending on whether $k$ is positive or negative, respectively). The magnetic field circulates around counterclockwise around the the $z$-axis if $B$ is positive, clockwise if negative. In other words, Equation (\ref{bIMP})  refines   the familiar undergraduate physics formula  $B=k/r$. 

\begin{figure}[h]
\center
    \includegraphics[height=3.4in]{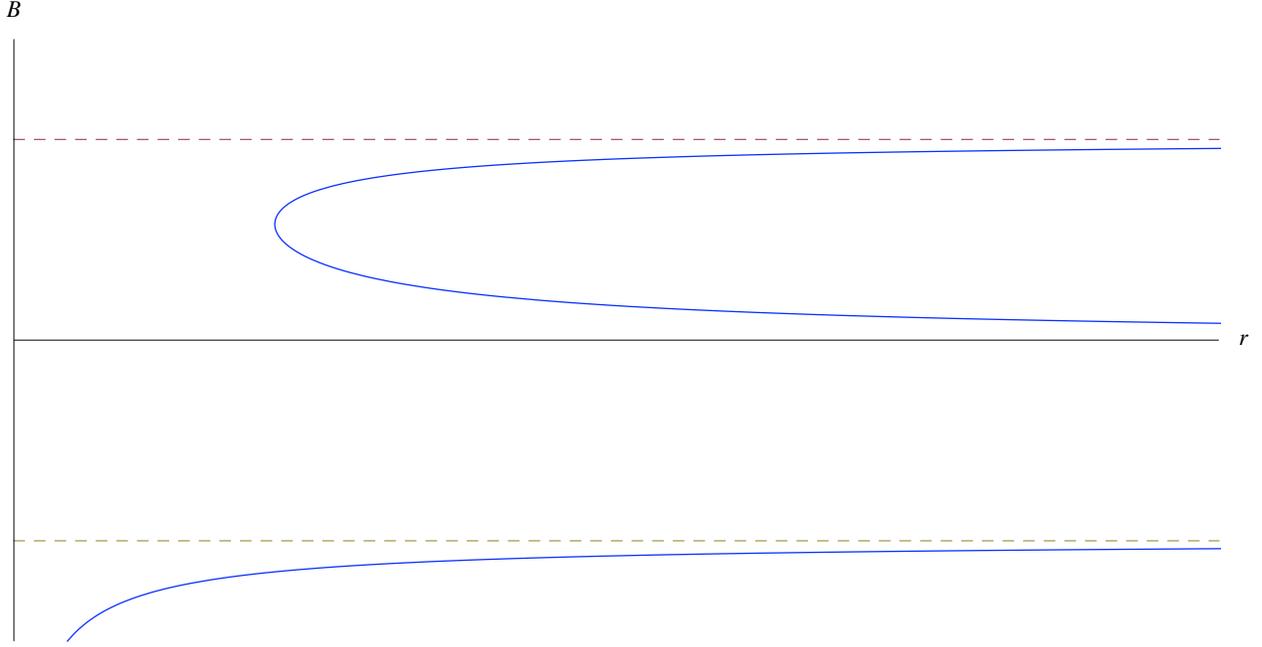}

    \caption[Cylindrically symmetric static magnetic field]{A sketch of the graph of Equation (\ref{bIMP}) for $k>0$. }

    \label{figure44}
\end{figure}

Using Equations (\ref{F1}) and (\ref{G1}), we get that the $F$ and $G$ invariants are:
\begin{eqnarray}
F&=& 2\left(B^2 - E^2 \right),
\end{eqnarray}and:
\begin{eqnarray}
G&=& 0.
\end{eqnarray} So Equation (\ref{EHlambda}) gives:
\begin{eqnarray}\label{4.73}
\frac{1}{\Lambda} & = & \frac{495 + 24\alpha^2(B^2 - E^2)\mp \Big|135-136\alpha^2(B^2 - E^2)\Big|}{224\alpha^2} ,\nonumber\\
\end{eqnarray}where the choice of $\pm$ depends on the polarization state of the field disturbance.  We shall call the polarization  corresponding to choosing $+$ in Equation (\ref{4.73}) the ``$(+)$ polarization state," and we call the other state the ``$(-)$ polarization state." (Note that our naming schemes for the polarization states are always \emph{ad hoc} and the naming scheme in the preset section is not meant to be consistent with  e.g., Section \ref{egonf} or Appendix \ref{Appendix:Key1}.)

Using Equations (\ref{emetric4.28}) - (\ref{emetric4.33}), we get that the only nonzero components of the effective metric are, up to a conformal factor:
\begin{eqnarray}\label{4.74}
 \widetilde{g}_{tt}&=&1 - \Lambda B^2 \\
\widetilde{g}_{tr}=\widetilde{g}_{rt}&=&-  \Lambda E B\\
\widetilde{g}_{rr}&=&-1-\Lambda E^2\\
\widetilde{g}_{\theta\theta}&=&-r^2 +\Lambda r^2 \left(B^2 - E^2\right)\\
\widetilde{g}_{zz}&=&-1.\label{4.78}
\end{eqnarray}

Using  Equation (\ref{radial2}), we get that the radial null geodesics are given by:
\begin{eqnarray}\label{4.79}
\frac{dr}{dt} &=& \frac{- E B \mp \sqrt{\frac{1}{\Lambda^2}-\frac{1}{\Lambda}\left(B^2 - E^2\right)}}{E^2+\frac{1}{\Lambda}}.
\end{eqnarray} The outgoing radial geodesics correspond to choosing $+$ in Equation (\ref{4.79}), and the ingoing geodesics correspond to choosing $-$.

\subsection{Proof of the main theorem}\label{proofmaintheorem}
 
  Since Equation (\ref{bIMP}) gives a cubic equation in $B$, there are three branches giving  $B$  as a real function of $r$ (see  the Table below).
 
 \begin{table}[h]\label{branchnames}
 \begin{center}
 \begin{tabular}{|c|c|}
 \hline 
branch  & range of $B^2$\\
\hline
I & 0 to $B_s^2$\\
II & $B_s^2$ to $3B_s^2$\\
III & $B_s^2$  to $\infty$ \\
 \hline
 \end{tabular} 
 \end{center}\caption[The three branches of $B$]{The three branches  of $B$ classified according to their ranges. }
 \end{table}

Let us consider the case where  the field is  prescribed by  branch I, with  $k>0$. This field is defined only for $r\geq r_s$. Note that  Equation (\ref{easy}) can be rewritten as:
\begin{eqnarray}\label{as pi (or maybe it's cake?)}
\frac{dB}{dr}&=&\frac{B\left(3B_s^2 - B^2\right)}{3r\left(B^2 - B_s^2\right)}.
\end{eqnarray} Thus we have that $B$ is a strictly decreasing function of $r$. As shown below, both   effective geometries  for this field contain black holes,  if $ \sqrt{\frac{45}{34\alpha^2}} < 
E  <   \sqrt{\frac{9}{4\alpha^2}}$. This section constitutes proof of Theorem \ref{mainthm} from Chapter \ref{chapter0}.

\begin{thm}For $E^2\geq \frac{45}{34\alpha^2 }$, the effective geometry corresponding to the $(+)$ polarization state contains a black hole (if $E>0$), or a white hole (if $E<0$), with the event horizon at $r=r_s$. \index{black holes}\index{white holes}
\end{thm}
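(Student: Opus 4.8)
The plan is to locate the event horizon directly from the coordinate velocity of the radial null geodesics, Equation (\ref{4.79}), and to show that this velocity degenerates exactly at $r=r_s$. Everything hinges on first pinning down $\Lambda$ for the $(+)$ state on branch~I, and the role of the hypothesis $E^2\geq\frac{45}{34\alpha^2}$ is precisely to let me remove the absolute value in (\ref{4.73}). Recall that on branch~I we have $0<B^2\leq B_s^2$ and, as shown just above, $B$ is strictly decreasing in $r$; hence $B^2-E^2$ is maximized at $r=r_s$, so the argument $135-136\alpha^2(B^2-E^2)$ of the absolute value is \emph{minimized} there. Evaluating at $B^2=B_s^2=\frac{E^2}{3}+\frac{15}{8\alpha^2}$ gives $135-136\alpha^2(B_s^2-E^2)=\frac{272}{3}\alpha^2E^2-120$, which is nonnegative exactly when $E^2\geq\frac{45}{34\alpha^2}$. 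Thus under the hypothesis the absolute value may be dropped (with a $+$ sign) throughout branch~I, and taking the $+$ choice in (\ref{4.73}) for the $(+)$ state collapses $\frac1\Lambda$ to the clean form $\frac{1}{\Lambda}=\frac{45}{16\alpha^2}-\frac12(B^2-E^2)$.

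Next I would substitute this into (\ref{4.79}). Since $\frac1\Lambda=\frac{45}{16\alpha^2}-\frac12(B^2-E^2)$ is decreasing in $B^2$, on branch~I it satisfies $\frac1\Lambda\geq B_s^2>0$, so the denominator $E^2+\frac1\Lambda$ of (\ref{4.79}) is positive and the geodesic formula is well behaved. The key step is the pair of identities that hold at $r=r_s$: namely $\frac1\Lambda=B_s^2$ and $\frac1\Lambda-(B^2-E^2)=E^2$. These make the radicand $\frac{1}{\Lambda^2}-\frac1\Lambda(B^2-E^2)$ equal to $B_s^2E^2$, so its square root is $B_s|E|$. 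Since $k>0$ forces $B=B_s>0$ at $r=r_s$, the outgoing branch of (\ref{4.79}) becomes $\frac{-EB_s+B_s|E|}{E^2+B_s^2}$, which vanishes identically for $E>0$. So outgoing radial null geodesics have zero coordinate velocity at $r=r_s$ and cannot cross outward: this is the event horizon.

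To be sure $r=r_s$ is a genuine horizon and not an isolated degeneracy, I would verify that the outgoing velocity is strictly positive for $r>r_s$. Writing $c:=\frac{45}{16\alpha^2}$, $x:=B^2-E^2$, and $L:=\frac1\Lambda=c-\frac{x}{2}$, the sign of the outgoing numerator is the sign of $L(L-x)-E^2B^2$; expanding, this equals the quadratic $\frac34\bigl(x-(2c+2E^2)\bigr)(x-x_s)$ with $x_s:=B_s^2-E^2$, and both factors are negative for $x<x_s$ (as $x_s<2c+2E^2$), so the expression is strictly positive for $r>r_s$. Hence outgoing light escapes in the exterior and is frozen at $r=r_s$, while the ingoing branch at $r=r_s$ equals $\frac{-2EB_s}{E^2+B_s^2}<0$, so infalling light crosses inward: exactly a black hole with horizon $r=r_s$. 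For $E<0$ the white hole follows by symmetry, since $B(r)$ depends on $E$ only through $E^2$ via (\ref{bIMP}), and the radicand and denominator of (\ref{4.79}) are even in $E$ while the $-EB$ term is odd, so $\frac{dr}{dt}\big|_{\textrm{out}}(-E)=-\frac{dr}{dt}\big|_{\textrm{in}}(E)$. The vanishing of the outgoing velocity at $r_s$ for $E>0$ therefore becomes the vanishing of the \emph{ingoing} velocity at $r_s$ for $E<0$, so ingoing light cannot enter and $r=r_s$ bounds a white hole.

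The main obstacle is the bookkeeping around the absolute value in (\ref{4.73}): getting the inequality direction right and recognizing that the monotonicity of $B$ on branch~I reduces the global sign condition to its value at the single point $r=r_s$, at which stage the hypothesis $E^2\geq\frac{45}{34\alpha^2}$ emerges on its own. After that the argument reduces to the two identities at $r_s$ and the quadratic factorization for $r>r_s$, which is routine and is the natural place to lean on the \emph{Mathematica} verifications noted for this chapter.
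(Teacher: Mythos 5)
Your treatment of the radial null geodesics is correct and follows essentially the same route as the paper. The hypothesis $E^2\geq\frac{45}{34\alpha^2}$ enters exactly as you say: the paper asserts, and your monotonicity-plus-endpoint computation cleanly justifies, that the absolute value in (\ref{4.73}) can be dropped on all of branch I, giving $\frac{1}{\Lambda}=\frac{45}{16\alpha^2}-\frac{1}{2}\left(B^2-E^2\right)$. Your identities at $r=r_s$ (namely $\frac{1}{\Lambda}\big|_{r=r_s}=B_s^2$ and radicand $=B_s^2E^2$) reproduce the paper's formulas $\frac{dr}{dt}\big|_{\textrm{out},\,r=r_s}=\frac{B_s(|E|-E)}{E^2+B_s^2}$ and $\frac{dr}{dt}\big|_{\textrm{in},\,r=r_s}=-\frac{B_s(|E|+E)}{E^2+B_s^2}$, and your factorization $\frac{3}{4}\bigl(x-(2c+2E^2)\bigr)(x-x_s)$ is in fact a sharper, fully explicit version of the paper's bare assertion that the numerator of (\ref{e583}) is positive for $r>r_s$ ``by virtue of $B^2<B_s^2$.'' Your even/odd symmetry argument for the $E<0$ white-hole case is a legitimate replacement for the paper's simultaneous $|E|$ bookkeeping.

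However, there is a genuine gap: you have only shown that \emph{radial} null geodesics are trapped, whereas the paper's definition of a black hole demands that \emph{all} future-directed outgoing null geodesics fail to escape. A null curve with $\frac{d\theta}{dt}\neq0$ or $\frac{dz}{dt}\neq0$ could a priori achieve a larger $\frac{dr}{dt}$ than the radial outgoing geodesic, so the horizon claim does not follow from the radial analysis alone. The paper devotes the last third of its proof to precisely this point: writing the null condition for an arbitrary curve with the metric components (\ref{4.74})--(\ref{4.78}), it shows that the $\frac{dr}{dt}$ of any null curve is sandwiched between the ingoing and outgoing radial values, which reduces to checking $\widetilde{g}_{rr}<0$ and $\widetilde{g}_{\theta\theta}<0$ (with $\widetilde{g}_{zz}=-1<0$ for free). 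The first is immediate from $\Lambda>0$; the second requires the inequality $B^2-E^2<\frac{1}{\Lambda}$, which the paper derives from $B^2-E^2\leq B_s^2-E^2<\frac{15}{8\alpha^2}$ --- in your notation, $x\leq x_s<\frac{2c}{3}$, equivalently $\frac{3}{2}x<c$, i.e.\ $x<c-\frac{x}{2}=\frac{1}{\Lambda}$. Your proposal would need this additional step (or some substitute for it) before the theorem as stated is actually proved.
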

\begin{proof} For the $(+)$ polarization state, with $E^2 \geq 45/(34\alpha^2)$, Equation (\ref{4.73}) gives:
\begin{eqnarray}
\frac{1}{\Lambda} & = & \frac{45}{16\alpha^2} - \frac{1}{2}\left(B^2 - E^2\right).
\end{eqnarray}
At $r=r_s$:
\begin{eqnarray}
\frac{1}{\Lambda}\Big|_{r=r_s} &=&B_s^2.
\end{eqnarray}
In fact, since $1/\Lambda \geq1/\Lambda|_{r=r_s},$ we have that $\Lambda > 0$ for all $r\geq r_s$. 

Equation (\ref{4.79}) yields:
\begin{eqnarray}\label{e583}
\frac{dr}{dt}\Big|_{\textrm{out},\ r\geq r_s} &=&\frac{-EB + \sqrt{\frac{1}{\Lambda^2} - \frac{1}{\Lambda}\left(B^2 - E^2\right)}}{E^2 + \frac{1}{\Lambda}}\nonumber\\
 &=&\frac{-EB + \sqrt{\left( \frac{45}{16\alpha^2} - \frac{1}{2}\left(B^2 - E^2\right)\right)^2 -\left( \frac{45}{16\alpha^2} - \frac{1}{2}\left(B^2 - E^2\right)\right)\left(B^2 - E^2\right)}}{E^2 + \frac{45}{16\alpha^2} - \frac{1}{2}\left(B^2 - E^2\right)}.\nonumber\\
\end{eqnarray} For $r>r_s$, the numerator in Equation (\ref{e583}) is positive by virtue of the fact that $B^2 < B_s^2 $ in the region $r>r_s$. Since $\Lambda > 0$, the denominator is also positive. So $dr/dt|_{\textrm{out},\ r > r_s}$ is positive in the region $r > r_s$.

For ingoing radial null geodesics, we have: 
\begin{eqnarray}\label{e584}
\frac{dr}{dt}\Big|_{\textrm{in},\ r\geq r_s} &=&\frac{-EB - \sqrt{\left( \frac{45}{16\alpha^2} - \frac{1}{2}\left(B^2 - E^2\right)\right)^2 -\left( \frac{45}{16\alpha^2} - \frac{1}{2}\left(B^2 - E^2\right)\right)\left(B^2 - E^2\right)}}{E^2 + \frac{45}{16\alpha^2} - \frac{1}{2}\left(B^2 - E^2\right)},\nonumber\\
\end{eqnarray} and $dr/dt|_{\textrm{in},\ r > r_s}$ is negative   since  $B^2 < B_s^2 $ in the region $r>r_s$.

At $r=r_s$, we get:
\begin{eqnarray}
\frac{dr}{dt}\Big|_\textrm{out, $r=r_s$}&=& \frac{B_s\left(|E| - E\right)}{E^2 + B_s^2},
\end{eqnarray}and:

\begin{eqnarray}
\frac{dr}{dt}\Big|_{\textrm{in, $r=r_s$}} = -\frac{B_s\left(|E| + E\right)}{E^2 + B_s^2}.
\end{eqnarray}
If $E$ is positive, then   $dr/dt\Big|_\textrm{out, $r=r_s$}=0$;  radial outgoing null geodesics at $r=r_s$ are trapped.  On the other hand, if $E$ is negative, then $dr/dt\Big|_\textrm{in, $r=r_s$}=0$; radial ingoing geodesics cannot reach $r=r_s$ from $r>r_s$ (the outside).

It remains to be shown that the nonradial curves are trapped at $r=r_s$. To this end, 
 suppose that we have an arbitrary null curve in the effective spacetime. We write:
 \begin{eqnarray}
 0 = \widetilde{g}_{tt} + 2 \widetilde{g}_{tr}\frac{dr}{dt}+ \widetilde{g}_{rr} \left(\frac{dr}{dt}\right)^2  +  \widetilde{g}_{\theta\theta}\left(\frac{d\theta}{dt}\right)^2  + \widetilde{g}_{zz} \left(\frac{dz}{dt}\right)^2,
 \end{eqnarray} with $\widetilde{g}_{\mu\nu}$ given by Equations (\ref{4.74}) - (\ref{4.78}). 
 Then:
 \begin{eqnarray}
 \frac{dr}{dt} = \frac{-\widetilde{g}_{tr} \pm \sqrt{\widetilde{g}_{tr}^{\ 2}-\widetilde{g}_{tt}\widetilde{g}_{rr} - \widetilde{g}_{rr}\left( \widetilde{g}_{\theta\theta}\left(\frac{d\theta}{dt}\right)^2  +  \widetilde{g}_{zz} \left(\frac{dz}{dt}\right)^2 \right)}}{\widetilde{g}_{rr}}
  \end{eqnarray}
 
We claim that:
 \begin{eqnarray} \frac{-\widetilde{g}_{tr}+ \sqrt{\widetilde{g}_{tr}^{\ 2}-\widetilde{g}_{tt}\widetilde{g}_{rr} }}{\widetilde{g}_{rr}}
\leq \frac{dr}{dt} \leq  \frac{-\widetilde{g}_{tr}- \sqrt{\widetilde{g}_{tr}^{\ 2}-\widetilde{g}_{tt}\widetilde{g}_{rr} }}{\widetilde{g}_{rr}}. \end{eqnarray} That is, an arbitrary null curve cannot climb up to larger radii faster than a radially outward geodesic, and cannot fall down  to smaller radii faster than a radially inward geodesic. In other words,  nonradial null curves are trapped if the radial null geodesics are trapped. 

Note that in order to prove the claim, it suffices to show that $ \widetilde{g}_{rr} < 0$ and  $ \widetilde{g}_{\theta\theta} < 0$.

We get that $\widetilde{g}_{rr}\ (= -1 - \Lambda E^2) $ is negative since  $\Lambda>0$.

To get $\widetilde{g}_{\theta\theta} = -r^2 + \Lambda r^2 (B^2 -E^2 ) <0$, it suffices to show that $B^2 - E^2 < 1/\Lambda$. To this end, note that:
\begin{eqnarray}\label{589}
B^2  - E^2 &\leq & B_s^2 - E^2\nonumber\\
&<& \frac{15}{8\alpha^2}.
\end{eqnarray}Muliplying both sides of (\ref{589}) by $3/2$, we get:
\begin{eqnarray}
\frac{3}{2}\left(B^2 - E^2 \right) < \frac{45}{16\alpha^2},
\end{eqnarray} so:
\begin{eqnarray}
B^2 - E^2 &<&\frac{45}{16\alpha^2} - \frac{1}{2}\left(B^2 - E^2\right)\nonumber\\\nonumber\\
&=&\frac{1}{\Lambda}.
\end{eqnarray}

\end{proof}

\begin{thm}
For  $\frac{45}{34\alpha^2} <
E^2 < \frac{9}{4\alpha^2}$, the effective geometry for the $(-)$ polarization state contains a black hole (if $E>0$), or a white-hole (if $E<0$), with an effective event horizon at  $r=r_c$ such that $r >  r_s$. In fact,
$r_c =9k\sqrt{5}/\left(7\alpha E^2 \sqrt{18 - 8 \alpha^2 E^2}\right)$.\index{black holes}\index{white holes}
\end{thm}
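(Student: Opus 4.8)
For the $(-)$ polarization state the choice of sign in Equation (\ref{4.73}) must be made, and the plan is to first determine the correct branch of the absolute value $|135 - 136\alpha^2(B^2 - E^2)|$. Unlike the $(+)$ case, here $1/\Lambda$ will take a different algebraic form, and I would begin by computing it explicitly as a function of $B^2 - E^2$ under the hypothesis $\frac{45}{34\alpha^2} < E^2 < \frac{9}{4\alpha^2}$. The two inequalities in the hypothesis are not decorative: the lower bound $E^2 > 45/(34\alpha^2)$ and the upper bound $E^2 < 9/(4\alpha^2)$ are almost certainly exactly what is needed to (a) fix which side of the absolute value we are on over the relevant range of $B$, and (b) guarantee that the horizon radius $r_c$ falls strictly outside $r_s$ rather than at or inside it. So the first real step is to pin down $1/\Lambda$ and verify its sign throughout $r \geq r_s$, exactly as was done for the $(+)$ state.

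**Locating the horizon via $dr/dt|_{\textrm{out}} = 0$.** With $\Lambda$ in hand, I would substitute into the outgoing radial geodesic formula (\ref{4.79}) and solve $dr/dt|_{\textrm{out}} = 0$. Setting the numerator $-EB + \sqrt{\tfrac{1}{\Lambda^2} - \tfrac{1}{\Lambda}(B^2 - E^2)} = 0$ (taking $E > 0$ for the black-hole case) requires $EB > 0$ and, upon squaring, $E^2 B^2 = \tfrac{1}{\Lambda^2} - \tfrac{1}{\Lambda}(B^2 - E^2)$. This is an algebraic equation for the value $B = B_c$ of the magnetic field at the horizon. Substituting the explicit form of $1/\Lambda$ for the $(-)$ state turns this into a polynomial relation in $B_c^2$ and $E^2$; solving it yields $B_c$ as an explicit function of $E^2$ and $\alpha$. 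I would then feed $B_c$ back into the implicit solution (\ref{bIMP}), $B + \tfrac{8\alpha^2}{45}(E^2 B - B^3) = k/r$, and solve for $r = r_c$. The claimed closed form $r_c = 9k\sqrt{5}/(7\alpha E^2\sqrt{18 - 8\alpha^2 E^2})$ should drop out of this substitution; I expect the factor $\sqrt{18 - 8\alpha^2 E^2} = \sqrt{2}\sqrt{9 - 4\alpha^2 E^2}$ to arise precisely from the upper bound $E^2 < 9/(4\alpha^2)$ keeping that radicand positive, which is why the hypothesis forces $E^2$ below $9/(4\alpha^2)$.

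**Verifying $r_c > r_s$ and the trapping behavior.** Next I would check that the horizon genuinely lies outside $r_s$. Since $B$ is strictly decreasing in $r$ along branch I (Equation (\ref{as pi (or maybe it's cake?)})), and $B = B_s$ at $r = r_s$, the inequality $r_c > r_s$ is equivalent to $B_c < B_s$, i.e. $B_c^2 < E^2/3 + 15/(8\alpha^2)$. I would verify this inequality on $B_c^2$ directly from its explicit value, again using both halves of the hypothesis on $E^2$. Then, exactly as in the previous theorem, I would confirm that $dr/dt|_{\textrm{out}}$ changes sign appropriately across $r_c$ (positive for $r > r_c$, vanishing at $r = r_c$, negative just inside) so that outgoing radial geodesics are trapped, and that ingoing geodesics remain ingoing. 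Finally, to upgrade from radial to arbitrary null curves I would reuse the trapping argument from the $(+)$ case: it suffices to show $\widetilde{g}_{rr} < 0$ and $\widetilde{g}_{\theta\theta} < 0$ at and near $r_c$. The former follows from $\Lambda > 0$; for the latter I would show $B^2 - E^2 < 1/\Lambda$ at the horizon, which should again reduce to the same bounds on $E^2$.

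**Main obstacle.** The hard part will be the bookkeeping in the absolute-value resolution of $1/\Lambda$ together with confirming that $\Lambda$ stays the correct sign across the \emph{entire} interval $r \in [r_s, \infty)$, not merely at the horizon; a sign flip of $135 - 136\alpha^2(B^2 - E^2)$ somewhere in this range would invalidate the closed-form expression for $1/\Lambda$ and hence the whole computation. I expect the two-sided hypothesis $\frac{45}{34\alpha^2} < E^2 < \frac{9}{4\alpha^2}$ to be exactly the condition that prevents such a flip while keeping all radicands positive, so the delicate step is checking those inequalities consistently rather than any single algebraic manipulation. The final closed-form verification of $r_c$ is then a direct, if tedious, substitution that I would relegate to a \emph{Mathematica} check of the kind already mentioned in Appendix \ref{Appendix:Key1}.
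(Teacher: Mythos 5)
Your proposal is correct and follows essentially the same route as the paper's own proof: resolve the absolute value in (\ref{4.73}) via the lower bound $E^2 > 45/(34\alpha^2)$ to get $1/\Lambda = \tfrac{45}{28\alpha^2} + \tfrac{5}{7}(B^2-E^2)$ on all of branch I, use the upper bound $E^2 < 9/(4\alpha^2)$ to get $\Lambda > 0$, solve $dr/dt\big|_{\textrm{out}}=0$ from (\ref{4.79}) for the unique positive root $B_c = \sqrt{(45-20\alpha^2E^2)/(8\alpha^2)}$, substitute into (\ref{bIMP}) to obtain the closed form for $r_c$, and handle nonradial null curves exactly as in the $(+)$ case by showing $\widetilde{g}_{rr}<0$ and $\widetilde{g}_{\theta\theta}<0$ (i.e. $B^2 - E^2 < 1/\Lambda$). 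The only cosmetic difference is ordering: the paper first proves existence of the horizon by the intermediate value theorem (comparing the sign of $dr/dt\big|_{\textrm{out}}$ at $r=r_s$, which the lower bound makes negative, with its positive sign at $r=\infty$) and then exhibits $B_c$ for uniqueness and the formula, whereas you go directly to the explicit root and place it strictly inside $(0,B_s)$ using the monotonicity of $B(r)$ --- both are valid, and your observation that $B_c < B_s$ is equivalent to $E^2 > 45/(34\alpha^2)$ is precisely the content of the paper's endpoint sign check.
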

\begin{proof}
For the $(-)$ polarization state, with $E^2 > 45/(34\alpha^2)$, Equation (\ref{4.73}) gives:
\begin{eqnarray}
\frac{1}{\Lambda} &=& \frac{45}{28\alpha^2} + \frac{5}{7}\left(B^2 - E^2\right).
\end{eqnarray}Since $E^2<9/(4\alpha^2)$,  we have that  $\Lambda> 0$.

At $r=r_s$:
\begin{eqnarray}\label{5.90}
\frac{1}{\Lambda}\Big|_{r=r_s} &=& \frac{495 - 80\alpha^2 E^2 }{168\alpha^2}.
\end{eqnarray}
Moreover, for outgoing radial null geodesics, Equation (\ref{4.79}) gives:
\begin{eqnarray}\label{5.88}
\frac{dr}{dt}\Big|_\textrm{out,  $r=r_s$}
&=&\left(\frac{-168\alpha^2 E  + 4\alpha\sqrt{30(99 - 16\alpha^2 E^2)}}{495  +  88 \alpha^2 E^2}\right)B_s,
\end{eqnarray} which is negative if  $E> +\sqrt{ 45/(34\alpha^2)}$. This means that the outgoing radial geodesics, initiated from $r=r_s$, are compelled to fall down to smaller radii.  At the other extreme ($r=\infty$), note that:
\begin{eqnarray}
\frac{dr}{dt}\Big|_{\textrm{out, } r=\infty} &=&  \frac{1}{\sqrt{1+\Lambda E^2}}>0.
\end{eqnarray}
Hence, by the intermediate-value-theorem, there is a radius $r_c$,  which is greater than $r_s$ and less than $\infty$, where $dr/dt\Big|_{\textrm{out, } r= r_c}=0$ (black hole event horizon at $r_c$, if $E>\sqrt{ 45/(34\alpha^2)}$).  

In fact, the critical radius $r_c$  is unique and we can calculate it. If we set the left hand side of Equation (\ref{4.79}) equal to 0, and solve for $B$, then we find that there is only one real-valued positive solution, namely:
\begin{eqnarray}
B_c &=& \sqrt{\frac{45 - 20 \alpha^2 E^2}{8\alpha^2}}.
\end{eqnarray} Thereby, using (\ref{bIMP}), we get:
\begin{eqnarray}\label{explicitrc}
r_c & = & \frac{k}{B_c + \frac{8\alpha^2}{45}\left(E^2B_c - B_c^3\right)}\nonumber\\\nonumber\\
&=&\frac{9k\sqrt{5}}{7\alpha E^2 \sqrt{18 - 8\alpha^2 E^2}}.
\end{eqnarray}Note that $dr/dt|_{\textrm{out} }$ changes sign at $r = r_c$ since $dr/dt|_{\textrm{out,} r = r_s}$ is negative and  $dr/dt|_{\textrm{out,} r = \infty}$ is positive. That is, any outgoing geodesic in the region $r<r_c$ is compelled to fall inward to smaller $r$; any outgoing geodesic in the region $r>r_c$ will escape to larger $r$.

Next we consider the ingoing null geodesics.  For these, Equation (\ref{4.79}) gives:
\begin{eqnarray}\label{5.96}
\frac{dr}{dt}\Big|_\textrm{in,  $r=r_s$}
&=&\left(\frac{-168\alpha^2 E  - 4\alpha\sqrt{30(99 - 16\alpha^2 E^2)}}{495  +  88 \alpha^2 E^2}\right)B_s,
\end{eqnarray} This is positive if  $E< -\sqrt{ 45/(34\alpha^2)}$. At $r=\infty$, we have:
\begin{eqnarray}
\frac{dr}{dt}\Big|_{\textrm{in, } r=\infty} &=&  -\frac{1}{\sqrt{1+\Lambda E^2}}<0.
\end{eqnarray} Thus, if $E$ is negative, there is a  radius $r_c$,  between $r_s$ and  $\infty$, where $dr/dt\Big|_{\textrm{in, } r= r_c}=0$ (white hole  event horizon at $r_c$).  The quantity $dr/dt\Big|_{\textrm{in}}$ changes sign at $r_c$ in such a way that ingoing geodesics issuing from the region $r<r_c$ will be compelled to escape outward to larger $r$, and ingoing geodesics issuing from $r>r_c$ will fall inward to smaller $r$.

As in the proof of the previous theorem, we get that the nonradial geodesics are  trapped by the event horizon by showing that  $\widetilde{g}_{\theta\theta}<0$ and $\widetilde{g}_{rr}<0$. The fact that $\Lambda>0$ gives $\widetilde{g}_{rr}<0$.

As before, to get $\widetilde{g}_{\theta\theta}<0$, it suffices to show that $B^2 - E^2 < 1/\Lambda$. To this end, note that:
\begin{eqnarray}\label{599}
B^2 - E^2 &\leq&B_s^2 -E^2\nonumber\\
&<&\frac{45}{8\alpha^2}.
\end{eqnarray}Multiplying both sides of (\ref{599}) by $2/7$, we get:
\begin{eqnarray}
\frac{2}{7}\left(B^2 - E^2\right)< \frac{45}{28\alpha^2},
\end{eqnarray} so:
\begin{eqnarray}
B^2 - E^2& < &\frac{45}{28\alpha^2}+\frac{5}{7}\left(B^2 - E^2\right)\nonumber\\
\nonumber\\
&=&\frac{1}{\Lambda}.
\end{eqnarray}

\end{proof}

\newcommand{\mathsym}[1]{{}}
\newcommand{\unicode}{{}}

\appendix
\chapter{\emph{Mathematica} notebook}
\label{Appendix:Key1}

The purpose of this appendix is to show how \emph{Mathematica} (version 8) can be used to check or carry out the calculations in \ref{type} and \ref{maxwellianapproximation}. Our  implementation of tensor calculus in  \emph{Mathematica} is modeled on applications found elsewhere, such  
M\"uller and Grave \cite{MuellerGrave} and the downloadable notebooks of  Parker \cite{lparker}.

Let us begin by clearing out the variables that will be used:

\ \\ \noindent\(\pmb{\text{Clear}[\text{coord},t,r,\theta ,z,i,j,k,l,u,v,\text{metric},}
\pmb{\text{cometric},\text{affine},\text{Afield},\text{faraday},\text{faraday1},\text{faraday2},}\\
\pmb{\text{Fspecialcase},\text{maxwell},\text{Gspecialcase},}
\pmb{\text{maxwell1},\text{maxwell2},\text{substress},\text{ecometric},\Lambda, \text{emetric,}}\\
\pmb{\text{simplifiedemetric},\alpha , \text{lambdaplus}, }
\pmb{\ \text{lambdaminus},\text{radA1},\text{radA2},\text{radB1},\text{radB2},\text{CDfaraday2,}}\\
\pmb{s,o,\text{Ffaraday2},\text{CDFfaraday2},}
\pmb{\ \text{Gmaxwell2},\text{CDGmaxwell2},\text{righthandside},U,V,} 
\pmb{\ \omega ,A]}\)
\
\\

Next, we  specify the coordinate system (cylindrical coordinates) and the background metric (Minkowski spacetime):

\ \\ \noindent\(\pmb{\text{coord}\text{:=}\text{coord}=\{t,r,\theta ,z\}}\)
\pmb{}\\

\ \\ \noindent\(\pmb{\text{metric}\text{:=}}
\pmb{\text{metric}=\left\{\{1,0,0,0\},\{0,-1,0,0\},\left\{0,0,-r^2,0\right\},\right.}
\pmb{\{0,0,0,-1\}\}}\)
\
\\

The background cometric is computed by entering:

\ \\ \noindent\(\pmb{\text{cometric}\text{:=}\text{cometric}=\text{Inverse}[\text{metric}]}\)
\
\\

The Christoffel symbols  (for the background metric) are calculated by entering:

\ \\ \noindent\(\pmb{\text{affine}\text{:=}}
\pmb{\text{affine}=\text{FullSimplify}[}\\
\pmb{\text{Table}[}
\pmb{\frac{1}{2}\text{Sum}[(\text{cometric}[[k,l]])*}
\pmb{(D[\text{metric}[[i,l]],\text{coord}[[j]]]+ }\\
\pmb{D[\text{metric}[[j,l]],\text{coord}[[i]]]-}
\pmb{D[\text{metric}[[i,j]],\text{coord}[[l]]]),}\\
\pmb{\{l,1,4\}],\{k,1,4\},\{i,1,4\},\{j,1,4\}]]}\)

\section{The field tensors}

Our first task is to check the results given in Section \ref{5.1.1}. To this end, we input the $A$-field that we wish to study (coinciding with
Equations (\ref{at}) - (\ref{az})):

\ \\ \noindent\(\pmb{\text{Afield}\text{:=}\text{Afield}=\{0,0,u[t,r],v[t,r]\}}\)
\
\\

Next we computer the field tensor \(F_{\mu \nu }\), called {``}faraday:{''}

\ \\ \noindent\(\pmb{\text{faraday}\text{:=}}
\pmb{\text{faraday}=}\\
\pmb{\text{Table}[D[\text{Afield}[[j]],\text{coord}[[i]]]-}
\pmb{D[\text{Afield}[[i]],\text{coord}[[j]]],\{i,1,4\},\{j,1,4\}]}\)
\
\\

The components of \(F_{\mu \nu }\) are displayed by entering:

\ \\ \noindent\(\pmb{\text{faraday}\text{//}\text{MatrixForm}}\)

\begin{eqnarray}\left(
\begin{array}{cccc}
 0 & 0 & u^{(1,0)}[t,r] & v^{(1,0)}[t,r] \\
 0 & 0 & u^{(0,1)}[t,r] & v^{(0,1)}[t,r] \\
 -u^{(1,0)}[t,r] & -u^{(0,1)}[t,r] & 0 & 0 \\
 -v^{(1,0)}[t,r] & -v^{(0,1)}[t,r] & 0 & 0
\end{array}\right)
\end{eqnarray}
In \emph{Mathematica},  $u^{(1,0)}[t,r] $ denotes $\partial_t u$, and $u^{(0,1)}[t,r]$ denotes $\partial_r u$, etc. This output agrees with Equation (\ref{farad}).

Raising the first index, we get   \(F^{\mu }{}_{\nu }\). Call this ``faraday1:"

\ \\ \noindent\(\pmb{\text{faraday1}\text{:=}}
\pmb{\text{faraday1}=}
\pmb{\text{FullSimplify}[}\\
\pmb{\text{Table}[\text{Sum}[\text{cometric}[[i,k]]\text{faraday}[[k,j]],}
\pmb{\{k,1,4\}],\{i,1,4\},\{j,1,4\}]]}\)
\
\\

Displaying the components of  \(F^{\mu }{}_{\nu }\) in matrix form, as in Equation (\ref{farad1}):

\ \\ \noindent\(\pmb{\text{faraday1}\text{//}\text{MatrixForm}}\)

\begin{eqnarray}\left(
\begin{array}{cccc}
 0 & 0 & u^{(1,0)}[t,r] & v^{(1,0)}[t,r] \\
 0 & 0 & -u^{(0,1)}[t,r] & -v^{(0,1)}[t,r] \\
 \frac{u^{(1,0)}[t,r]}{r^2} & \frac{u^{(0,1)}[t,r]}{r^2} & 0 & 0 \\
 v^{(1,0)}[t,r] & v^{(0,1)}[t,r] & 0 & 0
\end{array}\right)
\end{eqnarray}

Raising the second index, we get \(F^{\mu \nu }\) (``faraday2"):

\ \\ \noindent\(\pmb{\text{faraday2}\text{:=}}
\pmb{\text{faraday2}=}
\pmb{\text{FullSimplify}[}\\
\pmb{\text{Table}[\text{Sum}[\text{cometric}[[k,j]]\text{faraday1}[[i,k]],}
\pmb{\{k,1,4\}],\{i,1,4\},\{j,1,4\}]]}\)
\
\\

As in Equation (\ref{farad2}), we have:

\ \\ \noindent\(\pmb{\text{faraday2}\text{//}\text{MatrixForm}}\)

\begin{eqnarray}\left(\begin{array}{cccc}
 0 & 0 & -\frac{u^{(1,0)}[t,r]}{r^2} & -v^{(1,0)}[t,r] \\
 0 & 0 & \frac{u^{(0,1)}[t,r]}{r^2} & v^{(0,1)}[t,r] \\
 \frac{u^{(1,0)}[t,r]}{r^2} & -\frac{u^{(0,1)}[t,r]}{r^2} & 0 & 0 \\
 v^{(1,0)}[t,r] & -v^{(0,1)}[t,r] & 0 & 0
\end{array}\right)
\end{eqnarray}

We get that the  $F$-invariant is, in agreement with (\ref{F1}):

\ \\ \noindent\(\pmb{\text{Fspecialcase}=}
\pmb{\text{Simplify}[\text{Sum}[\text{faraday}[[i,j]]\text{faraday2}[[i,j]],}
\pmb{\{i,1,4\},\{j,1,4\}]]}\)

\begin{eqnarray}2 \left(\frac{u^{(0,1)}[t,r]^2}{r^2}+v^{(0,1)}[t,r]^2-\right.
\left.\frac{u^{(1,0)}[t,r]^2}{r^2}-v^{(1,0)}[t,r]^2\right)\end{eqnarray}

For  the dual tensor \(F^*{}_{\mu \nu }\) (``maxwell'':)

\ \\ \noindent\(\pmb{\text{maxwell}\text{:=}}
\pmb{\text{maxwell}=\text{FullSimplify}[}
\pmb{\text{Table}\left[\frac{1}{2}\text{Sqrt}[-\text{Det}[\text{metric}]]\right.}\\
\pmb{\text{Sum}[\text{Signature}[\{i,j,k,l\}]\text{faraday2}[[i,j]],}
\pmb{\{i,1,4\},\{j,1,4\}],\{k,1,4\},\{l,1,4\}],}
\pmb{\ r\geq 0]}\)
\
\\ 

Displaying \(F^*{}_{\mu \nu }\)  as a matrix, as in Equation (\ref{max}):

\ \\ \noindent\(\pmb{\text{maxwell}\text{//}\text{MatrixForm}}\)

\begin{eqnarray}\left(
\begin{array}{cccc}
 0 & 0 & -r v^{(0,1)}[t,r] & \frac{u^{(0,1)}[t,r]}{r} \\
 0 & 0 & -r v^{(1,0)}[t,r] & \frac{u^{(1,0)}[t,r]}{r} \\
 r v^{(0,1)}[t,r] & r v^{(1,0)}[t,r] & 0 & 0 \\
 -\frac{u^{(0,1)}[t,r]}{r} & -\frac{u^{(1,0)}[t,r]}{r} & 0 & 0
\end{array}
\right)
\end{eqnarray}

The $G$-invariant is, in agreement with Equation (\ref{G1}):

\ \\ \noindent\(\pmb{\text{Gspecialcase}=\text{Sum}[\text{maxwell}[[i,j]]\text{faraday2}[[i,j]],}
\pmb{\{i,1,4\},\{j,1,4\}]}\)

\begin{eqnarray}\frac{4 v^{(0,1)}[t,r] u^{(1,0)}[t,r]}{r}-\frac{4 u^{(0,1)}[t,r] v^{(1,0)}[t,r]}{r}
\end{eqnarray}

To get \(F^{*\mu }{}_{\nu }\), enter:

\ \\ \noindent\(\pmb{\text{maxwell1}\text{:=}}
\pmb{\text{maxwell1}=}
\pmb{\text{FullSimplify}[}\\
\pmb{\text{Table}[\text{Sum}[\text{cometric}[[i,k]]\text{maxwell}[[k,j]],}
\pmb{\{k,1,4\}],\{i,1,4\},\{j,1,4\}]]}\)
\
\\

For  \(F^{*\mu \nu }\), enter:

\ \\ \noindent\(\pmb{\text{maxwell2}\text{:=}}
\pmb{\text{maxwell2}=}
\pmb{\text{FullSimplify}[}\\
\pmb{\text{Table}[\text{Sum}[\text{cometric}[[k,j]]\text{maxwell1}[[i,k]],}
\pmb{\{k,1,4\}],\{i,1,4\},\{j,1,4\}]]}\)
\
\\

The components of  \(F^{*\mu \nu }\), as in (\ref{max2}), are:

\ \\ \noindent\(\pmb{\text{maxwell2}\text{//}\text{MatrixForm}}\)

\begin{eqnarray}\left(
\begin{array}{cccc}
 0 & 0 & \frac{v^{(0,1)}[t,r]}{r} & -\frac{u^{(0,1)}[t,r]}{r} \\
 0 & 0 & -\frac{v^{(1,0)}[t,r]}{r} & \frac{u^{(1,0)}[t,r]}{r} \\
 -\frac{v^{(0,1)}[t,r]}{r} & \frac{v^{(1,0)}[t,r]}{r} & 0 & 0 \\
 \frac{u^{(0,1)}[t,r]}{r} & -\frac{u^{(1,0)}[t,r]}{r} & 0 & 0
\end{array}\right)
\end{eqnarray}
\section{The effective metric coefficients}
Now let us check Section \ref{5.1.2}. To this end, we calculate the substress tensor \(F^{\mu }{}_{\lambda }F^{\lambda \nu }\):

\ \\ \noindent\(\pmb{\text{substress}\text{:=}}
\pmb{\text{substress}=}
\pmb{\text{FullSimplify}[}\\
\pmb{\text{Table}[\text{Sum}[\text{faraday1}[[i,k]]\text{faraday2}[[k,j]],}
\pmb{\{k,1,4\}],\{i,1,4\},\{j,1,4\}]]}\)
\
\\

The nonzero components of the substress are displayed upon entering the lines:

\ \\ \noindent\(\pmb{\text{Do}[\text{If}[\text{UnsameQ}[\text{substress}[[i,j]],0],}\\
\pmb{\text{CellPrint}[ }\\
\pmb{\text{DisplayForm}[}\\
\pmb{\text{RowBox}[}\\
\pmb{\{\text{SubscriptBox}[\text{SuperscriptBox}[\text{``F''},\text{coord}[[i]]],}\\
\pmb{\text{{``}$\lambda ${''}}],\text{SuperscriptBox}[\text{``F''},}\\
\pmb{\text{RowBox}[\{\text{{``}$\lambda ${''}},\text{coord}[[j]]\}]],\text{{``}={''}},}\\
\pmb{\text{substress}[[i,j]]\}]]}\\
\pmb{]\text{  }],\{i,1,4\},\{j,1,4\}]}\)

\pmb{}\\

\(F^t{}_{\lambda }F^{\lambda t}=\frac{u^{(1,0)}[t,r]^2}{r^2}+v^{(1,0)}[t,r]^2\)

\(F^t{}_{\lambda }F^{\lambda r}=-\frac{u^{(0,1)}[t,r] u^{(1,0)}[t,r]}{r^2}-v^{(0,1)}[t,r] v^{(1,0)}[t,r]\)

\(F^r{}_{\lambda }F^{\lambda t}=-\frac{u^{(0,1)}[t,r] u^{(1,0)}[t,r]}{r^2}-v^{(0,1)}[t,r] v^{(1,0)}[t,r]\)

\(F^r{}_{\lambda }F^{\lambda r}=\frac{u^{(0,1)}[t,r]^2}{r^2}+v^{(0,1)}[t,r]^2\)

\(F^{\theta }{}_{\lambda }F^{\lambda \theta }=\frac{u^{(0,1)}[t,r]^2-u^{(1,0)}[t,r]^2}{r^4}\)

\(F^{\theta }{}_{\lambda }F^{\lambda z}=\frac{u^{(0,1)}[t,r] v^{(0,1)}[t,r]-u^{(1,0)}[t,r] v^{(1,0)}[t,r]}{r^2}\)

\(F^z{}_{\lambda }F^{\lambda \theta }=\frac{u^{(0,1)}[t,r] v^{(0,1)}[t,r]-u^{(1,0)}[t,r] v^{(1,0)}[t,r]}{r^2}\)

\(F^z{}_{\lambda }F^{\lambda z}=v^{(0,1)}[t,r]^2-v^{(1,0)}[t,r]^2\)
\
\\

The output above agrees with Equations (\ref{substressTT}) - (\ref{substressZZ}).

To calculate the effective cometric, enter:

\ \\ \noindent\(\pmb{\text{ecometric}\text{:=}\text{ecometric}=\text{cometric}+\Lambda\  \text{substress}}\)
\
\\

Here, $\Lambda $ stands for \(\Lambda _{\pm }\) in Equation (\ref{EHlambda}). The effective metric is calculated by entering:

\ \\ \noindent\(\pmb{\text{emetric}\text{:=}\text{emetric}=\text{Inverse}[\text{ecometric}]}\)
\
\\

Let us multiply the effective metric by a certain conformal factor; this choice of conformal factor considerably simplifies the effective metric coefficients:

\ \\ \noindent\(\pmb{\text{simplifiedemetric}\text{:=}}
\pmb{\text{simplifiedemetric}=}\\
\pmb{\text{FullSimplify}[}
\pmb{-\left(\left(\left(-1+\Lambda\  v^{(0,1)}[t,r]^2\right) \left(r^2+\Lambda\  u^{(1,0)}[t,r]^2\right)-\right.\right.}\\
\pmb{2 \Lambda^2 u^{(0,1)}[t,r] v^{(0,1)}[t,r] u^{(1,0)}[t,r] }
\pmb{v^{(1,0)}[t,r]-}\\
\pmb{\left.\left.r^2 \Lambda\  v^{(1,0)}[t,r]^2+\Lambda\  u^{(0,1)}[t,r]^2 \left(1+\Lambda\  v^{(1,0)}[t,r]^2\right)\right)/r^2\right)}
\pmb{\text{emetric}]}\)
\
\\

The nonzero components of the conformally rescaled effective metric are displayed upon entering the lines:

\ \\ \noindent\(\pmb{\text{Do}[\text{If}[\text{UnsameQ}[\text{simplifiedemetric}[[i,j]],0],}\\
\pmb{\text{CellPrint}[ }
\pmb{\text{DisplayForm}[}\\
\pmb{\text{RowBox}[\{\text{coord}[[i]],\text{coord}[[j]],\text{{``}-comp{''}},}\\
\pmb{\text{{``}={''}},\text{simplifiedemetric}[[i,j]]}
\pmb{\}]]}\\
\pmb{]\text{  }],\{i,1,4\},\{j,1,4\}]}\)
\
\\

\(tt-\text{comp}=1-\frac{\Lambda  u^{(0,1)}[t,r]^2}{r^2}-\Lambda  v^{(0,1)}[t,r]^2\)

\(tr-\text{comp}=-\frac{\Lambda  u^{(0,1)}[t,r] u^{(1,0)}[t,r]}{r^2}-\Lambda  v^{(0,1)}[t,r] v^{(1,0)}[t,r]\)

\(rt-\text{comp}=-\frac{\Lambda  u^{(0,1)}[t,r] u^{(1,0)}[t,r]}{r^2}-\Lambda  v^{(0,1)}[t,r] v^{(1,0)}[t,r]\)

\(rr-\text{comp}=-1-\frac{\Lambda  u^{(1,0)}[t,r]^2}{r^2}-\Lambda  v^{(1,0)}[t,r]^2\)

\(\theta \theta -\text{comp}=r^2 \left(-1+\Lambda  v^{(0,1)}[t,r]^2-\Lambda  v^{(1,0)}[t,r]^2\right)\)

\(\theta z-\text{comp}=-\Lambda  u^{(0,1)}[t,r] v^{(0,1)}[t,r]+\Lambda  u^{(1,0)}[t,r] v^{(1,0)}[t,r]\)

\(z\theta -\text{comp}=-\Lambda  u^{(0,1)}[t,r] v^{(0,1)}[t,r]+\Lambda  u^{(1,0)}[t,r] v^{(1,0)}[t,r]\)

\(zz-\text{comp}=-\frac{r^2-\Lambda  u^{(0,1)}[t,r]^2+\Lambda  u^{(1,0)}[t,r]^2}{r^2}\)
\
\\

This output agrees with Equations (\ref{emetric4.28}) - (\ref{emetric4.33}).
\section{The radial null geodesics}
Now we check Equations (\ref{in2o}) and (\ref{out2o}) from Section \ref{5.1.3}, which concerns  radial null geodesics in the effective geometry. To this end, we will need to let \emph{Mathematica} compute the values \(\Lambda _{\pm
}\) from Equation  (\ref{EHlambda}). The value  \(\Lambda _{+
}\) is designated ``lambdaplus," and  \(\Lambda _{-
}\) is designated ``lambdaminus:"

\ \\ \noindent\(\pmb{\text{lambdaplus}\text{:=}}
\pmb{\text{lambdaplus}=}
\pmb{\left.\left(224\alpha ^2\right)\right/}
\pmb{\left(495+12\text{Fspecialcase}\ \alpha ^2-\right.}\\
\pmb{\text{Sqrt}\left[18225-18360\text{Fspecialcase}\ \alpha ^2+\right.}
\pmb{\left.\left.4624\text{Fspecialcase}^2\alpha ^4+3136\text{Gspecialcase}^2\alpha ^4\right]\right)}\)\\
\pmb{}\\
\ \\ \noindent\(\pmb{\text{lambdaminus}\text{:=}}
\pmb{\text{lambdaminus}=}
\pmb{\left.\left(224\alpha ^2\right)\right/}
\pmb{\left(495+12\text{Fspecialcase}\ \alpha ^2+\right.}\\
\pmb{\text{Sqrt}\left[18225-18360\text{Fspecialcase}\ \alpha ^2+\right.}
\pmb{\left.\left.4624\text{Fspecialcase}^2\alpha ^4+3136\text{Gspecialcase}^2\alpha ^4\right]\right)}\)
\
\\

 The choice of $\pm$ in the calculation of $\Lambda_\pm$ depends on the polarization state of the photon. We call these the $+$ and $-$ polarization states (corresponding to $\Lambda_+$ and $\Lambda_-$, respectively).

Using Equation (\ref{radial2}), we get that for outgoing geodesics in the $+$ polarization state, $dr/dt$ expanded as a series in $\alpha
$ is:

\ \\ \noindent\(\pmb{\text{radA1} = }
\pmb{\text{FullSimplify}[}
\pmb{\text{Series}[}\\
\pmb{(-D[u[t,r],t]D[u[t,r],r]- }
\pmb{r^2D[v[t,r],t]D[v[t,r],r] +}\\
\pmb{\text{Sqrt}[}
\pmb{(D[u[t,r],t]D[u[t,r],r]+ }
\pmb{\left.r^2D[v[t,r],t]D[v[t,r],r] \right)^2 +}\\
\pmb{\left(\frac{r^2}{\text{lambdaplus}}-D[u[t,r],r]^2-r^2D[v[t,r],r]^2\right)}\\
\pmb{\left.\left.\left.\left(\frac{r^2}{\text{lambdaplus}}+D[u[t,r],t]^2+r^2D[v[t,r],t]^2\right)\right]\right)\right/}\\
\pmb{\left(\frac{r^2}{\text{lambdaplus}} + D[u[t,r],t]^2+r^2D[v[t,r],t]^2\right),}\\
\pmb{\left.\{\alpha ,0,3\}],r^2\geq 0\right]}\)

\begin{eqnarray}\label{A.8}1-\frac{1}{45 r^2}14 \left(\left(u^{(0,1)}[t,r]+u^{(1,0)}[t,r]\right)^2+\right.
\left.r^2 \left(v^{(0,1)}[t,r]+v^{(1,0)}[t,r]\right)^2\right) \alpha ^2+O[\alpha ]^4
\end{eqnarray}

For  outgoing geodesics in the $-$ polarization state:

\ \\ \noindent\(\pmb{\text{radA2} = }
\pmb{\text{FullSimplify}[}
\pmb{\text{Series}[}\\
\pmb{(-D[u[t,r],t]D[u[t,r],r]- }
\pmb{r^2D[v[t,r],t]D[v[t,r],r] +}\\
\pmb{\text{Sqrt}[}
\pmb{(D[u[t,r],t]D[u[t,r],r]+ }
\pmb{\left.r^2D[v[t,r],t]D[v[t,r],r] \right)^2 +}\\
\pmb{\left(\frac{r^2}{\text{lambdaminus}}-D[u[t,r],r]^2-r^2D[v[t,r],r]^2\right)}\\
\pmb{\left.\left.\left.\left(\frac{r^2}{\text{lambdaminus}}+D[u[t,r],t]^2+r^2D[v[t,r],t]^2\right)\right]\right)\right/}\\
\pmb{\left(\frac{r^2}{\text{lambdaminus}} + D[u[t,r],t]^2+r^2D[v[t,r],t]^2\right),}\\
\pmb{\left.\{\alpha ,0,3\}],r^2\geq 0\right]}\)

\begin{eqnarray}\label{A.9}1-\frac{1}{45 r^2}8 \left(\left(u^{(0,1)}[t,r]+u^{(1,0)}[t,r]\right)^2+\right.
\left.r^2 \left(v^{(0,1)}[t,r]+v^{(1,0)}[t,r]\right)^2\right) \alpha ^2+O[\alpha ]^4
\end{eqnarray}

Outputs (\ref{A.8}) and (\ref{A.9}) imply Equation (\ref{out2o}).\\
\\
For the ingoing geodesics with $+$ polarization:

\ \\ \noindent\(\pmb{\text{radB1} = }
\pmb{\text{FullSimplify}[}
\pmb{\text{Series}[}\\
\pmb{(-D[u[t,r],t]D[u[t,r],r]- }
\pmb{r^2D[v[t,r],t]D[v[t,r],r] -}\\
\pmb{\text{Sqrt}[}
\pmb{(D[u[t,r],t]D[u[t,r],r]+ }
\pmb{\left.r^2D[v[t,r],t]D[v[t,r],r] \right)^2 +}\\
\pmb{\left(\frac{r^2}{\text{lambdaplus}}-D[u[t,r],r]^2-r^2D[v[t,r],r]^2\right)}\\
\pmb{\left.\left.\left.\left(\frac{r^2}{\text{lambdaplus}}+D[u[t,r],t]^2+r^2D[v[t,r],t]^2\right)\right]\right)\right/}\\
\pmb{\left(\frac{r^2}{\text{lambdaplus}} + D[u[t,r],t]^2+r^2D[v[t,r],t]^2\right),}\\
\pmb{\left.\{\alpha ,0,3\}],r^2\geq 0\right]}\)

\begin{eqnarray}\label{A.10}-1+\frac{1}{45 r^2}14 \left(\left(u^{(0,1)}[t,r]-u^{(1,0)}[t,r]\right)^2+\right.
\left.r^2 \left(v^{(0,1)}[t,r]-v^{(1,0)}[t,r]\right)^2\right) \alpha ^2+O[\alpha ]^4\nonumber\\
\end{eqnarray}

For the ingoing geodesics with $-$ polarization:

\ \\ \noindent\(\pmb{\text{radB2}= }
\pmb{\text{FullSimplify}[}
\pmb{\text{Series}[}\\
\pmb{(-D[u[t,r],t]D[u[t,r],r]- }
\pmb{r^2D[v[t,r],t]D[v[t,r],r] -}\\
\pmb{\text{Sqrt}[}
\pmb{(D[u[t,r],t]D[u[t,r],r]+ }
\pmb{\left.r^2D[v[t,r],t]D[v[t,r],r] \right)^2 +}\\
\pmb{\left(\frac{r^2}{\text{lambdaminus}}-D[u[t,r],r]^2-r^2D[v[t,r],r]^2\right)}\\
\pmb{\left.\left.\left.\left(\frac{r^2}{\text{lambdaminus}}+D[u[t,r],t]^2+r^2D[v[t,r],t]^2\right)\right]\right)\right/}\\
\pmb{\left(\frac{r^2}{\text{lambdaminus}} + D[u[t,r],t]^2+r^2D[v[t,r],t]^2\right),}\\
\pmb{\left.\{\alpha ,0,3\}],r^2\geq 0\right]}\)

\begin{eqnarray}\label{A.11}-1+\frac{1}{45 r^2}8 \left(\left(u^{(0,1)}[t,r]-u^{(1,0)}[t,r]\right)^2+\right.
\left.r^2 \left(v^{(0,1)}[t,r]-v^{(1,0)}[t,r]\right)^2\right) \alpha ^2+O[\alpha ]^4\nonumber\\
\end{eqnarray}

Outputs (\ref{A.10}) and (\ref{A.11}) imply Equation (\ref{in2o}).

\section{The field equations}
Our next task is to derive the nonlinear PDEs (\ref{system4}) which arise from the Euler-Heisenberg field equation (\ref{EHFE}) together with the cylindrical field ansatz presently under   consideration. Since we wish to express these PDEs in terms of the functions $v$ and $\hat u = u/r$, we will go back and re-enter the field tensor, so that \emph{
in the present section, $u[t,r]$ should be read as {``}$\ \hat u[t,r]${''}}. To this end, we will clear out and recompute the relevant variables:

\ \\ \noindent\(\pmb{\text{Clear}[\text{Afield},\text{faraday},\text{faraday1},\text{faraday2},\text{Fspecialcase},}\\
\pmb{\text{maxwell},\text{Gspecialcase},\text{maxwell1},}
\pmb{\text{maxwell2}]}\)

We re-enter the $A$-field as:

\ \\ \noindent\(\pmb{\text{Afield}\text{:=}\text{Afield}=\{0,0,r\ u[t,r],v[t,r]\}}\)
\
\\

Recalculating $F_{\mu\nu}$:

\ \\ \noindent\(\pmb{\text{faraday}\text{:=}}
\pmb{\text{faraday}=}
\pmb{\text{Table}[D[\text{Afield}[[j]],\text{coord}[[i]]]-}\\
\pmb{D[\text{Afield}[[i]],\text{coord}[[j]]],\{i,1,4\},\{j,1,4\}]}\)
\
\\

$F^\mu_{\phantom{\mu}\nu}$:

\ \\ \noindent\(\pmb{\text{faraday1}\text{:=}}
\pmb{\text{faraday1}=}
\pmb{\text{FullSimplify}[}
\pmb{\text{Table}[\text{Sum}[\text{cometric}[[i,k]]\text{faraday}[[k,j]],}\\
\pmb{\{k,1,4\}],\{i,1,4\},\{j,1,4\}]]}\)
\
\\

$F^{\mu\nu}$:

\ \\ \noindent\(\pmb{\text{faraday2}\text{:=}}
\pmb{\text{faraday2}=}
\pmb{\text{FullSimplify}[}
\pmb{\text{Table}[\text{Sum}[\text{cometric}[[k,j]]\text{faraday1}[[i,k]],}\\
\pmb{\{k,1,4\}],\{i,1,4\},\{j,1,4\}]]}\)
\
\\

 $F$:
 
\ \\ \noindent\(\pmb{\text{Fspecialcase}\text{:=}}
\pmb{\text{Fspecialcase}=}
\pmb{\text{Simplify}[\text{Sum}[\text{faraday}[[i,j]]\text{faraday2}[[i,j]],}\\
\pmb{\{i,1,4\},\{j,1,4\}]]}\)
\
\\

$F^*_{\mu\nu}$:

\ \\ \noindent\(\pmb{\text{maxwell}\text{:=}}
\pmb{\text{maxwell}=\text{FullSimplify}[}
\pmb{\text{Table}\left[\frac{1}{2}\text{Sqrt}[-\text{Det}[\text{metric}]]\right.}\\
\pmb{\text{Sum}[\text{Signature}[\{i,j,k,l\}]\text{faraday2}[[i,j]],}
\pmb{\{i,1,4\},\{j,1,4\}],\{k,1,4\},\{l,1,4\}],}
\pmb{\ r\geq 0]}\)
\
\\

$G$:

\ \\ \noindent\(\pmb{\text{Gspecialcase}\text{:=}}
\pmb{\text{Gspecialcase}=\text{Sum}[\text{maxwell}[[i,j]]\text{faraday2}[[i,j]],}\\
\pmb{\{i,1,4\},\{j,1,4\}]}\)
\
\\

${F^*}^\mu_{\phantom{\mu}\nu}$:

\ \\ \noindent\(\pmb{\text{maxwell1}\text{:=}}
\pmb{\text{maxwell1}=}
\pmb{\text{FullSimplify}[}\\
\pmb{\text{Table}[\text{Sum}[\text{cometric}[[i,k]]\text{maxwell}[[k,j]],}
\pmb{\{k,1,4\}],\{i,1,4\},\{j,1,4\}]]}\)
\
\\

${F^*}^{\mu\nu}$:

\ \\ \noindent\(\pmb{\text{maxwell2}\text{:=}}
\pmb{\text{maxwell2}=}
\pmb{\text{FullSimplify}[}\\
\pmb{\text{Table}[\text{Sum}[\text{cometric}[[k,j]]\text{maxwell1}[[i,k]],}
\pmb{\{k,1,4\}],\{i,1,4\},\{j,1,4\}]]}\)
\
\\

Next, we calculate and display the covariant derivative \(\nabla _{\mu }F^{\mu \nu }\) by entering:

\ \\ \noindent\(\pmb{\text{CDfaraday2}\text{:=}}
\pmb{\text{CDfaraday2}=}
\pmb{\text{FullSimplify}[}\\
\pmb{\text{Table}[\text{Sum}[D[\text{faraday2}[[s,o]],\text{coord}[[s]]],}
\pmb{\{s,1,4\}] + }
\pmb{\text{Sum}[\text{affine}[[i,i,j]]\text{faraday2}[[j,o]],}\\
\pmb{\{i,1,4\},\{j,1,4\}] + }
\pmb{\text{Sum}[\text{affine}[[o,k,l]]\text{faraday2}[[k,l]],}
\pmb{\ \{k,1,4\},\{l,1,4\}],\{o,1,4\}]]}\)\\
\pmb{}\\
\ \\ \noindent\(\pmb{\text{MatrixForm}[\text{CDfaraday2}]}\)

\begin{eqnarray}\left(
\begin{array}{c}
 0 \\
 0 \\
 \frac{-u[t,r]+r \left(u^{(0,1)}[t,r]+r \left(u^{(0,2)}[t,r]-u^{(2,0)}[t,r]\right)\right)}{r^3} \\
 \frac{v^{(0,1)}[t,r]}{r}+v^{(0,2)}[t,r]-v^{(2,0)}[t,r]
\end{array}
\right)
\end{eqnarray}

This matches Equation (\ref{lefthandside}). \\
\\
Now we must calculate \(\frac{\alpha ^2}{45}\left(4\nabla _{\mu }\left(F F^{\mu \nu }\right)+7\nabla _{\mu }\left(G F^{*\mu \nu }\right)\right)\), which we shall call ``righthandside."  
To this end, we define \(F F^{\mu \nu }\) and \(G F^{*\mu \nu }\) as  variables ``Ffaraday2" and ``Gmaxwell2" respectively,  then we take
their covariant derivatives ``CDFfaraday2" and ``CDGmaxwell2." Finally, we combine these so as to calculate  {``}righthandside.{''}

We define  \(F F^{\mu \nu }\) by entering:

\ \\ \noindent\(\pmb{\text{Ffaraday2}\text{:=}\text{Ffaraday2}=\text{Fspecialcase}\  \text{faraday2}}\)
\
\\

Taking the covariant derivative \(\nabla_\mu (F F^{\mu \nu })\):

\ \\ \noindent\(\pmb{\text{CDFfaraday2}\text{:=}}
\pmb{\text{CDFfaraday2}=}
\pmb{\text{FullSimplify}[}\\
\pmb{\text{Table}[\text{Sum}[D[\text{Ffaraday2}[[s,o]],\text{coord}[[s]]],}
\pmb{\{s,1,4\}] + }
\pmb{\text{Sum}[\text{affine}[[i,i,j]]\text{Ffaraday2}[[j,o]],}\\
\pmb{\{i,1,4\},\{j,1,4\}] + }
\pmb{\text{Sum}[\text{affine}[[o,k,l]]\text{Ffaraday2}[[k,l]],}
\pmb{\ \{k,1,4\},\{l,1,4\}],\{o,1,4\}]]}\)
\
\\

Defining \(G F^{*\mu \nu }\):

\ \\ \noindent\(\pmb{\text{Gmaxwell2}\text{:=}\text{Gmaxwell2}=\text{Gspecialcase} \ \text{maxwell2}}\)
\
\\

Taking the covariant derivative \(\nabla _{\mu }\left(G F^{*\mu \nu }\right)\):

\ \\ \noindent\(\pmb{\text{CDGmaxwell2}\text{:=}}
\pmb{\text{CDGmaxwell2}=}
\pmb{\text{FullSimplify}[}\\
\pmb{\text{Table}[\text{Sum}[D[\text{Gmaxwell2}[[s,o]],\text{coord}[[s]]],}
\pmb{\{s,1,4\}] + }
\pmb{\text{Sum}[\text{affine}[[i,i,j]]\text{Gmaxwell2}[[j,o]],}\\
\pmb{\{i,1,4\},\{j,1,4\}] + }
\pmb{\text{Sum}[\text{affine}[[o,k,l]]\text{Gmaxwell2}[[k,l]],}
\pmb{\ \{k,1,4\},\{l,1,4\}],\{o,1,4\}]]}\)
\
\\

We compute  \(\frac{\alpha ^2}{45}\left(4\nabla
_{\mu }\left(F F^{\mu \nu }\right)+7\nabla _{\mu }\left(G F^{*\mu \nu }\right)\right)\) by entering:

\ \\ \noindent\(\pmb{\text{righthandside}\text{:=}}
\pmb{\text{righthandside}=}
\pmb{\text{FullSimplify}\left[\frac{4\alpha ^2}{45}\text{CDFfaraday2} + \frac{7\alpha ^2}{45}\text{CDGmaxwell2}\right]}\)
\
\\

We note that Equation (\ref{righthandside}) checks out since:

\ \\ \noindent\(\pmb{\text{righthandside}[[1]]}\)
\
\\

0

and:

\ \\ \noindent\(\pmb{\text{righthandside}[[2]]}\)
\
\\

0

moreover $\mathcal{U}$ is given by:

\ \\ \noindent\(\pmb{\text{righthandside}[[3]]*\left(45 r^5/\left(4\alpha ^2\right)\right)}\)

\begin{eqnarray}-6 u[t,r]^3+2 r u[t,r]^2
\left(-3 u^{(0,1)}[t,r]+3 r u^{(0,2)}[t,r]-r u^{(2,0)}[t,r]\right)+  \nonumber\\
r^2 u[t,r] 
\left(6 u^{(0,1)}[t,r]^2-2 v^{(0,1)}[t,r]^2-5 v^{(1,0)}[t,r]^2-\right. \nonumber\\
2 u^{(1,0)}[t,r] \left(u^{(1,0)}[t,r]+4 r u^{(1,1)}[t,r]\right)+ \nonumber\\
3 r v^{(1,0)}[t,r] v^{(1,1)}[t,r]+ 
4 r u^{(0,1)}[t,r] \left(3 u^{(0,2)}[t,r]-u^{(2,0)}[t,r]\right)+ \nonumber\\
\left.r v^{(0,1)}[t,r] \left(4 v^{(0,2)}[t,r]-7 v^{(2,0)}[t,r]\right)\right)+ \nonumber\\
r^3 \left(6 u^{(0,1)}[t,r]^3+\right.
v^{(0,1)}[t,r] \left(-7 v^{(1,0)}[t,r] \left(u^{(1,0)}[t,r]+2 r u^{(1,1)}[t,r]\right)+\right. \nonumber\\
\left.3 r u^{(1,0)}[t,r] v^{(1,1)}[t,r]\right)+ 
r v^{(0,1)}[t,r]^2 \left(2 u^{(0,2)}[t,r]+5 u^{(2,0)}[t,r]\right)+ \nonumber\\
u^{(0,1)}[t,r]^2 \left(6 r u^{(0,2)}[t,r]-2 r u^{(2,0)}[t,r]\right)+ \nonumber\\
u^{(0,1)}[t,r] \left(2 v^{(0,1)}[t,r]^2-6 u^{(1,0)}[t,r]^2-\right. \nonumber\\
8 r u^{(1,0)}[t,r] u^{(1,1)}[t,r]+ 
v^{(1,0)}[t,r] \left(5 v^{(1,0)}[t,r]+3 r v^{(1,1)}[t,r]\right)+ \nonumber\\
\left.r v^{(0,1)}[t,r] \left(4 v^{(0,2)}[t,r]-7 v^{(2,0)}[t,r]\right)\right)+ \nonumber\\
r \left(-7 v^{(0,2)}[t,r] u^{(1,0)}[t,r] v^{(1,0)}[t,r]+\right. \nonumber\\
u^{(0,2)}[t,r] \left(-2 u^{(1,0)}[t,r]^2+5 v^{(1,0)}[t,r]^2\right)+ \nonumber\\
2 \left(3 u^{(1,0)}[t,r]^2+v^{(1,0)}[t,r]^2\right) u^{(2,0)}[t,r]+ \nonumber
\left.\left.4 u^{(1,0)}[t,r] v^{(1,0)}[t,r] v^{(2,0)}[t,r]\right)\right)\end{eqnarray}

and  $\mathcal{V}$ is given by:

\ \\ \noindent\(\pmb{\text{righthandside}[[4]]*\left(45r^3/\left(4\alpha ^2\right)\right)}\)

\begin{eqnarray}u[t,r]^2  
\left(-2 v^{(0,1)}[t,r]+2 r v^{(0,2)}[t,r]+5 r v^{(2,0)}[t,r]\right)+ \nonumber\\
r u[t,r] 
\left(v^{(1,0)}[t,r] \left(10 u^{(1,0)}[t,r]+3 r u^{(1,1)}[t,r]\right)-\right. \nonumber\\
14 r u^{(1,0)}[t,r] v^{(1,1)}[t,r]+ 
r v^{(0,1)}[t,r] \left(4 u^{(0,2)}[t,r]-7 u^{(2,0)}[t,r]\right)+ \nonumber\\
2 u^{(0,1)}[t,r] \left(2 v^{(0,1)}[t,r]+\right. 
\left.\left.2 r v^{(0,2)}[t,r]+5 r v^{(2,0)}[t,r]\right)\right)+ \nonumber\\
r^2 \left(2 v^{(0,1)}[t,r]^3- \right.  
v^{(0,1)}[t,r]\left(2 u^{(1,0)}[t,r]^2-3 r u^{(1,0)}[t,r] u^{(1,1)}[t,r]+\right. \nonumber\\
\left.2 v^{(1,0)}[t,r] \left(v^{(1,0)}[t,r]+4 r v^{(1,1)}[t,r]\right)\right)+ \nonumber\\
u^{(0,1)}[t,r] \left(3 r v^{(1,0)}[t,r] u^{(1,1)}[t,r]-\right. 
2 u^{(1,0)}[t,r] \left(2 v^{(1,0)}[t,r]+7 r v^{(1,1)}[t,r]\right)+ \nonumber\\
\left.r v^{(0,1)}[t,r] \left(4 u^{(0,2)}[t,r]-7 u^{(2,0)}[t,r]\right)\right)+ \nonumber\\
v^{(0,1)}[t,r]^2 \left(6 r v^{(0,2)}[t,r]-2 r v^{(2,0)}[t,r]\right)+ \nonumber\\
u^{(0,1)}[t,r]^2 
\left(6 v^{(0,1)}[t,r]+2 r v^{(0,2)}[t,r]+5 r v^{(2,0)}[t,r]\right)+ \nonumber\\
r \left(v^{(0,2)}[t,r] \left(5 u^{(1,0)}[t,r]^2-2 v^{(1,0)}[t,r]^2\right)+\right. \nonumber\\
u^{(1,0)}[t,r] v^{(1,0)}[t,r]  
\left(-7 u^{(0,2)}[t,r]+4 u^{(2,0)}[t,r]\right)+ \nonumber\\
\left.2 \left(u^{(1,0)}[t,r]^2+3 v^{(1,0)}[t,r]^2\right) v^{(2,0)}[t,r]\right)\end{eqnarray}

\section{The Maxwellian approximation}
Our next task is to check Equations  (\ref{inMEave}) - (\ref{outMCave}). Introducing constants $U$ and $V$ (not to be confused with the functions $\mathcal{U}$ and $\mathcal{V}$ given above), we have that an elliptically polarized ingoing cylindrical wave is given
by $u[t,r]$ and $v[t,r]$, where:

\ \\ \noindent\(\pmb{u[t,r]\text{:=}}
\pmb{\frac{U r}{\omega }}
\pmb{(\text{BesselJ}[1,\omega  r]\text{Cos}[\omega  t] - \text{BesselY}[1,\omega  r]\text{Sin}[\omega  t])}\)
\
\\

\ \\ \noindent\(\pmb{v[t,r]\text{:=}}
\pmb{\frac{V}{\omega }(\text{BesselJ}[0,\omega  r]\text{Cos}[\omega  t] - }
\pmb{\text{BesselY}[0,\omega  r]\text{Sin}[\omega  t])}\)
\
\\

Note that \emph{the function $u[t,r]$ now reverts  back to denoting $u\ (=A_\theta)$ again, instead of $\hat u$}. 

In order to calculate $dr/dt$ for ingoing radial null geodesics, the equations of Section \ref{5.1.3} require us to calculate $\frac{1}{r^2}\left((\partial_t u - \partial_r u)^2 + r^2 (\partial_t v -\partial_r v)^2\right)$. Thereby we enter the lines:

\ \\ \noindent\(\pmb{\text{FullSimplify}[}\\
\pmb{\frac{1}{r^2}\left((D[u[t,r],t]-D[u[t,r],r])^2 + \right.}
\pmb{\left.\left.r^2(D[v[t,r],t]-D[v[t,r],r])^2\right)\right]}\)

\begin{eqnarray}\label{A.15}U^2 ((\text{BesselJ}[0,r \omega ]+\text{BesselY}[1,r \omega ]) \text{Cos}[t \omega ]+\nonumber\\
(\text{BesselJ}[1,r \omega ]-\text{BesselY}[0,r \omega ]) \text{Sin}[t \omega ])^2+\nonumber\\
V^2 ((-\text{BesselJ}[1,r \omega ]+\text{BesselY}[0,r \omega ]) \text{Cos}[t \omega ]+\nonumber\\
(\text{BesselJ}[0,r \omega ]+\text{BesselY}[1,r \omega ]) \text{Sin}[t \omega ])^2\end{eqnarray}
This output (\ref{A.15}) confirms Equation (\ref{inMEave}).

For the outgoing radial null geodesics, we need $\frac{1}{r^2}\left((\partial_t u + \partial_r u)^2 + r^2 (\partial_t v +\partial_r v)^2\right)$. To this end, we enter:

\ \\ \noindent\(\pmb{\text{FullSimplify}[}\\
\pmb{\frac{1}{r^2}\left((D[u[t,r],t]+D[u[t,r],r])^2 + \right.}\pmb{\left.\left.r^2(D[v[t,r],t]+D[v[t,r],r])^2\right)\right]}\)

\begin{eqnarray}\label{A.16}U^2 ((-\text{BesselJ}[0,r \omega ]+\text{BesselY}[1,r \omega ]) \text{Cos}[t \omega ]+\nonumber\\
(\text{BesselJ}[1,r \omega ]+\text{BesselY}[0,r \omega ]) \text{Sin}[t \omega ])^2+\nonumber\\
V^2 ((\text{BesselJ}[1,r \omega ]+\text{BesselY}[0,r \omega ]) \text{Cos}[t \omega ]+\nonumber\\
(\text{BesselJ}[0,r \omega ]-\text{BesselY}[1,r \omega ]) \text{Sin}[t \omega ])^2\end{eqnarray}

This  confirms  Equation (\ref{outMEave}).
 \\
\\
Henceforth, we restrict to the case of circular polarization, whereby $U = V = A =$ constant.

\ \\ \noindent\(\pmb{U\text{:=}A}\)

\ \\ \noindent\(\pmb{V\text{:=}A}\)
\
\\

We check Equations (5.57) and (5.58) by re-entering:

\ \\ \noindent\(\pmb{\text{FullSimplify}[}\\
\pmb{\frac{1}{r^2}\left((D[u[t,r],t]-D[u[t,r],r])^2 + \right.}
\pmb{\left.\left.r^2(D[v[t,r],t]-D[v[t,r],r])^2\right)\right]}\)

\begin{eqnarray}\label{A.17}A^2 \left(-\frac{4}{\pi  r \omega }+\text{BesselJ}[0,r \omega ]^2+\text{BesselJ}[1,r \omega ]^2+\right.\nonumber\\
\left.\phantom{\frac{4}{\pi  r \omega }}\text{BesselY}[0,r \omega ]^2+\text{BesselY}[1,r \omega ]^2\right)\end{eqnarray}

\ \\ \noindent\(\pmb{\text{FullSimplify}[}\\
\pmb{\frac{1}{r^2}\left((D[u[t,r],t]+D[u[t,r],r])^2 + \right.}
\pmb{\left.\left.r^2(D[v[t,r],t]+D[v[t,r],r])^2\right)\right]}\)

\begin{eqnarray}\label{A.18}A^2 \left(\frac{4}{\pi  r \omega }+\text{BesselJ}[0,r \omega ]^2+\text{BesselJ}[1,r \omega ]^2+\right.\nonumber\\
\left.\phantom{\frac{4}{\pi  r \omega }}\text{BesselY}[0,r \omega ]^2+\text{BesselY}[1,r \omega ]^2\right)\end{eqnarray}
The outputs (\ref{A.17}) and (\ref{A.18}) confirm Equations (\ref{inMCave}) and (\ref{outMCave}).\\

\chapter{Some additional remarks}
\label{Appendix:Key2}

In Section \ref{proofmaintheorem}, we studied effective geometries corresponding to  branch I of our exact solution $B + 8\alpha^2(E^2B - B^3)/45 = k/r$ (\ref{bIMP}), with certain restrictions on the constant $E$. The purpose of this appendix is to take a further look at the effective geometries of our exact solution. We do so only  briefly.   This appendix is intended to be read as a continuation of Section \ref{proofmaintheorem}; we still assume that $k>0$, and the $(+)$ and $(-)$ polarization states have the same meaning as in Section \ref{proofmaintheorem}. 

\begin{thm}For $E=0$, in the effective geometry of branch I corresponding to the $(+)$ polarization state, the outgoing radial null  geodesics, issuing from any point where the effective geometry is defined,  are never trapped.
\end{thm}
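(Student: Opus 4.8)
The plan is to collapse the statement into a single scalar inequality in $B$ and then verify it over the entire range of $B$ that branch I sweeps out. Specializing the radial null geodesic equation (\ref{4.79}) to $E=0$, and recalling that outgoing geodesics correspond to the $+$ sign in front of the radical (as in (\ref{e583})), the cross term $-EB$ vanishes and one gets
\begin{eqnarray}
\frac{dr}{dt}\Big|_{\textrm{out}} = \frac{\sqrt{\frac{1}{\Lambda^2} - \frac{B^2}{\Lambda}}}{\frac{1}{\Lambda}} = \sqrt{1 - \Lambda B^2},
\end{eqnarray}
where $\Lambda=\Lambda_+$ is the value attached to the $(+)$ polarization state. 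An outgoing geodesic is trapped exactly when this coordinate velocity fails to be positive (a horizon sits where it vanishes), so the theorem reduces to the claim that $B^2 < 1/\Lambda_+$ at every radius where the branch I field is defined. I would first record that, with $E=0$, branch I has $B^2 \in (0, B_s^2]$ with $B_s^2 = 15/(8\alpha^2)$, and that $B$ is a strictly decreasing function of $r$ on $[r_s,\infty)$, as established in Section \ref{proofmaintheorem}.

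Next I would evaluate $1/\Lambda_+$ from (\ref{4.73}) at $E=0$, where $F=2B^2$ and $G=0$ make the radicand the perfect square $(135-136\alpha^2 B^2)^2$. The $(+)$ state carries $+|135-136\alpha^2 B^2|$, and the feature that distinguishes this $E=0$ case from the $E^2\geq 45/(34\alpha^2)$ regime of Section \ref{proofmaintheorem} is that the quantity inside the absolute value now changes sign \emph{within} the branch: it is nonnegative for $B^2 \le 135/(136\alpha^2)$ and negative beyond. Opening the absolute value therefore yields the two expressions
\begin{eqnarray}
\frac{1}{\Lambda_+} = \frac{45}{16\alpha^2} - \frac{1}{2}B^2 \quad \left(B^2 \le \tfrac{135}{136\alpha^2}\right), \qquad \frac{1}{\Lambda_+} = \frac{45}{28\alpha^2} + \frac{5}{7}B^2 \quad \left(B^2 \ge \tfrac{135}{136\alpha^2}\right),
\end{eqnarray}
which agree at the junction $B^2 = 135/(136\alpha^2)$ and are both manifestly positive on the relevant range, so that $\Lambda_+ > 0$ and the square root in the geodesic formula is genuine.

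The final step is to confirm $B^2 < 1/\Lambda_+$ in each regime. In the first regime the inequality $B^2 < \frac{45}{16\alpha^2} - \frac{1}{2}B^2$ simplifies to $B^2 < 15/(8\alpha^2) = B_s^2$, which holds automatically since $135/(136\alpha^2) < B_s^2$; in the second regime $B^2 < \frac{45}{28\alpha^2} + \frac{5}{7}B^2$ simplifies to $B^2 < 45/(8\alpha^2)$, which holds a fortiori because $B^2 \le B_s^2 = 15/(8\alpha^2) < 45/(8\alpha^2)$. Thus $\frac{dr}{dt}|_{\textrm{out}} = \sqrt{1 - \Lambda_+ B^2} > 0$ at every defined radius, so an outgoing radial geodesic always advances to larger $r$ and escapes to infinity; none is trapped. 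The only genuine subtlety, and the step I would handle most carefully, is the case split forced by the sign change of $135 - 136\alpha^2 B^2$ together with the verification that $\Lambda_+ > 0$ throughout, since it is precisely this sign behavior that separates the present $E=0$ analysis from the trapped case treated in Section \ref{proofmaintheorem}.
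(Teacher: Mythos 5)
Your proof is correct and takes essentially the same route as the paper's: it specializes (\ref{4.79}) at $E=0$ to get $dr/dt\big|_{\textrm{out}}=\sqrt{1-\Lambda B^2}$, splits branch I at the sign change of $135-136\alpha^2 B^2$ inside the absolute value of (\ref{4.73}), obtains the same two expressions $\frac{45}{16\alpha^2}-\frac{1}{2}B^2$ and $\frac{45}{28\alpha^2}+\frac{5}{7}B^2$ for $1/\Lambda_+$, and verifies $B^2<1/\Lambda_+$ in each regime. The only differences are cosmetic: the paper parametrizes the split by the radius $r_1$ rather than by the value of $B^2$, and you spell out the final inequality reductions (and the continuity of $1/\Lambda_+$ at the junction) more explicitly than the paper does.
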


\begin{proof}
For the $(+)$ polarization state, with $E=0$, Equation (\ref{4.73}) gives:
\begin{eqnarray}\label{B1}
\frac{1}{\Lambda} &=& \frac{495 + 24\alpha^2B^2  + \Big| 135 - 136\alpha^2 B^2\Big|}{224\alpha^2}.
\end{eqnarray}We note that there is a particular radius $r_1$ such that for  $r\geq r_1$, we have $B^2 \leq 135/(136\alpha^2)$, and  for $r_s\leq r <r_1$, we have $135/(136\alpha^2)<B^2\leq B_s^2$. Since $E=0$ in the present case,  we have $B_s=15/(8\alpha^2)$.

So for $r\geq r_1$ Equation (\ref{B1}) gives:
\begin{eqnarray}
\frac{1}{\Lambda}\Big|_{r\geq r_1} &=&\frac{45}{16\alpha^2} - \frac{1}{2}B^2.
\end{eqnarray}Note that $1/\Lambda|_{r\geq r_1}>0$.

Using  Equation (\ref{4.79}), we get that:
\begin{eqnarray}
\frac{dr}{dt}\Big|_{\textrm{out},\ r\geq r_1}&=&\frac{-EB+\sqrt{\left(\frac{1}{\Lambda}\Big|_{r\geq r_1}\right)^2 -\left(\frac{1}{\Lambda}\Big|_{r\geq r_1}\right) \left(B^2-E^2\right)}}{E^2 +\frac{1}{\Lambda}\Big|_{r\geq r_1} }\nonumber\\
\nonumber\\
&=&\sqrt{1-B^2 \Lambda\Big|_{r\geq r_1}}.
\end{eqnarray}

For $r\geq r_1$, the value of $B^2$ never exceeds $135/(136\alpha^2)\ (<B_s^2)$, and we observe that this fact implies  $dr/dt\Big|_{\textrm{out},\ r\geq r_1}=\sqrt{1-B^2  \Lambda\Big|_{r\geq r_1}}>0$.

In the region where $r_s\leq r <r_1$, we have:
\begin{eqnarray}
\frac{1}{\Lambda}\Big|_{r_s\leq r<r_1} &=& \frac{45}{28\alpha^2}+\frac{5}{7}B^2.
\end{eqnarray} So:
\begin{eqnarray}
\frac{dr}{dt}\Big|_{\textrm{out},\ r_s\leq r<r_1}&=&\sqrt{1-B^2 \Lambda\Big|_{r_s\leq r<r_1}}.
\end{eqnarray}For $r_s\leq r< r_1$, the value of $B^2$ never exceeds $B_s^2$, and consequently $dr/dt\Big|_{\textrm{out},\ r_s<r\leq r_1}=\sqrt{1-B^2  \Lambda\Big|_{r_s<r\leq r_1}}>0$.

\end{proof}

\begin{thm}
For $E^2\leq \frac{45}{34\alpha^2}$, in the effective geometry of branch I  corresponding to the $(-)$ polarization state, either  $dr/dt_{\textrm{out},\ r=r_s}$ is zero (when $E>0$) or  $dr/dt_{\textrm{in,}\ r=r_s}$ is zero (if $E<0$). If $E=0$, then both   $dr/dt_{\textrm{out},\ r=r_s}$ and $dr/dt_{\textrm{in,}\ r=r_s}$ are zero.\end{thm}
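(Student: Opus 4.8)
The plan is to show that at the single radius $r=r_s$ the $(-)$ polarization reproduces exactly the computation from the proof of the first theorem of Section \ref{proofmaintheorem}, so that the same conclusion follows essentially verbatim. The essential point is that the hypothesis $E^2\le\frac{45}{34\alpha^2}$ is precisely the condition that selects the relevant branch of the absolute value in Equation (\ref{4.73}) at $r=r_s$.

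First I would fix branch I, recording that $B=B_s=\sqrt{E^2/3+15/(8\alpha^2)}>0$ at $r=r_s$ (positivity coming from $k>0$). Then I would examine the quantity inside the absolute value of Equation (\ref{4.73}): using $B_s^2-E^2=\frac{15}{8\alpha^2}-\frac{2E^2}{3}$, one finds $135-136\alpha^2(B_s^2-E^2)=-120+\frac{272}{3}\alpha^2E^2$, which is $\le0$ exactly when $E^2\le\frac{45}{34\alpha^2}$. This is the step I expect to carry the real content of the argument; everything after it is mechanical.

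With $135-136\alpha^2(B^2-E^2)\le0$ at $r=r_s$, the $-$ sign choice in Equation (\ref{4.73}) flips the absolute value, and the formula collapses to $\frac{1}{\Lambda}=\frac{45}{16\alpha^2}-\frac12(B^2-E^2)$, which is identical to the expression governing the $(+)$ state in the first theorem. Evaluating at $B^2=B_s^2$ gives the clean identity $\frac{1}{\Lambda}\big|_{r=r_s}=B_s^2$; hence the radicand in Equation (\ref{4.79}) becomes $\frac{1}{\Lambda^2}-\frac{1}{\Lambda}(B^2-E^2)=B_s^2E^2$ and the square root equals $B_s|E|$.

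Finally I would substitute into Equation (\ref{4.79}) to obtain $\frac{dr}{dt}\big|_{\textrm{out},\,r=r_s}=\frac{B_s(|E|-E)}{E^2+B_s^2}$ and $\frac{dr}{dt}\big|_{\textrm{in},\,r=r_s}=-\frac{B_s(|E|+E)}{E^2+B_s^2}$, and then conclude by cases on the sign of $E$: if $E>0$ then $|E|-E=0$ so the outgoing velocity vanishes; if $E<0$ then $|E|+E=0$ so the ingoing velocity vanishes; and if $E=0$ both numerators vanish, yielding the last assertion. There is no genuine obstacle here; the one place demanding care is matching the hypothesis to the active branch of $|135-136\alpha^2(B^2-E^2)|$ and thereby recognizing that, at $r=r_s$, the $(-)$ and $(+)$ formulas for $1/\Lambda$ coincide.
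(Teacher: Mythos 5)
Your proposal is correct and follows essentially the same route as the paper: establish $\frac{1}{\Lambda}\big|_{r=r_s}=B_s^2$ for the $(-)$ state under the hypothesis $E^2\leq\frac{45}{34\alpha^2}$, substitute into Equation (\ref{4.79}) to get $\frac{dr}{dt}\big|_{\textrm{out}}=\frac{B_s(|E|-E)}{E^2+B_s^2}$ and $\frac{dr}{dt}\big|_{\textrm{in}}=-\frac{B_s(|E|+E)}{E^2+B_s^2}$, and conclude by cases on the sign of $E$. The only difference is that you spell out the sign analysis of $135-136\alpha^2(B_s^2-E^2)$ inside the absolute value of Equation (\ref{4.73}), a step the paper leaves implicit when asserting its Equation (\ref{B.9}).
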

\begin{proof}

Since  $E^2\leq \frac{45}{34\alpha^2}$, we have that  for the $(-)$ polarization state,   at $r=r_s$:
\begin{eqnarray}\label{B.9}\frac{1}{\Lambda}\Big|_{r=r_s} &=&B_s^2.
\end{eqnarray}
Using  (\ref{4.79}) and (\ref{B.9}), we get:
\begin{eqnarray}\label{5.88}
\frac{dr}{dt}\Big|_\textrm{out, $r=r_s$}&=& \frac{B_s\left(|E| - E\right)}{E^2 + B_s^2},
\end{eqnarray}and:
\begin{eqnarray}
\frac{dr}{dt}\Big|_{\textrm{in, $r=r_s$}} = -\frac{B_s\left(|E| + E\right)}{E^2 + B_s^2}.
\end{eqnarray}
If $E>0$, then   $dr/dt\Big|_\textrm{out, $r=r_s$}=0$, and $dr/dt\Big|_\textrm{in, $r=r_s$}<0$. 
  On the other hand, if $E<0$, then $dr/dt\Big|_\textrm{out, $r=r_s$}>0$, and $dr/dt\Big|_\textrm{in, $r=r_s$}=0$. If $E=0$, then $dr/dt\Big|_\textrm{out, $r=r_s$}=dr/dt\Big|_\textrm{in, $r=r_s$}=0$.
\end{proof}

Now let us say something about  branch II. In this branch, $B$ is an increasing function of $r$; it starts with the value of $B_s$ at $r=r_s$, and increases towards the asymptotic value $\sqrt{3}B_s$ at $r=\infty$.

\begin{thm}For $E\neq0$, in branch II, at infinity, the radial null geodesics propagate in only one direction; towards the axis if $E>0$, away from the axis if $E<0$.
\end{thm}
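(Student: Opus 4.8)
The plan is to evaluate the radial null-geodesic formula (\ref{4.79}) in the limit $r\to\infty$ along branch II and to exhibit a degeneration of the effective light cone. First I would assemble the asymptotic data. Along branch II, $B$ increases monotonically from $B_s$ to $\sqrt{3}\,B_s$, so $B^2\to 3B_s^2$; since $B_s^2=E^2/3+15/(8\alpha^2)$ this gives $B^2\to E^2+45/(8\alpha^2)$, and hence the quantity $P:=B^2-E^2$ entering (\ref{4.73}) and (\ref{4.79}) satisfies $P\to 45/(8\alpha^2)$ from below. In particular $P>135/(136\alpha^2)$ for all sufficiently large $r$, so that $|135-136\alpha^2P|=136\alpha^2P-135$ throughout a neighborhood of infinity.

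Next I would read off $1/\Lambda$ for each polarization from (\ref{4.73}). For the $(+)$ state (which carries the $+$ sign on the absolute value) this yields $1/\Lambda_+=\frac{45}{28\alpha^2}+\frac{5}{7}P$, and evaluating at $P=45/(8\alpha^2)$ gives exactly $1/\Lambda_+\to 45/(8\alpha^2)=P$. The entire argument hinges on this exact coincidence $1/\Lambda_+\to P$: the radicand of (\ref{4.79}) is $\frac{1}{\Lambda^2}-\frac{1}{\Lambda}P=\frac{1}{\Lambda}\bigl(\frac{1}{\Lambda}-P\bigr)$, and since $\frac{1}{\Lambda_+}-P=\frac{45}{28\alpha^2}-\frac{2}{7}P>0$ for $P<45/(8\alpha^2)$ while vanishing in the limit, the $(+)$ geodesics are genuinely real for all large finite $r$, with the square root closing down to zero only as $r\to\infty$.

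With the radical gone, (\ref{4.79}) shows that the ingoing and outgoing branches both converge to the single value $-EB/(E^2+1/\Lambda_+)$. Substituting $1/\Lambda_+\to P=B^2-E^2$ collapses the denominator to $B^2$, so $dr/dt\to -E/B$. Because $B\to\sqrt{3}\,B_s>0$, this common limit is nonzero and has the sign of $-E$: the effective light cone has degenerated to a single ray pointing towards the axis ($dr/dt<0$) when $E>0$, and away from it ($dr/dt>0$) when $E<0$, which is exactly the claim. To complete the picture I would dispatch the $(-)$ state by the same computation: there $1/\Lambda_-=\frac{45}{16\alpha^2}-\frac{1}{2}P\to 0^+$ while $\frac{1}{\Lambda_-}-P\to -45/(8\alpha^2)<0$, so its radicand is negative near infinity and the $(-)$ photons admit no real radial null geodesics there; they therefore contribute nothing to the count.

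The hard part will be the degeneracy itself rather than the algebra. Because the two null directions merge, one cannot follow a fixed $\pm$ branch through the limit, and the conclusion rests on two points that must be checked with care: that the merger is \emph{exact}, i.e. that the arithmetic cancellation $1/\Lambda_+=P$ holds at infinity so that the radical truly vanishes rather than merely remaining bounded; and that the surviving common direction $-E/B$ is bounded away from zero, so that its sign, and hence the ``one direction'' of the statement, is unambiguous. I would also be careful to verify the positivity of the radicand for all large finite $r$, so that the limiting statement is the genuine endpoint of a family of real geodesics.
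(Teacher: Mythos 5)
Your proposal is correct, and for the $(+)$ polarization state it is essentially the paper's own argument: evaluate (\ref{4.73}) and (\ref{4.79}) as $B^2\to 3B_s^2$, i.e.\ $P:=B^2-E^2\to 45/(8\alpha^2)$, note the exact coincidence $1/\Lambda_+\to P$ so that the radical collapses, and read off the degenerate direction $dr/dt\to -E/B$, whose sign is that of $-E$; your sign conventions for the two states also match the paper's. Where you genuinely differ is the $(-)$ state. The paper evaluates the \emph{same} formula at its limit $1/\Lambda_-\to 0$, where the radicand again vanishes, and reports the degenerate direction $dr/dt=-B/E$ --- so in the paper's reading both polarizations propagate toward the axis when $E>0$, at reciprocal speeds with one of them superluminal (this is the remark following the displayed computation, tied to Conjecture \ref{conjecture1}). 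You instead observe that for all sufficiently large \emph{finite} $r$ the $(-)$ radicand $\frac{1}{\Lambda_-}\bigl(\frac{1}{\Lambda_-}-P\bigr)$ is strictly negative, so the $(-)$ effective metric admits no real radial null directions there, and you exclude that polarization from the statement. Both computations are right, and they are two faces of the same degeneration: your inequality shows the $(-)$ light cone in the $t$--$r$ plane has already closed up before infinity (the radicand tends to $0$ from below), while the paper's $-B/E$ is the double root of the null quadratic evaluated formally at the ideal limit point. Your route is the more careful one about finite-$r$ behavior and it secures the literal statement (every radial null geodesic that actually exists near infinity moves in the single direction of sign $-E$); the paper's route buys the additional physical content --- that \emph{both} polarization states are assigned an inward (or outward) direction at infinity, with reciprocal coordinate speeds --- which your dismissal of the $(-)$ state does not recover.
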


\begin{proof}
In the limit $r\rightarrow \infty$, we have $B=\sqrt{3} B_s$, and Equation  (\ref{4.73}) gives: 
\begin{eqnarray}
\frac{1}{\Lambda}\Big|_{r=\infty}&=&\frac{630 \mp 630}{224\alpha^2}\nonumber\\
\nonumber\\
&=&0 \textrm{ or } \frac{45}{8\alpha^2} \textrm{ (depending on the polarization)}.
\end{eqnarray}

Equation (\ref{4.79}) gives, for the radial null geodesics (both ``ingoing" and ``outgoing"):
\begin{eqnarray}\label{4.81}
\frac{dr}{dt}\Big|_{r=\infty}&=& \frac{-E\sqrt{E^2+\frac{45}{8\alpha^2}} \mp\sqrt{\frac{1}{\Lambda^2} -\frac{45}{\Lambda\alpha^2}}}{E^2 + \frac{1}{\Lambda}}\nonumber\\
\nonumber\\
&=&\frac{-E\sqrt{E^2+\frac{45}{8\alpha^2}}}{E^2 + \frac{1}{\Lambda}}\nonumber\\
\nonumber\\
&=&-\frac{\sqrt{E^2 + \frac{45}{8\alpha^2}}}{E}\textrm{ or }  -\frac{E}{\sqrt{E^2 + \frac{45}{8\alpha^2}}} \textrm{ (depending on the polarization)}.
\end{eqnarray} Both polarization states  travel inwards towards  the axis if $E>0$; outwards to $r=\infty$ if $E<0$. Note that this is consistent with  Conjecture \ref{conjecture1}. We also note that, at infinity, the coordinate speeds of the two polarization states are reciprocal to one another and  one polarization state propagates superluminally. 

\end{proof}

In branch III, the absolute value $|B|$ is an decreasing function of $r$;  at   $r=0$ we have $|B|\rightarrow\infty$, and as $r\rightarrow \infty$, we have $|B|\rightarrow \sqrt{3}B_s$. At $r=\infty$, the effective geometry corresponding to branch III is  similar the effective geometry of branch II.

\newpage

\end{document}